\lstdefinelanguage{Maxima}{
keywords={diff,ev,tex,solve,part,assume,sqrt,integrate,abs,inf,exp},
sensitive=true,
comment=[n][\itshape]{/*}{*/}
}
\newtheorem{Remark}{Remark}
\newtheorem{Definition}{Definition}
\newtheorem{Proposition}{Proposition}
\newtheorem{Corollary}{Corollary}
\newtheorem{Example}{Example}
\def\sp{\mathrm{sp}}
\def\diag{\mathrm{diag}}
\def\YF{\mathrm{YF}}
\def\SEF{\mathrm{SEF}}
\def\EF{\mathrm{EF}}
\def\matrixtwotwo#1#2#3#4{\left[\begin{array}{ll}#1 & #2 \cr #3 & #4\end{array}\right]}
\def\Aff{\mathrm{Aff}}
\def\GL{\mathrm{GL}}
\def\std{\mathrm{std}}
\def\calN{\mathcal{N}}
\def\JS{\mathrm{JS}}
\def\calP{\mathcal{P}}
\def\Shannon{\mathrm{Shannon}}	
\def\erf{\mathrm{erf}}	
\def\bd{\mathrm{bd}}
\def\matrixtwotwo#1#2#3#4{\left[\begin{array}{cc}#1 & #2\cr #3 & #4\end{array}\right]}
\def\OC{\mathrm{OC}}
\def\bartheta{{\bar\theta}}
\def\bareta{{\bar\eta}}
\def\supp{\mathrm{supp}}
\def\cl{\mathrm{cl}}
\def\Inner#1#2{\left\langle #1,#2\right\rangle}
\def\inner#1#2{\langle #1,#2\rangle}
 \def\calE{\mathcal{E}}
 \def\dP{\mathrm{d}P}
 \def\dQ{\mathrm{d}Q}
 \def\dmu{\mathrm{d}\mu}
\def\Burg{\mathrm{Burg}}
\def\KL{\mathrm{KL}}
\def\KL{\mathrm{KL}}
\def\alphastar{{\alpha^*}}
\def\Bi{\mathrm{Bi}}
\def\Mtr#1{\mathrm{tr}(#1)}
\def\calX{\mathcal{X}}
\def\bbR{\mathbb{R}}
\def\tr{\mathrm{tr}}
\def\st{\ :\ }
\def\Mdet#1{{\mathrm{det}\left(#1\right)}}
\def\eqdef{:=}
\def\calM{\mathcal{M}}
\def\calA{\mathcal{A}}
\def\ri{\mathrm{ri}}
\def\dom{\mathrm{dom}}
\def\ceil#1{\lceil{#1\rceil}}
\title{Revisiting Chernoff Information with Likelihood Ratio Exponential Families}
\author{Frank Nielsen\thanks{ORCID:0000-0001-5728-0726}\\
Sony Computer Science Laboratories Inc\\
Tokyo, Japan}  
\date{}
\begin{document}
\maketitle

\begin{abstract}
The Chernoff information between two probability measures is a statistical divergence measuring their deviation defined as their maximally skewed Bhattacharyya distance.
Although the Chernoff information was originally introduced for bounding the Bayes error in statistical hypothesis testing, the divergence found many other applications due to its empirical robustness property found in applications ranging from information fusion to quantum information.
From the viewpoint of information theory, the Chernoff information can also be interpreted as a minmax symmetrization of 
the Kullback--Leibler divergence.
In this paper, we first revisit the Chernoff information between two densities of a measurable Lebesgue space by considering the exponential families induced by their geometric mixtures: The so-called likelihood ratio exponential families.
Second, we show how to (i) solve exactly the Chernoff information between any two univariate Gaussian distributions or get a closed-form formula using symbolic computing,
(ii) report a closed-form formula of the Chernoff information of centered Gaussians with scaled covariance matrices
 and (iii) use a fast numerical scheme to approximate the Chernoff information between any two multivariate Gaussian distributions. 
\end{abstract}

\noindent Keywords: Chernoff information; Chernoff--Bregman divergence; Chernoff information distribution; Kullback-Leibler divergence;  Bhattacharyya distance; 
R\'enyi $\alpha$-divergences; regular/steep exponential family; Gaussian measures; exponential arc; information geometry; $L^1$ measurable space; Bregman divergence; affine group.

\section{Introduction}   

\subsection{Chernoff information: Definition and related statistical divergences}

Let $(\calX,\calA)$ denote a measurable space~\cite{keener2010theoretical} with sample space $\calX$ and finite $\sigma$-algebra $\calA$ of events.
A measure $P$ is absolutely continuous with respect to  another measure $Q$ if $P(A)=0$ whenever $Q(A)=0$: 
$P$ is said dominated by $Q$ and written notationally for short as $P\ll Q$. 
We shall write $P\not\ll Q$ when $P$ is not dominated by $Q$.
When $P\ll Q$, we denote by $\frac{\dP}{\dQ}$ the Radon-Nikodym density~\cite{keener2010theoretical} of $P$ with respect to $Q$. 

Let us introduce some statistical distances like the Kullback-Leibler divergence, the Bhattacharyya distance or the Hellinger divergence  which have been proven useful is characterizing or bounding the probability of error in 
Bayesian statistical hypothesis testing~\cite{Kailath-1967,Torgersen-1991,nielsen2014generalized}.

The Kullback-Leibler divergence~\cite{cover1999elements} (KLD) between two probability measures (PMs) $P$ and $Q$ is defined as
$$
D_\KL[P:Q]=\left\{\begin{array}{ll}
\int_\calX \log\left(\frac{\dP}{\dQ}\right)\, \dP, & \mbox{if $P\ll Q$}\\
+\infty & \mbox{if $P\not\ll Q$}
\end{array}\right.
$$
Two PMs $P$ and $Q$ are mutually singular when there exists an event $A\in\calA$ such that $P(A)=0$ and $Q(\calX\backslash A)=0$.
Mutually singular measures $P$ and $Q$ are notationally written as $P\perp Q$.
Let $P$ and $Q$ be two non-singular probability measures on $(\calX,\calA)$ dominated by a common $\sigma$-finite measure $\mu$, and denote by
 $p=\frac{\dP}{\dmu}$ and $q=\frac{\dQ}{\dmu}$ their Radon-Nikodym densities with respect to $\mu$.
Then the KLD between $P$ and $Q$ can be calculated equivalently by the KLD between their densities as follows:
\begin{equation}\label{eq:kld}
D_\KL[P:Q]=D_\KL[p:q]=\int_\calX p\log\left(\frac{p}{q}\right)\, \dmu.
\end{equation}
It can be shown that $D_\KL[p:q]$ is independent of the chosen dominating measure $\mu$~\cite{Torgersen-1991},
and thus when $P,Q\ll \mu$, we write for short $D_\KL[P:Q]=D_\KL[p:q]$.
Although the dominating measure $\mu$ can be set to $\mu=\frac{P+Q}{2}$ in general, it is either often chosen as $\mu_L$ the Lebesgue measure for continuous sample spaces $\bbR^d$ (with  the $\sigma$-algebra $\calA=\mathcal{B}(\bbR^d)$ of Borel sets) or as the counting measure $\mu_{\#}$ for discrete sample spaces (with the $\sigma$-algebra $\calA$ of power sets).
The KLD is not a metric distance because it is asymmetric and does not satisfy the triangle inequality.
 
Let $\supp(\mu)=\cl\{A\in\calA \st \mu(A)\not=0\}$ denote the support of a Radon positive measure~\cite{keener2010theoretical} $\mu$ where $\cl$ denotes the topological closure operation.
Notice that $D_\KL[p:q]=+\infty$ when the definite integral of Eq.~\ref{eq:kld} divergences (e.g., the KLD between a standard Cauchy distribution and a standard normal distribution is $+\infty$ but the KLD between a standard normal distribution and a standard Cauchy distributions is finite), and $D_\KL[P:Q]=\infty$ when the probability measures have disjoint supports ($P\perp Q$).
Thus when the supports of $P$ and $Q$ are distinct but not nested, both the forward KLD $D_\KL[P:Q]$ and the reverse KLD $D_\KL[Q:P]$ are infinite.

The Chernoff information~\cite{chernoff1952measure}, also called Chernoff information number~\cite{csiszar1972class,Torgersen-1991} or the Chernoff divergence~\cite{audenaert2007discriminating,audenaert2008asymptotic}, is the following {\em symmetric} measure of dissimilarity between any two comparable probability measures $P$ and $Q$ dominated by $\mu$:
$$
D_C[P,Q]\eqdef \max_{\alpha\in (0,1)} -\log  \rho_\alpha[P:Q] = D_C[Q,P],
$$
where
\begin{equation}\label{eq:GBC}
\rho_\alpha[P:Q]\eqdef \int p^\alpha q^{1-\alpha}\dmu=\rho_{1-\alpha}[Q:P],
\end{equation}
is the $\alpha$-skewed Bhattacharyya affinity coefficient~\cite{bhattacharyya1943measure} (a coefficient measuring the similarity of two densities).
The $\alpha$-skewed Bhattacharyya coefficients are always upper bounded by $1$  and  are  strictly greater than zero for non-empty intersecting support (non-singular PMs):
$$
0<\rho_\alpha[P:Q]\leq 1.
$$
A proof can be obtained by applying H\"older's inequality (see also Remark~\ref{rmk:BhatCoeff} for an alternative proof).

Since the affinity coefficient $\rho_\alpha[P:Q]$ does not depend on the underlying dominating measure $\mu$~\cite{Torgersen-1991}, we shall write $D_C[p,q]$ instead of $D_C[P,Q]$ in the reminder. 

Let $D_{B,\alpha}[p:q]$ denote the $\alpha$-skewed Bhattacharyya distance~\cite{bhattacharyya1943measure,nielsen2011burbea}:
$$
D_{B,\alpha}[p:q]\eqdef -\log \rho_\alpha[P:Q]= D_{B,1-\alpha}[q:p],
$$
The $\alpha$-skewed Bhattacharyya distances are not  metric distances since they can be asymmetric and do not satisfy the triangle inequality even when $\alpha=\frac{1}{2}$.

Thus the Chernoff information is defined as the maximal skewed Bhattacharyya distance:
\begin{equation}\label{eq:defCI}
D_C[p,q]=\max_{\alpha\in (0,1)} D_{B,\alpha}[p:q].
\end{equation}

Gr\"unwald~\cite{grunwald2007minimum,ITEF-2007} called the skewed Bhattacharyya coefficients and distances the $\alpha$-R\'enyi affinity and the unnormalized R\'enyi divergence, respectively (see \S 19.6 of~\cite{grunwald2007minimum}) since the R\'enyi divergence~\cite{RenyiDiv-2014} is defined by
$$
D_{R,\alpha}[P:Q]=\frac{1}{\alpha-1}\log \int p^\alpha q^{1-\alpha}\dmu = \frac{1}{1-\alpha} D_{B,\alpha}[P:Q].
$$
Thus $D_{B,\alpha}[P:Q]=(1-\alpha)\, D_{R,\alpha}[P:Q]$ can be interpreted as the unnormalized R\'enyi divergence in~\cite{grunwald2007minimum}.
However, let us notice that the R\'enyi $\alpha$-divergences are defined in general for a wider range 
 $\alpha\in [0,\infty]\backslash\{1\}$ with 
$\lim_{\alpha\rightarrow 1} D_{R,\alpha}[P:Q]=D_\KL[P:Q]$ but the skew Bhattacharyya distances are defined for $\alpha\in(0,1)$ in general.

The Chernoff information was originally introduced to upper bound the probability error of misclassification in Bayesian binary hypothesis testing~\cite{chernoff1952measure} where 
the optimal skewing parameter $\alphastar$ such that $D_C[p,q]=D_{B,\alphastar}[p:q]$ is referred to in the statistical literature as the Chernoff error exponent~\cite{cover1999elements,borade2006projection,boyer2017error} or Chernoff exponent~\cite{d2004distributed,yu2022comments} for short.
The Chernoff information has found many other fruitful applications beyond its original statistical hypothesis testing scope like in computer vision~\cite{konishi1999fundamental}, information fusion~\cite{julier2006empirical}, time-series clustering~\cite{kakizawa1998discrimination}, and more generally in machine learning~\cite{dutta2020there} (just to cite a few use cases).
It has been observed empirically that the Chernoff information exhibits superior robustness~\cite{agarwal2019limits} 
compared to the Kullback--Leibler divergence in
distributed fusion of Gaussian Mixtures Models~\cite{julier2006empirical} (GMMs) or in target detection in radar sensor network~\cite{maherin2014radar}. 
The Chernoff information has also been used for analysis deepfake detection performance of Generative Adversarial Networks~\cite{agarwal2019limits} (GANs).

\begin{Remark}\label{rmk:BhatCoeff}
Let $f_\alpha(u)=u^\alpha$ for $\alpha\in\bbR$. 
The functions $f_\alpha(u)$ are convex for $\alpha\in \bbR\backslash [0,1]$ and concave for $\alpha\in [0,1]$. 
Thus we can define the $f$-divergences~\cite{csiszar1963information,ali1966general} $I_{f_\alpha}[p:q]=\int p f_\alpha(q/p)\dmu=\int p^\alpha q^{1-\alpha}\dmu$ for $\alpha\in \bbR\backslash [0,1]$ and $I_{-f_\alpha}[p:q]=-\int pf_\alpha(q/p)\dmu=-\int p^\alpha q^{1-\alpha}\dmu$ for $\alpha\in (0,1)$  (or equivalently take the convex generator $h_\alpha(u)=-u^\alpha$ for $\alpha\in(0,1)$). 
Notice that the conjugate $f$-divergence is obtained for the generator 
$f_\alpha^*(u)=uf_\alpha(1/u) =u^{1-\alpha}$: $I_{f_\alpha}[q:p]=I_{f_\alpha^*}[p:q]$. 
By Jensen's inequality, we have that the $f$-divergences are lower bounded by $f(1)$.
Thus $I_{h_\alpha^*}[p:q]\geq h_\alpha(1)=-1$.
Since $f$-divergences are upper bounded by $f(0)+f^*(0)$, we have  that $I_{h_\alpha^*}[p:q]<0$ for $\alpha\in (0,1)$.
This gives another proof that the Bhattacharyya coefficient $\rho_\alpha[p:q]=-I_{h_\alpha}[p:q]$ is bounded between $0$ and $1$ since the $I_{h_\alpha}$ divergence is bounded between $-1$ and $0$.
Moreover, Ali and Silvey~\cite{ali1966general} further defined the $(f,g)$-divergences as $I_{f,g}[p:q]=g(I_f[p:q])$ for a strictly monotonically increasing function $g(v)$.
Letting $g(v)=-\log(-v)$ (with $g'(v)=-\frac{1}{v}<0$ when $v\in(0,1)$), we get that the $(h_\alpha,g)$-divergences are the Bhattacharyya distances for $\alpha\in (0,1)$.
However, the Chernoff information is not a $f$-divergence despite the fact that  Bhattacharyya distances are Ali-Silvey $(f,g)$-divergences because of the maximization criterion~\cite{ali1966general} of Eq.~\ref{eq:defCI}. 
\end{Remark}

\subsection{Prior work and contributions}
The Chernoff information between any two categorical distributions (multinomial distributions with one trial also called ``multinoulli'' since they are extensions of the Bernoulli distributions) has been very well-studied and described in many reference textbooks of information theory or statistics (e.g., see~Sec.~12.9 of~\cite{cover1999elements}). 
The Chernoff information between two probability distributions of an exponential family was considered from the viewpoint of information geometry in~\cite{CI-2013}, and in the general case from the viewpoint of unnormalized R\'enyi divergences in~\cite{RenyiDiv-2014} (Theorem~32).
By replacing the weighted geometric mean in the definition of the Bhattacharyya coefficient $\rho_\alpha$ of Eq.~\ref{eq:GBC} by an arbitrary weighted mean, the generalized Bhattacharyya coefficient and its associated divergences including the Chernoff information was generalized in~\cite{nielsen2014generalized}.
The geometry of the Chernoff error exponent was studied in~\cite{westover2008asymptotic,nielsen2013hypothesis} when dealing with a finite set of  mutually absolutely  probability distributions $P_1,\ldots, P_n$.
In this case, the Chernoff information amounts to 
the minimum pairwise Chernoff information of the probability distributions~\cite{leang1997asymptotics}:
$$
D_C[P_1,\ldots,P_n]:=\min_{i\in\{1,\ldots, n\}\not=j\in\{1,\ldots, n\}} D_C[P_i,P_j].
$$

We summarize our contributions as follows:
In section~\ref{sec:CILREF}, we study the Chernoff information between two given mutually non-singular probability measures $P$ and $Q$  by considering their ``exponential arc''~\cite{cena2007exponential} as a special 1D exponential family termed a Likelihood Ratio Exponential Family (LREF) in~\cite{ITEF-2007}.
We show that the optimal skewing value (Chernoff exponent) defining their Chernoff information is unique (Proposition~\ref{prop:CIalphaunique}) and can be characterized geometrically on the Banach vector space $L^1(\mu)$ of equivalence classes of  measurable functions  (i.e., two functions $f_1$ and $f_2$ are said equivalent in $L^1(\mu)$ if they are equal $\mu$-a.e.)  for which their absolute value is Lebesgue integrable (Proposition~\ref{prop:GI}).
This geometric characterization allows us to design a generic dichotomic search algorithm (Algorithm~1) to approximate the Chernoff optimal skewing parameter, generalizing the prior work~\cite{CI-2013}. 
When $P$ and $Q$ belong to a same exponential family, we recover in \S\ref{sec:ChernoffIG} the results of~\cite{CI-2013}.
This geometric characterization also allows us to reinterpret the Chernoff information as a minmax symmetrization 
of the Kullback--Leibler divergence, and we define by analogy the forward and reverse Chernoff--Bregman divergences in \S\ref{sec:CBD} (Definition~\ref{def:CBD}).
In \S\ref{sec:CIGauss}, we consider the Chernoff information between Gaussian distributions:
We show that the optimality condition for the  Chernoff information between univariate Gaussian distributions can be solved exactly and report a closed-form formula for the Chernoff information between any two univariate Gaussian distributions (Proposition~\ref{prop:unigauss}).
For multivariate Gaussian distributions, we show how to implement the dichotomic search algorithms to approximate the Chernoff information, and report a closed-form formula for the Chernoff information between two centered multivariate Gaussian distributions with scaled covariance matrices (Proposition~\ref{prop:CIscaleCov}).
Finally, we conclude in~\S\ref{sec:concl}

\section{Chernoff information from the viewpoint of likelihood ratio exponential families}\label{sec:CILREF}

\begin{figure}[h]
\centering
\includegraphics[width=0.9\textwidth]{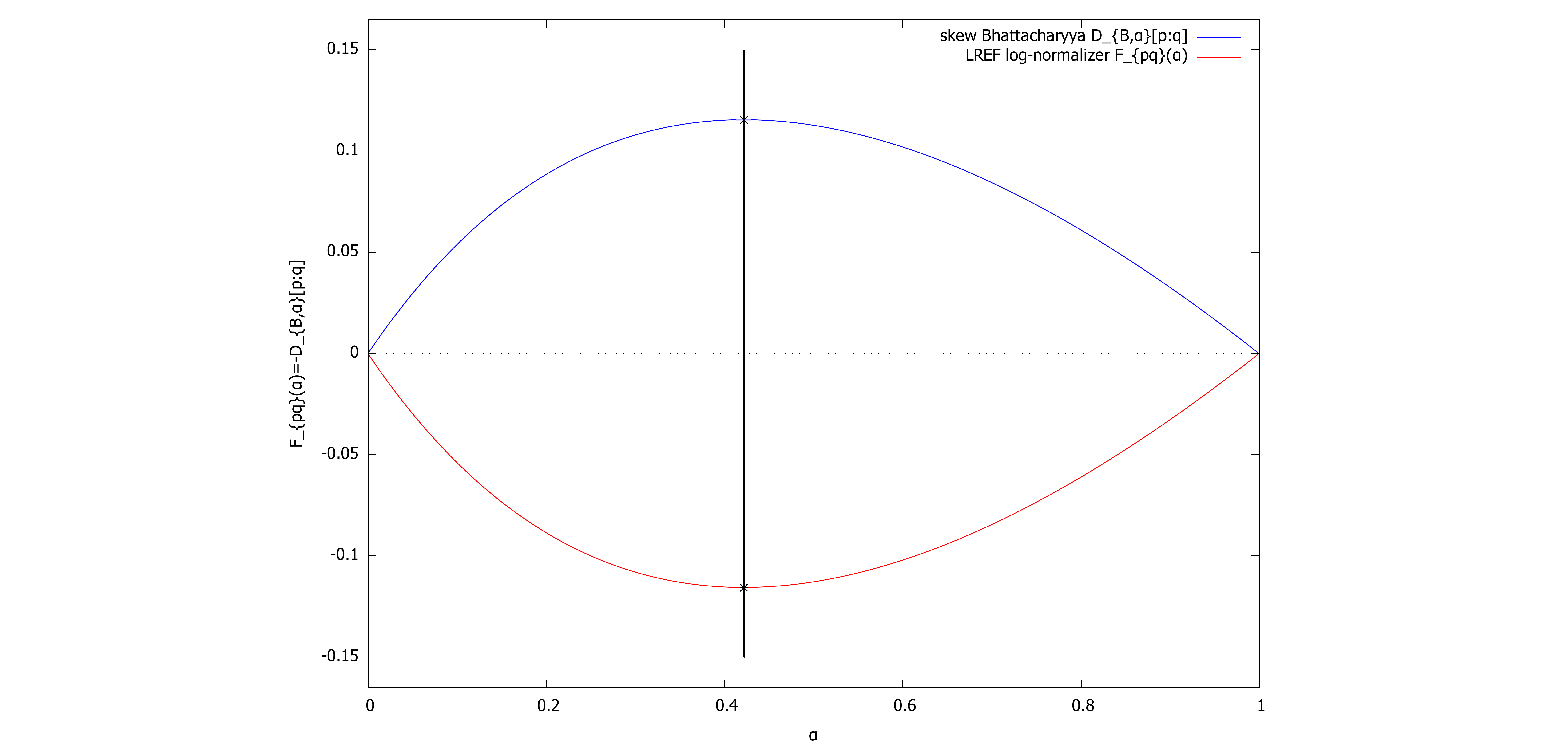}
\caption{Plot of the Bhattacharryya distance $D_{B,\alpha}(p:q)$ (strictly concave, displayed in blue) and the log-normalizer $F_{pq}(\alpha)$ of the induced LREF $\calE_{pq}$ (strictly convex, displayed in red) for two univariate normal densities $p=p_{0,1})$ (standard normal) and $q=p_{1,2}$: The curves $D_{B,\alpha}(p:q)=-F_{pq}(\alpha)$ are horizontally mirror symmetric to each others. The Chernoff information optimal skewing   value $\alphastar$ between these two univariate normal distributions can be calculated exactly in closed-form, 
see \protect\S\ref{sec:ChernoffUniGaussian} 
(approximated numerically here for plotting the vertical grey line by $\alphastar\approx 0.4215580558605244$).}
\label{fig:FBhatEx1}
\end{figure}

\subsection{LREFs and the Chernoff information}

Recall that $L^1(\mu)$ denotes the Lebesgue vector space of measurable functions $f$ such that $\int_\calX |f|\dmu<\infty$. 
Given two prescribed densities $p$ and $q$ of $L^1(\mu)$, consider building a uniparametric  exponential family~\cite{barndorff2014information} $\calE_{pq}$ which consists of the weighted geometric mixtures of $p$ and $q$:
$$
\calE_{pq}\eqdef \left\{ (pq)^G_\alpha(x)\eqdef \frac{p(x)^\alpha q(x)^{1-\alpha}}{Z_{pq}(\alpha)} \st \alpha\in\Theta\right\},
$$
where 
$$
Z_{pq}(\alpha)=\int_{\calX} p(x)^\alpha q(x)^{1-\alpha}\dmu(x)=\rho_\alpha[p:q]
$$ 
denotes the normalizer (or partition function) of the geometric mixture 
$$
(pq)^G_\alpha(x)\propto p(x)^\alpha q(x)^{1-\alpha}
$$ 
so that $\int_\calX (pq)^G_\alpha\dmu=1$ (Figure~\ref{fig:exparc}). 

Let us express the density $(pq)^G_\alpha$ in the canonical form ($\ast$) of exponential families~\cite{barndorff2014information}:
\begin{eqnarray*}
(pq)^G_\alpha(x)&=& \exp\left(\alpha\log\frac{p(x)}{q(x)}-\log Z_{pq}(\alpha)\right) \, q(x),\\
&\stackrel{\ast}{=:}& \exp\left(\alpha t(x)-F_{pq}(\alpha)+k(x)\right).
\end{eqnarray*}
It follows from this decomposition that
 $\alpha\in\Theta\subset\bbR$ is the scalar natural parameter, $t(x)=\log\frac{p(x)}{q(x)}$ denotes the sufficient statistic (minimal when $p(x)\not= q(x)$ $\mu$-a.e.), $k(x)=\log q(x)$ is an auxiliary carrier term wrt. measure $\mu$ (i.e., measure $\mathrm{d}\nu(x)=q(x)\dmu(x)$),  and 
$$
F_{pq}(\alpha)=\log Z_{pq}(\alpha)=-D_{B,\alpha}[p:q]<0
$$
 is the log-normalizer (or log-partition or cumulant function). 
Since the sufficient statistic is the logarithm of the likelihood ratio of $p(x)$ and $q(x)$, Gr\"unwald~\cite{grunwald2007minimum} (Sec. 19.6) termed $\calE_{pq}$ a  {\em Likelihood Ratio Exponential Family} (LREF). See also~\cite{brekelmans2020lref} for applications of LREFs to  Markov chain Monte Carlo (McMC) methods.

We have $p=(pq)^G_1$ and $q=(pq)^G_0$.
Thus let $\alpha_p=1$ and  $\alpha_q=0$, and let us interpret geometrically $\{(pq)^G_\alpha, \alpha\in \Theta\}$ as a maximal exponential arc~\cite{cena2007exponential,de2018mixture,siri2019minimization} where $\Theta\subseteq \bbR$ is an interval.
We denote by $\calE_{\overline{pq}}$ the open exponential arc with extremities $p$ and $q$.

Since the log-normalizers $F(\theta)$ of exponential families are always strictly convex and real analytic~\cite{barndorff2014information} (i.e., $F(\theta)\in C^\omega(\bbR)$), we deduce that $D_{B,\alpha}[p:q]=-F_{pq}(\alpha)$ is strictly concave and real analytic. Moreover, we have $D_{B,0}[p:q]=D_{B,1}[p:q]=0$. Hence, the Chernoff  optimal skewing parameter  $\alphastar$ is {\em unique} when $p\not=q$ $\mu$-a.e., 
and we get the Chernoff information calculated as  
$$
D_C[p:q]=D_{B,\alphastar}(p,q).
$$
See Figure~\ref{fig:FBhatEx1} for a plot of the strictly concave function $D_{B,\alpha}[p:q]$ and the strictly convex function $F_{pq}(\alpha)=-D_{B,\alpha}[p:q]$ when $p=p_{0,1}$ is the standard normal density and $q=p_{1,2}$ is a normal density of mean $1$ and variance $2$ where $$
p_{\mu,\sigma^2}(x):=\frac{1}{\sqrt{2\pi\sigma^2}} \exp\left(-\frac{1}{2}\frac{(x-\mu)^2}{\sigma^2}\right).
$$

Consider the {\em full} natural parameter space $\Theta_{pq}$ of $\calE_{pq}$:
$$
\Theta_{pq}=\left\{ \alpha\in\bbR \st \rho_\alpha(p:q)<+\infty \right\}.
$$
The natural parameter space $\Theta_{pq}$  is always convex~\cite{barndorff2014information} and since $\rho_0(p:q)=\rho_1(p:q)=1$, we necessarily have $(0,1)\in \Theta_{pq}$ but not necessarily $[0,1]\in\Theta_{pq}$ as detailed in the following remark: 

\begin{Remark}
In order to be an exponential family, the densities $(pq)^G_\alpha$ shall have the same coinciding support for all values of $\alpha$ belonging to the natural parameter space.
The support of the geometric mixture density $(pq)^G_\alpha$ is
$$
\supp\left((pq)^G_\alpha\right)=\left\{
\begin{array}{ll}
\supp(p)\cap\supp(q), & \alpha\in\Theta_{pq}\backslash\{0,1\}\\
\supp(p), & \alpha=1\\
\supp(q), & \alpha=0.\\
\end{array}
\right.
$$
This condition is trivially satisfied when the supports of $p$ and $q$ coincide, and therefore $[0,1]\subset\Theta_{pq}$ in that case.
Otherwise, we may consider the common support $\calX_{pq}=\supp(p)\cap\supp(q)$ for $\alpha\in (0,1)$.
In this latter case, we are poised to restrict the natural parameter space to $\Theta_{pq}=(0,1)$ even if $\rho_\alpha(p:q)<\infty$ for some $\alpha$ outside that range.
\end{Remark}

To emphasize that $\alphastar$ depends on $p$ and $q$, we shall use the notation $\alphastar(p:q)$ whenever necessary.
We have $\alphastar(q:p)=1-\alphastar(p:q)$, and since $D_{B,\alpha}(p:q)=D_{B,1-\alpha}(q:p)$, and we check that
$$
D_C[p,q]=D_{B,\alphastar(p:q)}(p:q)=D_{B,\alphastar(q:p)}(q:p)=D_C[q,p].
$$
Thus the skewing value $\alphastar(q:p)$ may be called the conjugate Chernoff exponent (i.e., depends on the convention chosen for interpolating on the exponential arc).

However, since the Chernoff information does not satisfy the triangle inequality, it is not a metric distance and the Chernoff information is called a quasi-distance.

\begin{Proposition}[Uniqueness of the Chernoff information optimal skewing parameter]\label{prop:CIalphaunique}
Let $P$ and $Q$ be two probability measures dominated by a positive measure $\mu$ with corresponding Radon-Nikodym densities $p$ and $q$, respectively.
The Chernoff information optimal skewing parameter $\alphastar(p:q)$ is unique when $p\not=q$ $\mu$-almost everywhere, and
$$
D_C[p,q]=D_{B,\alphastar(p:q)}(p:q)=D_{B,\alphastar(q:p)}(q:p)=D_C[q,p].
$$  
When $p=q$ $\mu$-a.e., we have $D_C[p:q]=0$ and $\alphastar$ is undefined since it can range in $[0,1]$.
\end{Proposition}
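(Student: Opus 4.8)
The plan is to read off everything from the log-normalizer $F_{pq}(\alpha)=\log Z_{pq}(\alpha)$ of the induced LREF $\calE_{pq}$, since by construction $D_{B,\alpha}[p:q]=-F_{pq}(\alpha)$ and the Chernoff information is the maximum of this quantity over $\alpha\in(0,1)$. First I would recall the two structural facts about exponential families already invoked in the excerpt: the cumulant function $F_{pq}$ is real analytic and convex on its natural parameter domain, with $F_{pq}''(\alpha)=\mathrm{Var}_{(pq)^G_\alpha}[t(X)]$ where $t(x)=\log\frac{p(x)}{q(x)}$ is the sufficient statistic. The whole argument hinges on promoting this convexity to \emph{strict} convexity, which is exactly where the hypothesis $p\neq q$ $\mu$-a.e. enters.

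The main step is therefore to show $F_{pq}''(\alpha)>0$ for every interior $\alpha$. The variance $\mathrm{Var}_{(pq)^G_\alpha}[t(X)]$ vanishes if and only if $t$ is $(pq)^G_\alpha$-a.e. constant; since $(pq)^G_\alpha$ has the common support $\calX_{pq}=\supp(p)\cap\supp(q)$ for all $\alpha\in(0,1)$, this would force $\log\frac{p}{q}$ to be constant on $\calX_{pq}$, i.e.\ $p$ proportional to $q$, hence $p=q$ $\mu$-a.e.\ after normalization, contradicting the hypothesis. So $t$ is non-degenerate, $F_{pq}$ is strictly convex on $(0,1)$, and $D_{B,\alpha}[p:q]=-F_{pq}(\alpha)$ is strictly concave there. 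Next I would combine strict concavity with the boundary values $D_{B,0}[p:q]=D_{B,1}[p:q]=0$ (which follow from $\rho_0(p:q)=\rho_1(p:q)=1$): a function continuous on $[0,1]$, strictly concave on $(0,1)$, and vanishing at both endpoints is strictly positive on the open interval and attains its maximum at a single interior point, yielding both existence and uniqueness of $\alphastar(p:q)\in(0,1)$ and incidentally confirming $D_C[p,q]>0$.

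For the symmetry identities, I would use the relation $D_{B,\alpha}(p:q)=D_{B,1-\alpha}(q:p)$ recorded earlier. The reflection $\alpha\mapsto 1-\alpha$ carries the maximizer of $D_{B,\cdot}(p:q)$ to that of $D_{B,\cdot}(q:p)$, giving $\alphastar(q:p)=1-\alphastar(p:q)$ and hence the chain $D_C[p,q]=D_{B,\alphastar(p:q)}(p:q)=D_{B,\alphastar(q:p)}(q:p)=D_C[q,p]$. Finally, for the degenerate case $p=q$ $\mu$-a.e., I would note that $t\equiv 0$ $\mu$-a.e., so $\rho_\alpha(p:q)=\int p\,\dmu=1$ and $D_{B,\alpha}[p:q]=0$ identically on $[0,1]$; the maximum is $0$ but is attained at every $\alpha$, so $\alphastar$ cannot be singled out and is left undefined.

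The only delicate point is the strict-convexity step: one must argue that the variance of $t$ is strictly positive using the \emph{common} support $\calX_{pq}$ of the interior mixtures rather than the possibly differing supports of $p$ and $q$ realized at the endpoints $\alpha\in\{0,1\}$. Everything else reduces to elementary properties of strictly concave functions on a compact interval together with the two symmetry relations already established in the excerpt.
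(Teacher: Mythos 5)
Your proof is correct and follows essentially the same route as the paper: strict convexity of the LREF log-normalizer $F_{pq}$ makes $D_{B,\alpha}[p:q]=-F_{pq}(\alpha)$ strictly concave, which together with the boundary values $D_{B,0}=D_{B,1}=0$ forces a unique interior maximizer, and the symmetry identities follow from $D_{B,\alpha}(p:q)=D_{B,1-\alpha}(q:p)$ exactly as in the text. The only difference is that you justify strict convexity explicitly via $F_{pq}''(\alpha)=\mathrm{Var}_{(pq)^G_\alpha}[t]>0$ for the sufficient statistic $t=\log\frac{p}{q}$, where the paper simply invokes the standard fact that log-normalizers of minimal exponential families are strictly convex and real analytic; just note that your step ``$p\propto q$ on $\calX_{pq}$, hence $p=q$ after normalization'' tacitly assumes coinciding supports (otherwise the proportionality constant need not be $1$), which is the same regime in which the paper's statement and its appeal to minimality operate.
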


\begin{Definition}
An exponential family is called regular~\cite{barndorff2014information} when the natural parameter space $\Theta$ is open, i.e., $\Theta=\Theta^\circ$ where $\Theta^\circ$ denotes the interior of $\Theta$ (i.e., an open interval).
\end{Definition}

\begin{Proposition}[Finite sided Kullback-Leibler divergences]\label{eq:KLDfinite}
When the LREF $\calE_{pq}$ is a regular exponential family with natural parameter space $\Theta\supsetneq [0,1]$, 
both the forward Kullback-Leibler divergence $D_\KL[p:q]$ and 
the reverse Kullback-Leibler divergence $D_\KL[q:p]$ are finite.
\end{Proposition}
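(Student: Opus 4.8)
The plan is to exploit the exponential family structure of $\calE_{pq}$ and reduce both sided KL divergences to values of the first derivative of the log-normalizer $F_{pq}$ at the two endpoints $\alpha=1$ and $\alpha=0$ of the exponential arc. The lever is the standard moment identity for an exponential family: the first derivative of the log-normalizer equals the mean of the sufficient statistic, namely $F_{pq}'(\alpha)=E_\alpha[t]$, where $E_\alpha$ denotes expectation with respect to the geometric mixture $(pq)^G_\alpha$ and $t(x)=\log\frac{p(x)}{q(x)}$. Since $p=(pq)^G_1$ and $q=(pq)^G_0$, evaluating this identity at the two endpoints produces the forward and reverse KL divergences directly.

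Concretely, I would first observe that
$$
D_\KL[p:q]=\int p\,\log\frac{p}{q}\,\dmu = E_1[t]=F_{pq}'(1),
\qquad
D_\KL[q:p]=\int q\,\log\frac{q}{p}\,\dmu = -E_0[t]=-F_{pq}'(0).
$$
Because $\rho_0[p:q]=\rho_1[p:q]=1$ we have $F_{pq}(0)=F_{pq}(1)=0$, so these two quantities are exactly the Bregman divergences generated by $F_{pq}$ between the endpoint parameters, $D_\KL[p:q]=B_{F_{pq}}(0:1)$ and $D_\KL[q:p]=B_{F_{pq}}(1:0)$, recovering the familiar KL--Bregman correspondence within an exponential family. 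Hence the whole question collapses to the finiteness of $F_{pq}'(0)$ and $F_{pq}'(1)$.

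It then remains to argue that these two derivatives are finite, and this is precisely where the hypotheses are used. Since $\calE_{pq}$ is regular, its natural parameter space satisfies $\Theta=\Theta^\circ$, and since $\Theta\supsetneq[0,1]$, both $\alpha=0$ and $\alpha=1$ are interior points of $\Theta$. Invoking the real analyticity of log-normalizers on the interior of the natural parameter space (already recalled above, $F_{pq}\in C^\omega$ on $\Theta^\circ$), the derivative $F_{pq}'$ exists and is finite at $0$ and at $1$, whence both $D_\KL[p:q]$ and $D_\KL[q:p]$ are finite.

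The only delicate step is the moment identity $F_{pq}'(\alpha)=E_\alpha[t]$, which amounts to differentiating $Z_{pq}(\alpha)=\int p^\alpha q^{1-\alpha}\,\dmu$ under the integral sign. The main obstacle is therefore justifying this exchange at the endpoints, and this is exactly what interiority secures: openness of $\Theta$ at $0$ and $1$ guarantees that $Z_{pq}$ is finite on a two-sided neighborhood of each endpoint, so dominated convergence legitimizes the differentiation and forces $E_\alpha[t]$ to be finite there. Without interiority one meets precisely the divergent behaviour illustrated earlier (the standard Cauchy versus normal example, where one sided KL is $+\infty$), so the strict containment $\Theta\supsetneq[0,1]$ is not cosmetic but the crux of the argument.
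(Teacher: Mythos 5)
Your proposal is correct and follows essentially the same route as the paper: both reduce the sided divergences to $D_\KL[p:q]=B_{F_{pq}}(0:1)=F_{pq}'(1)$ and $D_\KL[q:p]=B_{F_{pq}}(1:0)=-F_{pq}'(0)$ via the KLD--Bregman correspondence for regular exponential families, and conclude finiteness from the fact that $0$ and $1$ lie in the interior of the natural parameter space where $F_{pq}$ is smooth with finite derivatives. The only difference is cosmetic: where the paper cites the literature for the moment identity $E_{p_\theta}[t(x)]=\nabla F(\theta)$, you justify the differentiation under the integral sign directly by dominated convergence, which is a welcome but equivalent elaboration.
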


\begin{proof}
A reverse divergence $D^*(\theta_1:\theta_2)$ is a divergence on the swapped parameter order: $D^*(\theta_1:\theta_2):=D(\theta_2:\theta_1)$.
We shall use the result pioneered in~\cite{azoury2001relative,collins2001generalization} that the KLD between two densities 
$p_{\theta_1}$ and $p_{\theta_2}$
of a regular exponential family $\calE=\{p_\theta\st\theta\in\Theta\}$ amounts to a reverse Bregman divergence $(B_F)^*$ (i.e., a Bregman divergence on swapped parameter order) induced by the log-normalizer of the family:
$$
D_\KL[p_{\theta_1}:p_{\theta_2}]=(B_F)^*(\theta_1:\theta_2)=B_F(\theta_2:\theta_1),
$$
where $B_F$ is the Bregman divergence defined on domain $D=\dom(F)$ (see Definition~1 of~\cite{banerjee2005clustering}):
\begin{eqnarray*}
B_F: && D\times \ri(D) \rightarrow [0,\infty)\\
&& (\theta_1,\theta_2) \mapsto  B_F(\theta_1:\theta_2)=F(\theta_1)-F(\theta_2)-(\theta_1-\theta_2)^\top \nabla F(\theta_2)<+\infty,
\end{eqnarray*}
where $\ri(D)$ denotes the relative interior of domain $D$. Bregman divergences are always finite and the only symmetric Bregman divergences are squared Mahalanobis distances~\cite{nielsen2009sided} (i.e., with corresponding Bregman generators defining quadratic forms).

For completeness, we recall the proof   as follows: We have 
$$
\log\frac{p_{\theta_1}(x)}{p_{\theta_2}(x)}=(\theta_1-\theta_2)^\top t(x)-F(\theta_1)+F(\theta_2).
$$ 
Thus we get
\begin{eqnarray*}
D_\KL[p_{\theta_1}:p_{\theta_2}]&=&E_{p_{\theta_1}}\left[\log\frac{p_{\theta_1}}{p_{\theta_2}}\right],\\
&=& F(\theta_2)-F(\theta_1)-(\theta_1-\theta_2)^\top E_{p_{\theta_1}}[t(x)],
\end{eqnarray*}
using the linearity property of the expectation operator.
When $\calE$ is regular, we also have $E_{p_{\theta_1}}[t(x)]=\nabla F(\theta_1)$ (see~\cite{sundberg2019statistical}), and therefore we get
$$
D_\KL[p_{\theta_1}:p_{\theta_2}]=F(\theta_2)-F(\theta_1)-(\theta_1-\theta_2)^\top  \nabla F(\theta_1)=:B_F(\theta_2:\theta_1)=(B_F)^*(\theta_1:\theta_2).
$$

In our LREF setting, we thus have: 
$$
D_\KL[p:q]=(B_F)^*(\alpha_p:\alpha_q)=B_{F_{pq}}(\alpha_q:\alpha_p)=B_{F_{pq}}(0:1),
$$ 
and $D_\KL[q:p]=B_{F_{pq}}(\alpha_p:\alpha_q)=B_{F_{pq}}(1:0)$ where $B_{F_{pq}}(\alpha_1:\alpha_2)$ denotes the following scalar Bregman divergence:
$$
B_{F_{pq}}(\alpha_1:\alpha_2)=F_{pq}(\alpha_1)-F_{pq}(\alpha_2)-(\alpha_1-\alpha_2)F_{pq}'(\alpha_2).
$$
Since $F_{pq}(0)=F_{pq}(1)=0$ and $B_{F_{pq}}: \Theta\times \ri(\Theta) \rightarrow [0,\infty)$, we have 
$$
D_\KL[p:q]=B_{F_{pq}}(\alpha_q:\alpha_p)=B_{F_{pq}}(0:1)=F_{pq}'(1)<\infty. 
$$ 
Similarly
$$
D_\KL[q:p]=B_{F_{pq}}(\alpha_p:\alpha_q)=B_{F_{pq}}(1:0)=-F_{pq}'(0)<\infty.
$$
Notice that since $B_{F_{pq}}(\alpha_1:\alpha_2)> 0$, we have  
$F_{pq}'(1)>0$ and $F_{pq}'(0)<0$ when $p\not=q$ $\mu$-almost everywhere. 
Moreover, since $F_{pq}(\alpha)$ is strictly convex, $F_{pq}'(\alpha)$ is strictly monotonically increasing, and  therefore there exists a unique $\alphastar\in(0,1)$ such that $F_{pq}'(\alphastar)=0$.
\end{proof}

\begin{Example}
When $p$ and $q$ belongs to a same regular exponential family $\calE$ (e.g., $p$ and $q$ are two normal densities), their sided KLDs~\cite{nielsen2009sided} are both finite. The LREF induced by two Cauchy distributions $p_{l_1,s_1}$ and $p_{l_2,s_2}$ 
is such that $[0,1]\subset \Theta$ since the skewed Bhattacharyya distance is defined and finite for $\alpha\in\bbR$~\cite{nielsen2021f}. 
Therefore the KLDs between two Cauchy distributions are always finite~\cite{nielsen2021f}, see the closed-form formula in~\cite{chyzak2019closed}.
\end{Example}

\begin{Remark}
If $0\in \Theta^\circ$, then $B_{F_{pq}}(1:0)<\infty$ and therefore $D_\KL[q:p]<\infty$.
Since the KLD between a standard Cauchy distribution $p$ and a standard normal distribution $q$ is $+\infty$, 
we deduce that $D_\KL[p:q]\not= B_{F_{pq}}(0:1)$, and therefore $1\not\in\Theta^\circ$.
Similarly, when $1\in\Theta^\circ$, we have $B_{F_{pq}}(0:1)<\infty$ and therefore $D_\KL[p:q]<\infty$.
\end{Remark}

\begin{Proposition}[Chernoff information expressed as KLDs]\label{prop:COC}
We have at the Chernoff information optimal skewing value $\alphastar\in(0,1)$ the following identities:
$$
D_C[p:q]=D_\KL[(pq)^G_\alphastar:p]=D_\KL[(pq)^G_\alphastar:q].
$$
\end{Proposition}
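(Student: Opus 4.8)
The plan is to reuse the Bregman-divergence representation of the Kullback--Leibler divergence established in the proof of Proposition~\ref{eq:KLDfinite}, now applied to two \emph{generic} members of the LREF $\calE_{pq}$ rather than only to the extremities $p$ and $q$. First I would record that for any two natural parameters $\alpha_1,\alpha_2$ the log-likelihood ratio is affine in the sufficient statistic,
$$
\log\frac{(pq)^G_{\alpha_1}(x)}{(pq)^G_{\alpha_2}(x)}=(\alpha_1-\alpha_2)\,t(x)-F_{pq}(\alpha_1)+F_{pq}(\alpha_2),
$$
so that taking expectations under $(pq)^G_{\alpha_1}$ and invoking the mean-value identity $E_{(pq)^G_{\alpha_1}}[t(x)]=F_{pq}'(\alpha_1)$ yields
$$
D_\KL[(pq)^G_{\alpha_1}:(pq)^G_{\alpha_2}]=F_{pq}(\alpha_2)-F_{pq}(\alpha_1)-(\alpha_2-\alpha_1)\,F_{pq}'(\alpha_1)=B_{F_{pq}}(\alpha_2:\alpha_1).
$$

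Second, I would specialize this identity to $\alpha_1=\alphastar$ and $\alpha_2\in\{\alpha_p,\alpha_q\}=\{1,0\}$, using $p=(pq)^G_1$ and $q=(pq)^G_0$, which gives
$$
D_\KL[(pq)^G_\alphastar:p]=B_{F_{pq}}(1:\alphastar),\qquad D_\KL[(pq)^G_\alphastar:q]=B_{F_{pq}}(0:\alphastar).
$$
Third, I would expand both Bregman divergences and exploit the two defining facts of $\alphastar$: the boundary normalization $F_{pq}(0)=F_{pq}(1)=0$ already noted in the excerpt, and the first-order optimality condition $F_{pq}'(\alphastar)=0$, which characterizes the unique maximizer of the strictly concave map $\alpha\mapsto D_{B,\alpha}[p:q]=-F_{pq}(\alpha)$ on $(0,1)$ (Proposition~\ref{prop:CIalphaunique}). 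Each cross term $(1-\alphastar)F_{pq}'(\alphastar)$ and $(0-\alphastar)F_{pq}'(\alphastar)$ then vanishes, and both expressions collapse to the single value $-F_{pq}(\alphastar)=D_{B,\alphastar}[p:q]=D_C[p:q]$, which is precisely the claim.

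The one point requiring care --- and the only genuine subtlety --- is the mean-value identity $E_{(pq)^G_\alphastar}[t(x)]=F_{pq}'(\alphastar)$, which is valid only at an \emph{interior} point of the natural parameter space $\Theta_{pq}$. Here it is legitimate without assuming that the whole family is regular: as noted in the excerpt we always have $(0,1)\subseteq\Theta_{pq}$, and since $\alphastar\in(0,1)$ lies in this open interval, $\alphastar$ is automatically an interior point of the convex set $\Theta_{pq}$, so differentiation under the integral sign applies and the gradient-equals-mean identity holds at $\alphastar$. I would therefore emphasize this interiority observation, rather than the routine Bregman algebra, as the crux of the argument.
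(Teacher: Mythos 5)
Your proof is correct and follows essentially the same route as the paper's: both express the two Kullback--Leibler divergences as reverse Bregman divergences $B_{F_{pq}}(1:\alphastar)$ and $B_{F_{pq}}(0:\alphastar)$ induced by the LREF log-normalizer, and collapse them to $-F_{pq}(\alphastar)=D_{B,\alphastar}[p:q]=D_C[p:q]$ using $F_{pq}(0)=F_{pq}(1)=0$ together with the first-order optimality condition $F_{pq}'(\alphastar)=0$. Your added emphasis on the interiority of $\alphastar\in(0,1)\subseteq\Theta_{pq}$ to justify the mean-value identity $E_{(pq)^G_\alphastar}[t(x)]=F_{pq}'(\alphastar)$ is a sound refinement of the paper's implicit appeal to regularity, not a different argument.
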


\begin{figure}
\centering
\includegraphics[width=10cm]{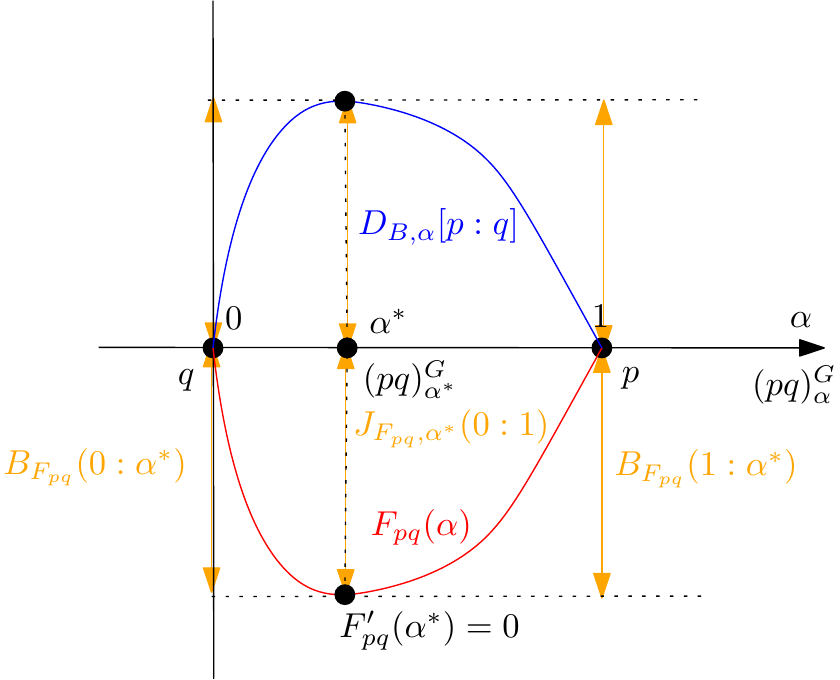}

\caption{The best unique parameter $\alpha^*$ defining the Chernoff information optimal skewing parameter is found by setting the derivative of the strictly convex function $F_{pq}(\alpha)$ to zero.
At the optimal value $\alphastar$, we have $D_C[p:q]=D_\KL[(pq)^G_\alphastar:p]=D_\KL[(pq)^G_\alphastar:q]=-F(\alphastar)>0$.}\label{fig:chernoffexponent}
\end{figure}

\begin{proof}
Since the skewed Bhattacharyya distance between two densities $p_{\theta_1}$ and $p_{\theta_2}$ of an exponential family with log-normalizer $F$ amounts to a skew Jensen divergence for the log-normalizer~\cite{BhatEF-Huzurbazar-1955,nielsen2011burbea}, we have:
$$
D_{B,\alpha}(p_{\theta_1}:p_{\theta_2})=J_{F,\alpha}(\theta_1:\theta_2),
$$
where the skew Jensen divergence~\cite{burbea1982convexity} is given by
$$
J_{F,\alpha}(\theta_1:\theta_2)=\alpha F(\theta_1)+(1-\alpha) F(\theta_2)-F(\alpha\theta_1+(1-\alpha)\theta_2).
$$

In the setting of the LREF, we have
\begin{eqnarray*}
D_{B,\alpha}((pq)_{\alpha_1}^G:(pq)_{\alpha_2}^G) &=& J_{F_{pq},\alpha}(\alpha_1:\alpha_2),\\
&=& \alpha F_{pq}(\alpha_1)+(1-\alpha) F_{pq}(\alpha_2)-F_{pq}(\alpha\alpha_1+(1-\alpha)\alpha_2).
\end{eqnarray*}

At the optimal value $\alphastar$, we have $F_{pq}'(\alphastar)=0$.
Since $D_\KL[(pq)^G_\alphastar:p]=B_{F_{pq}}(1:\alphastar)=-F(\alphastar)$ and $D_\KL[(pq)^G_\alphastar:q]=B_{F_{pq}}(0:\alphastar)=-F(\alphastar)$ and $D_C[p:q]=-\log\rho_\alphastar(p:q)=J_{F_{pq},\alphastar}(1:0)=-F_{pq}(\alphastar)$, we get
$$
D_C[p:q]=D_\KL[(pq)^G_\alphastar:p]=D_\KL[(pq)^G_\alphastar:q].
$$
Figure~\ref{fig:chernoffexponent} illustrates the proposition on the plot of the scalar function $F_{pq}(\alpha)$.
\end{proof}

\begin{Corollary}
The Chernoff information optimal skewing value $\alphastar(p:q)\in(0,1)$ can be used to calculate the Chernoff information $D_C[p,q]$ as a Bregman divergence induced by the LREF:
$$
D_C[p:q]=B_{F_{pq}}[1:\alphastar]= B_{F_{pq}}[0:\alphastar].
$$
\end{Corollary}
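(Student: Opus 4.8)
The plan is to read the result directly off the Bregman-divergence machinery already assembled in Propositions~\ref{eq:KLDfinite} and~\ref{prop:COC}, combined with the stationarity condition characterizing $\alphastar$. Two essentially equivalent routes are available, and I would present the more self-contained one as the main argument.

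First route (direct computation). Recall three facts established above: (i) the log-normalizer of the LREF satisfies $F_{pq}(0)=F_{pq}(1)=0$; (ii) at the Chernoff optimal skewing value the derivative vanishes, $F_{pq}'(\alphastar)=0$, as shown at the very end of the proof of Proposition~\ref{eq:KLDfinite}; and (iii) the Chernoff information equals $D_C[p:q]=D_{B,\alphastar}(p:q)=-F_{pq}(\alphastar)$. I would then simply expand the scalar Bregman divergence
$$
B_{F_{pq}}(1:\alphastar)=F_{pq}(1)-F_{pq}(\alphastar)-(1-\alphastar)F_{pq}'(\alphastar),
$$
and observe that the first term vanishes by (i) and the last term vanishes by (ii), leaving $B_{F_{pq}}(1:\alphastar)=-F_{pq}(\alphastar)=D_C[p:q]$. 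The identical cancellation in
$$
B_{F_{pq}}(0:\alphastar)=F_{pq}(0)-F_{pq}(\alphastar)-(0-\alphastar)F_{pq}'(\alphastar)
$$
yields $B_{F_{pq}}(0:\alphastar)=-F_{pq}(\alphastar)=D_C[p:q]$, which gives the claimed double identity.

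Second route (via the KLD representation). Alternatively, I would invoke the reverse-Bregman representation of the KLD in a regular exponential family from Proposition~\ref{eq:KLDfinite}, namely $D_\KL[p_{\theta_1}:p_{\theta_2}]=B_F(\theta_2:\theta_1)$. Applying this to the LREF with $(pq)^G_\alphastar$ at parameter $\alphastar$ and the extremities $p,q$ at parameters $\alpha_p=1$ and $\alpha_q=0$ gives $D_\KL[(pq)^G_\alphastar:p]=B_{F_{pq}}(1:\alphastar)$ and $D_\KL[(pq)^G_\alphastar:q]=B_{F_{pq}}(0:\alphastar)$; combining these with Proposition~\ref{prop:COC} then closes the argument.

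The step requiring a moment's care — though it is not really an obstacle — is the admissibility of the Bregman evaluations: the scalar Bregman divergence $B_{F_{pq}}$ requires its second argument to lie in the relative interior of $\dom(F_{pq})$, while the first argument need only lie in $\dom(F_{pq})$. Since $\alphastar\in(0,1)$ is strictly interior to the natural parameter space by Proposition~\ref{prop:CIalphaunique} and the convexity of $\Theta_{pq}$, and since $0,1\in\dom(F_{pq})$ because $F_{pq}(0)=F_{pq}(1)=0$ is finite, both evaluations $B_{F_{pq}}(1:\alphastar)$ and $B_{F_{pq}}(0:\alphastar)$ are well defined. Once this is checked, the first route needs only the arithmetic cancellation above and makes no appeal to regularity of the family, which is why I would adopt it as the main proof.
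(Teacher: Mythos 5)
Your proposal is correct and follows essentially the same route as the paper: the corollary is read off the proof of Proposition~\ref{prop:COC}, where the identities $B_{F_{pq}}(1:\alphastar)=-F_{pq}(\alphastar)=B_{F_{pq}}(0:\alphastar)$ and $D_C[p:q]=-F_{pq}(\alphastar)$ are obtained from exactly the two cancellations you use, namely $F_{pq}(0)=F_{pq}(1)=0$ and $F_{pq}'(\alphastar)=0$. Your remark on admissibility of the Bregman evaluations (that $\alphastar\in(0,1)\subseteq\ri(\Theta_{pq})$ and $0,1\in\dom(F_{pq})$) is a careful touch the paper leaves implicit, but it does not change the argument.
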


Proposition~\ref{prop:COC} let us interpret the Chernoff information as a special  symmetrization of the Kullback--Leibler divergence~\cite{chen2008metrics2}, different from the Jeffreys divergence or the Jensen-Shannon divergence~\cite{nielsen2019jensen}.
Indeed, the Chernoff information can be rewritten as 
\begin{equation}
D_C[p:q] = \min_{r\in\calE_{\overline{pq}}}\{D_\KL[r:p],D_\KL[r:q]\}.
\end{equation}
As such, we can interpret the Chernoff information as the radius of a minimum enclosing left-sided Kullback--Leibler ball on the space $L^1(\mu)$.

\begin{figure}
\centering
\includegraphics[width=11cm]{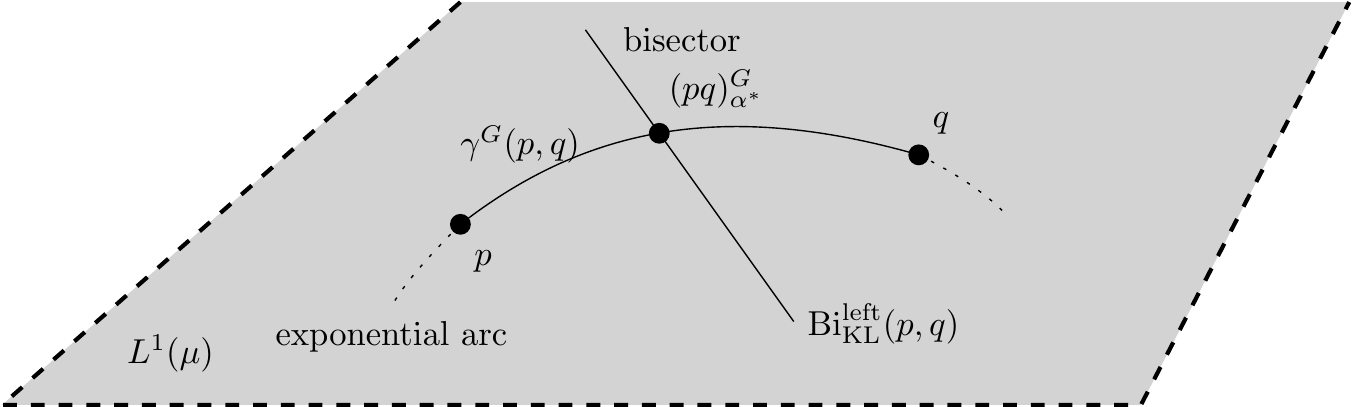}
\caption{The Chernoff information distribution $(PQ)_\alphastar^G$ with density $(pq)_\alphastar^G$ is obtained as the unique intersection of the exponential arc $\gamma^G(p,q)$ linking density $p$ to density $q$ of $L^1(\mu)$ with the left-sided Kullback-Leibler divergence bisector $\mathrm{Bi}_\KL^{\mathrm{left}}(p,q)$ of $p$ and $q$: $(pq)^G_\alphastar=\gamma^G(p,q) \cap\Bi_\KL^\mathrm{left}(p,q)$.}\label{fig:exparc}
\end{figure}

\subsection{Geometric characterization of the Chernoff information and the Chernoff information distribution}

Let us term the  probability distribution $(PQ)_{\alphastar}^G\ll \mu$ with corresponding density $(pq)_{\alphastar}^G$ the 
{\em Chernoff information distribution} to avoid confusion with another concept of Chernoff distributions~\cite{han2022berry} used in statistics.
We can characterize geometrically the Chernoff information distribution $(pq)^G_\alphastar$ on $L^1(\mu)$ 
as the intersection of a {\em left-sided Kullback-Leibler divergence bisector}:
$$
\Bi_\KL^\mathrm{left}(p,q)\eqdef \left\{ r\in L^1(\mu)\st D_\KL[r:p]=D_\KL[r:q]\right\},
$$
with an {\em exponential arc}~\cite{cena2007exponential}
$$
\gamma^G(p,q)\eqdef \left\{(pq)^G_\alpha \st \alpha\in [0,1]\right\}.
$$

We thus interpret Proposition~\ref{prop:COC} geometrically by the following proposition (see Figure~\ref{fig:exparc}):

\begin{Proposition}[Geometric characterization of the Chernoff information]\label{prop:GI}
On the vector space $L^1(\mu)$, the Chernoff information distribution is the unique distribution
$$
(pq)^G_\alphastar=\gamma^G(p,q) \cap\Bi_\KL^\mathrm{left}(p,q).
$$
\end{Proposition}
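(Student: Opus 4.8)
The plan is to split the claim into two parts: first that $(pq)^G_{\alphastar}$ genuinely lies in both sets, and second that it is the \emph{only} point of $\gamma^G(p,q)$ meeting the bisector. The membership is essentially already in hand. By definition $(pq)^G_{\alphastar}\in\gamma^G(p,q)$ since $\alphastar\in(0,1)\subset[0,1]$, and Proposition~\ref{prop:COC} gives $D_\KL[(pq)^G_{\alphastar}:p]=D_\KL[(pq)^G_{\alphastar}:q]$, which is exactly the defining equation of $\Bi_\KL^\mathrm{left}(p,q)$. Hence $(pq)^G_{\alphastar}\in\gamma^G(p,q)\cap\Bi_\KL^\mathrm{left}(p,q)$.

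For uniqueness, the key observation is that although $\Bi_\KL^\mathrm{left}(p,q)$ is a large subset of $L^1(\mu)$, any candidate intersection point already lies on the arc and is therefore of the form $(pq)^G_\alpha$ for some $\alpha\in[0,1]$; so I only need to understand the bisector \emph{restricted to the arc}, which is governed entirely by the scalar log-normalizer $F_{pq}$. Using the Bregman representation of the sided KLD established in the proof of Proposition~\ref{eq:KLDfinite}, together with $p=(pq)^G_1$ and $q=(pq)^G_0$, I would rewrite
$$
D_\KL[(pq)^G_\alpha:p]=B_{F_{pq}}(1:\alpha),\qquad D_\KL[(pq)^G_\alpha:q]=B_{F_{pq}}(0:\alpha),
$$
both finite for $\alpha\in(0,1)\subseteq\ri(\Theta_{pq})$. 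Expanding the two scalar Bregman divergences and using $F_{pq}(0)=F_{pq}(1)=0$, the bisector condition $B_{F_{pq}}(1:\alpha)=B_{F_{pq}}(0:\alpha)$ collapses, after the common $-F_{pq}(\alpha)$ terms cancel, to the single stationarity equation
$$
F_{pq}'(\alpha)=0.
$$

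The conclusion then follows from convexity: since $F_{pq}$ is strictly convex and real analytic, its derivative $F_{pq}'$ is strictly increasing, so $F_{pq}'(\alpha)=0$ has at most one solution on $(0,1)$, and that solution is precisely $\alphastar$ (this is the uniqueness already recorded in Proposition~\ref{prop:CIalphaunique} and at the end of the proof of Proposition~\ref{eq:KLDfinite}). The one remaining item is to rule out the endpoints: for $\alpha=1$ the candidate is $p$ itself, for which $D_\KL[p:p]=0$ while $D_\KL[p:q]>0$ when $p\neq q$ $\mu$-a.e., so $p\notin\Bi_\KL^\mathrm{left}(p,q)$, and symmetrically $q\notin\Bi_\KL^\mathrm{left}(p,q)$; thus no endpoint qualifies.

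The step I expect to require the most care is the reduction of the bisector equation to $F_{pq}'(\alpha)=0$, specifically justifying that the Bregman formula for the sided KLD applies at \emph{interior} arc points even without the global regularity hypothesis $\Theta_{pq}\supsetneq[0,1]$ of Proposition~\ref{eq:KLDfinite}: here the reference density is $(pq)^G_\alpha$ with $\alpha\in(0,1)\subseteq\ri(\Theta_{pq})$, so the mean-value identity $E_{(pq)^G_\alpha}[t(x)]=F_{pq}'(\alpha)$ and the finiteness of $B_{F_{pq}}(\cdot:\alpha)$ both hold. Once this is in place the argument is a one-line scalar cancellation, and the geometric picture of Figure~\ref{fig:exparc}, the arc crossing the bisector at a single point, is made rigorous.
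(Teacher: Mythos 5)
Your proof is correct and follows essentially the same route as the paper: membership of $(pq)^G_\alphastar$ in both sets comes from Proposition~\ref{prop:COC}, and uniqueness on the arc comes from restricting the bisector to the arc, reducing the condition $B_{F_{pq}}(1:\alpha)=B_{F_{pq}}(0:\alpha)$ to $F_{pq}'(\alpha)=0$ via $F_{pq}(0)=F_{pq}(1)=0$, and invoking the strict convexity of $F_{pq}$ — precisely the ingredients the paper uses (it presents the proposition as a geometric restatement of Proposition~\ref{prop:COC} combined with the uniqueness of $\alphastar$ from Proposition~\ref{prop:CIalphaunique} and the proof of Proposition~\ref{eq:KLDfinite}). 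Your explicit exclusion of the endpoints $\alpha\in\{0,1\}$ and the remark that the Bregman/mean-value identities only need $\alpha\in(0,1)\subseteq\ri(\Theta_{pq})$ are minor refinements that the paper leaves implicit.
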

The point $(pq)^G_\alphastar$ has been called the {\em Chernoff point} in~\cite{CI-2013}.

Proposition~\ref{prop:GI} allows us to design a  dichotomic search to numerically approximate $\alphastar$ as 
reported in pseudo-code in Algorithm~1 (see also the illustration in Figure~\ref{fig:Dicho}).

\noindent (Algorithm~1). Dichotomic search for approximating the Chernoff information by approximating the optimal skewing parameter value $\tilde\alpha\approx\alphastar$ and reporting $D_C[p:q]\approx D_\KL[(pq)^G_{\tilde\alpha}:p]$. The search requires $\ceil{\log_2\frac{1}{\epsilon}}$ iterations to guarantee $|\alphastar-\tilde\alpha|\leq \epsilon$.

 \begin{algorithm} 
                \SetAlgoLined
                \SetKwInOut{Input}{input}
                \SetKwInOut{Ret}{return}
                \Input{Two densities $p$, $q$ of $L^1(\mu)$, and a numerical precision threshold $\epsilon>0$}
								$\alpha_m=0$\; $\alpha_M=1$\;
                    \While{$|\alpha_M-\alpha_m|>\epsilon$}{
                    $\alpha=\frac{\alpha_m+\alpha_M}{2}$\;
            \If{$D_\KL[(pq)^G_\alpha:p]>D_\KL[(pq)^G_\alpha:q]$}{$\alpha_m=\alpha$\; \tcp{See Figure~\ref{fig:Dicho} for an illustration and Proposition~\ref{prop:GI}}}
						\Else{$\alpha_M=\alpha$\;}
															} 
                    \Return{ 
                     $D_\KL[(pq)^G_\alpha:p]$\;
                    }
            \end{algorithm}

\begin{figure}
\centering
\includegraphics[width=0.38\textwidth]{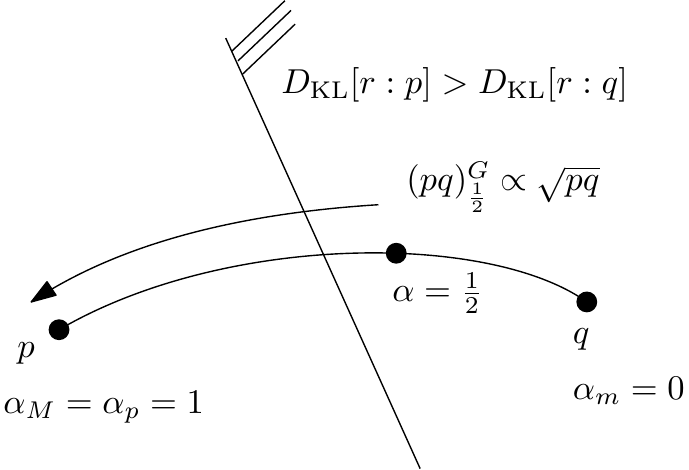}

\caption{Illustration of the dichotomic search for approximating the optimal skewing parameter $\alphastar$ to within some prescribed numerical precision $\epsilon>0$.
\label{fig:Dicho}}
\end{figure}

\begin{Remark}
We do not need to necessarily handle normalized densities $p$ and $q$ since we have for $\alpha\in\bbR\backslash\{0,1\}$:
$$
(pq)^G_\alpha=(\tilde{p}\tilde{q})^G_\alpha,
$$
where $p(x)=\frac{\tilde{p}(x)}{Z_p}$ and $q(x)=\frac{\tilde{q}(x)}{Z_q}$ with $\tilde{p}$ and $\tilde{q}$ denoting 
the computationally-friendly unnormalized positive densities.
This property of geometric mixtures is used in Annealed Importance Sampling~\cite{neal2001annealed,grosse2013annealing} (AIS), and
for designing an asymptotically efficient estimator for computationally-intractable parametric densities~\cite{takenouchi2019parameter} $\tilde{q}_\theta$ (e.g., distributions learned by Boltzmann machines).
\end{Remark}

\subsection{Dual parameterization of LREFs}
The densities $(pq)_\alpha^G$ of a LREF can also be parameterized by their dual moment parameter~\cite{barndorff2014information} (or mean parameter): 
$$
\beta=\beta(\alpha):=E_{(pq)_\alpha^G}[t(x)]=E_{(pq)_\alpha^G}\left[\log\frac{p(x)}{q(x)}\right].
$$
When the LREF is regular (and therefore steep~\cite{sundberg2019statistical}), we have $\beta(\alpha)=F_{pq}'(\alpha)$ and 
$\alpha={F_{pq}^*}'(\beta)$, where $F_{pq}^*$ denotes the Legendre transform of $F_{pq}$.
At the optimal value $\alphastar$, we have $F'_{pq}(\alphastar)=0$. 
Therefore an equivalent condition of optimality is
$$
\beta(\alphastar)=F_{pq}'(\alphastar)=0=E_{(pq)_\alphastar^G}\left[\log\frac{p(x)}{q(x)}\right].
$$

Notice that when $[0,1]\subset \Theta^\circ$, we have finite forward and reverse Kullback--Leibler divergences:
\begin{itemize}
\item $\alpha=1$, we have $(pq)_1^G=p$ and 
$$
\beta(1)=E_p\left[\log\frac{p(x)}{q(x)}\right]=D_\KL[p:q]=F_{pq}'(1)>0.
$$
\item $\alpha=0$, we have $(pq)_0^G=q$ and
$$
\beta(0)=E_q\left[\log\frac{p(x)}{q(x)}\right]=-D_\KL[q:p]=F_{pq}'(0)<0.
$$
\end{itemize} 

Since $F_{pq}(\alpha)$ is strictly convex, we have $F_{pq}''(\alpha)>0$ 
and $F_{pq}'$ is strictly increasing with $F_{pq}'(0)=-D_\KL[q:p]<0$ and
$F_{pq}'(1)=D_\KL[p:q]>0$. The value $\alphastar$ is thus the unique value such that $F_{pq}'(\alphastar)=0$.

\begin{Proposition}[Dual optimality condition for the Chernoff information]\label{prop:equivcond}
The unique Chernoff information optimal skewing parameter $\alphastar$ is such that
$$
\OC_\alpha: D_\KL[(pq)^G_\alphastar:p]=D_\KL[(pq)^G_\alphastar:q] \Leftrightarrow \OC_\beta: \beta(\alphastar)=E_{(pq)_\alphastar^G}\left[\log\frac{p(x)}{q(x)}\right]=0.
$$
\end{Proposition}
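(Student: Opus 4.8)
The plan is to reduce both optimality conditions $\OC_\alpha$ and $\OC_\beta$ to the single scalar condition $F_{pq}'(\alphastar)=0$, which has already been identified as characterizing the optimal skewing parameter. First I would invoke the Bregman representation of the sided Kullback--Leibler divergences established in the proof of Proposition~\ref{eq:KLDfinite}. Since $p=(pq)^G_1$ and $q=(pq)^G_0$ are the extremities of the LREF, for any $\alpha\in(0,1)=\ri(\Theta)$ I have
$$
D_\KL[(pq)^G_\alpha:p]=B_{F_{pq}}(1:\alpha),\qquad D_\KL[(pq)^G_\alpha:q]=B_{F_{pq}}(0:\alpha).
$$
These identities remain valid even when $0$ or $1$ lies on the boundary of $\Theta$, because the Bregman divergence $B_{F_{pq}}(\cdot:\alpha)$ is always evaluated at the interior point $\alpha$, where $F_{pq}$ is differentiable.

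Next I would expand both Bregman divergences using the definition $B_{F_{pq}}(\alpha_1:\alpha_2)=F_{pq}(\alpha_1)-F_{pq}(\alpha_2)-(\alpha_1-\alpha_2)F_{pq}'(\alpha_2)$ together with the boundary values $F_{pq}(0)=F_{pq}(1)=0$. Subtracting the two expressions, the common term $-F_{pq}(\alpha)$ cancels and the difference telescopes to
$$
D_\KL[(pq)^G_\alpha:p]-D_\KL[(pq)^G_\alpha:q]=-(1-\alpha)F_{pq}'(\alpha)-\alpha F_{pq}'(\alpha)=-F_{pq}'(\alpha).
$$
Because the LREF is regular and hence steep, I have $\beta(\alpha)=F_{pq}'(\alpha)$, so this difference equals exactly $-\beta(\alpha)$. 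Consequently, $\OC_\alpha$ (the equality of the two left-sided divergences) holds at $\alphastar$ if and only if $F_{pq}'(\alphastar)=0$, which is in turn equivalent to $\beta(\alphastar)=0$, that is $\OC_\beta$. This establishes the claimed equivalence in both directions simultaneously.

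For the uniqueness of the value singled out by either condition, I would appeal to the strict convexity and real analyticity of $F_{pq}$: its derivative $F_{pq}'$ is strictly increasing, with $F_{pq}'(0)=-D_\KL[q:p]<0$ and $F_{pq}'(1)=D_\KL[p:q]>0$ whenever $p\neq q$ $\mu$-almost everywhere, so by the intermediate value theorem there is a unique $\alphastar\in(0,1)$ with $F_{pq}'(\alphastar)=0$. The only delicate point is justifying the Bregman representation of $D_\KL[(pq)^G_\alpha:p]$ and $D_\KL[(pq)^G_\alpha:q]$ at the arc extremities; once the steepness identity $\beta(\alpha)=F_{pq}'(\alpha)$ is secured, the equivalence is a one-line cancellation rather than a genuine obstacle.
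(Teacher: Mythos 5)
Your proof is correct and follows essentially the same route as the paper: the Bregman representations $D_\KL[(pq)^G_\alpha:p]=B_{F_{pq}}(1:\alpha)$ and $D_\KL[(pq)^G_\alpha:q]=B_{F_{pq}}(0:\alpha)$, the boundary values $F_{pq}(0)=F_{pq}(1)=0$, the steepness identity $\beta(\alpha)=F_{pq}'(\alpha)$, and the strict monotonicity of $F_{pq}'$ with $F_{pq}'(0)<0<F_{pq}'(1)$ are exactly the ingredients the paper uses to reduce both conditions to the unique zero of $F_{pq}'$. Your explicit identity $D_\KL[(pq)^G_\alpha:p]-D_\KL[(pq)^G_\alpha:q]=-F_{pq}'(\alpha)$, valid for every interior $\alpha$, is a marginally tidier packaging that yields both implications at once, but it is not a genuinely different argument.
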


As a side remark, let us notice that the Fisher information of a likelihood ratio exponential family $\calE_{pq}$ is 
$$
I_{pq}(\alpha)=-E_{(pq)_\alpha^G}[(\log (pq)_\alpha^G)'']=F''_{pq}(\alpha)>0,
$$
and $=F''_{pq}(\alpha){F^*}''_{pq}(\beta)=1$.
 
\section{Chernoff information between densities of an exponential family}\label{sec:ChernoffIG}

\subsection{General case}

We shall now consider that the densities $p$ and $q$ (with respect to measure $\mu$) belong to a same exponential family~\cite{barndorff2014information}:
$$
\calE=\left\{ P_\lambda \st \frac{\dP_\lambda}{\dmu}=p_\lambda(x)=\exp(\theta(\lambda)^\top t(x)-F(\theta(\lambda))),\quad \lambda\in\Lambda\right\},
$$
where $\theta(\lambda)$ denotes the natural parameter associated with the ordinary parameter $\lambda$, $t(x)$ the sufficient statistic vector and $F(\theta(\lambda))$ the log-normalizer.
When $\theta(\lambda)=\lambda$ and $t(x)=x$, the exponential family is called a natural exponential family (NEF).
The exponential family $\calE$ is defined by $\mu$ and $t(x)$, hence we may write when necessary $\calE=\calE_{\mu,t}$.

\begin{Example}
The set of univariate Gaussian distributions 
$$
\mathcal{N}=\{p_{\mu,\sigma^2}(x) \st \lambda=(\mu,\sigma^2) \in \Lambda=\bbR \times \bbR_{++}\}
$$ 
forms an exponential family with the following decomposition terms:
\begin{eqnarray*}
\lambda &=& (\mu,\sigma^2)\in\Lambda=\bbR\times\bbR_{++},\\
\theta(\lambda) &=& \left(\theta_1=\frac{\mu}{\sigma^2},\theta_2=-\frac{1}{2\sigma^2}\right)\in \Theta=\bbR\times\bbR_{--},\\
t(x) &=& (x,x^2),\\
F(\theta)&=&-\frac{\theta_1^2}{4\theta_2}+\frac{1}{2}\log\left(-\frac{\pi}{\theta_2}\right),
\end{eqnarray*}
where $\bbR_{++}=\{x\in\bbR \st x>0\}$ and $\bbR_{--}=\{x\in\bbR \st x<0\}$ denotes the set of positive real numbers and negative real numbers, respectively.
Letting $v=\sigma^2$ be the variance parameter, we get the equivalent natural parameters $\left(\frac{\mu}{v},-\frac{1}{2v}\right)$.
The log-normalizer can be written using the $(\mu,v)$-parameterization as
$F(\mu,v)=\frac{1}{2}\log(2\pi v)+\frac{\mu^2}{2v}$ and $\theta=\left(\theta_1=\frac{\mu}{v},-\frac{1}{2v}\right)$.
See Appendix~\ref{sec:unigaussian} for further details concerning this normal exponential family.
\end{Example}

Notice that we can check easily that the LREF between two densities of an exponential family
forms a 1D sub-exponential family of the exponential family:
\begin{eqnarray*}
p_{\theta_1}(x)^\alpha\, p_{\theta_2}(x)^{1-\alpha} &\propto& \exp(\inner{\alpha\theta_1+(1-\alpha)\theta_2}{t(x)}-\alpha F(\theta_1)-(1-\alpha)F(\theta_2)),\\
&=&  p_{\alpha\theta_1+(1-\alpha)\theta_2}(x) \exp(F(\alpha\theta_1+(1-\alpha)\theta_2)-\alpha F(\theta_1)-(1-\alpha)F(\theta_2))),\\
&=& p_{\alpha\theta_1+(1-\alpha)\theta_2}(x)  \exp(-J_{F,\alpha}(\theta_1:\theta_2)),
\end{eqnarray*}
where $J_F$ denote the Jensen divergence induced by $F$.
 
The optimal skewing value condition of the Chernoff information between two categorical distributions~\cite{cover1999elements} was extended  to densities $p_{\theta_1}$ and $p_{\theta_2}$ of an exponential family in~\cite{CI-2013}. 
The family of categorical distributions with $d$ choices forms an exponential family with natural parameter of dimension $d-1$.
Thus Proposition~\ref{prop:cief} generalizes the analysis in~\cite{cover1999elements}.

Let $p=p_{\theta_1}$ and $q=p_{\theta_2}$. Then we have the property that exponential families are closed under geometric mixtures:
$$
(p_{\theta_1}p_{\theta_2})_\alpha^G = p_{\alpha\theta_1+(1-\alpha)\theta_2}.
$$
Since the natural parameter space $\Theta$ is convex, we have $\alpha\theta_1+(1-\alpha)\theta_2\in\Theta$.

The KLD between two densities $p_{\theta_1}$ and $p_{\theta_2}$ of a regular exponential family $\calE$ amounts to a reverse Bregman divergence for the log-normalizer of $\calE$:
$$
D_\KL[p_{\theta_1}:p_{\theta_2}]=B_F(\theta_2:\theta_1),
$$
where $B_F(\theta_2:\theta_1)$ denotes the Bregman divergence:
$$
B_F(\theta_2:\theta_1)=F(\theta_2)-F(\theta_1)-(\theta_2-\theta_1)^\top \nabla F(\theta_1).
$$

Thus when the exponential family $\calE$ is regular, both the forward and reverse KLD are finite,
 and we can rewrite Proposition~\ref{prop:COC} to characterize $\alphastar$ as follows:
\begin{equation}\label{eq:ocbd}
\OC_\EF:\quad B_F(\theta_1:\theta_\alphastar)=B_F(\theta_2:\theta_\alphastar),
\end{equation}
where $\theta_\alphastar=\alphastar\theta_1+(1-\alphastar)\theta_2$.

The Legendre-Fenchel transform of $F(\theta)$ yields the convex  conjugate 
$$
F^*(\eta)=\sup_{\theta\in\Theta} \{\theta^\top\eta-F(\theta)\}
$$ 
with 
$\eta(\theta)=\nabla F(\theta)$. 
Let $H=\{\eta(\theta) \st \theta\in\Theta\}$ denote the dual moment parameter space also called domain of means.
The Legendre transform associates to $(\Theta,F(\theta))$ the convex conjugate $(H,F^*(\eta))$.
In order for $(H,F^*(\eta))$ to be of the same well-behaved type of $(\Theta,F(\theta))$, we shall consider convex functions $F(\theta)$ which are steep, meaning that their gradient diverges when nearing the boundary $\bd(\Theta)$~\cite{LegendreType-1967} and thus ensures that domain $H$ is also convex. Steep convex functions are said of Legendre-type, and $((\Theta,F(\theta))^*)^*=(\Theta,F(\theta))$ (Moreau biconjugation theorem which shows that the Legendre transform is involutive).
For Legendre-type functions, there is a one-to-one mapping between parameters $\theta(\eta)$ and parameters $\eta(\theta)$ as follows:
$$
\theta(\eta)=\nabla F^*(\eta)=(\nabla F)^{-1}(\eta),
$$
and
$$
\eta(\theta)=\nabla F(\theta)=(\nabla F^*)^{-1}(\theta).
$$

Exponential families with log-normalizers of Legendre-type are called steep exponential families~\cite{barndorff2014information}.
All regular exponential families are steep, and the maximum likelihood estimator in steep exponential families exists and is unique~\cite{sundberg2019statistical} (with the likelihood equations corresponding to the method of moments for the sufficient statistics).
The set of inverse Gaussian distributions form a non-regular but steep exponential family, and the set of  
singly truncated normal distributions form a non-regular and non-steep exponential family~\cite{del1994singly} (but the exponential family of doubly truncated normal distributions is regular and hence steep).

For Legende-type convex generators $F(\theta)$, we can express the Bregman divergence $B_F(\theta_1:\theta_2)$ using the dual Bregman divergence: $B_F(\theta_1:\theta_2)=B_{F^*}(\eta_2:\eta_1)$ since there is a one-to-one correspondence between $\eta=\nabla F(\theta)$ and $\theta=\nabla F^*(\eta)$.

 For Legendre-type generators $F(\theta)$, the Bregman divergence $B_F(\theta_1:\theta_2)$ can be rewritten as the following Fenchel-Young divergence:
$$
B_F(\theta_1:\theta_2)=F(\theta_1)+F^*(\eta_2)-\theta_1^\top\eta_2:=Y_{F,F^*}(\theta_1:\eta_2).
$$

\begin{Proposition}[KLD between densities of a regular (and steep) exponential family]\label{prop:KLBD}
The KLD between two densities $p_{\theta_1}$ and $p_{\theta_2}$ of a regular and steep exponential family can be obtained equivalently as
$$
D_\KL[p_{\theta_1}:p_{\theta_2}]=B_F(\theta_2:\theta_1)=Y_{F,F^*}(\theta_2:\eta_1)=Y_{F^*,F}(\eta_1:\theta_2)=B_{F^*}(\eta_1:\eta_2),
$$
where $F(\theta)$ and its convex conjugate $F^*(\eta)$ are Legendre-type functions.
\end{Proposition}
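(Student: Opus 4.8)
The plan is to treat the leftmost equality $D_\KL[p_{\theta_1}:p_{\theta_2}]=B_F(\theta_2:\theta_1)$ as already established: it is precisely the reverse-Bregman identity proved for regular exponential families in Proposition~\ref{eq:KLDfinite} and restated just above this statement. Consequently the remaining work is purely convex-analytic and involves no probability at all; I would prove the chain $B_F(\theta_2:\theta_1)=Y_{F,F^*}(\theta_2:\eta_1)=Y_{F^*,F}(\eta_1:\theta_2)=B_{F^*}(\eta_1:\eta_2)$ using only that $F$ is of Legendre type. Throughout I set $\eta_i=\nabla F(\theta_i)$ and use the bijection $\theta_i=\nabla F^*(\eta_i)$ supplied by steepness, together with the Fenchel--Young equality $F(\theta_i)+F^*(\eta_i)=\theta_i^\top\eta_i$, which holds exactly because $\eta_i$ and $\theta_i$ are a dual pair.

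First I would unfold $B_F(\theta_2:\theta_1)=F(\theta_2)-F(\theta_1)-(\theta_2-\theta_1)^\top\nabla F(\theta_1)$, substitute $\nabla F(\theta_1)=\eta_1$, and regroup the terms depending only on $\theta_1$: the combination $\theta_1^\top\eta_1-F(\theta_1)$ equals $F^*(\eta_1)$ by the Fenchel--Young equality. What survives is $F(\theta_2)+F^*(\eta_1)-\theta_2^\top\eta_1$, which is by definition $Y_{F,F^*}(\theta_2:\eta_1)$. This yields the first of the three equalities and is the only place where the Legendre-type hypothesis is genuinely invoked.

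The middle equality $Y_{F,F^*}(\theta_2:\eta_1)=Y_{F^*,F}(\eta_1:\theta_2)$ is then immediate from the symmetry of the Fenchel--Young expression: writing $Y_{F^*,F}(\eta_1:\theta_2)=F^*(\eta_1)+(F^*)^*(\theta_2)-\eta_1^\top\theta_2$ and invoking the Moreau biconjugation $(F^*)^*=F$ (valid for Legendre-type $F$, as noted above), the two sides coincide term by term since $\theta_2^\top\eta_1=\eta_1^\top\theta_2$. Finally, to reach $B_{F^*}(\eta_1:\eta_2)$ I would run the first step backwards in the dual variables: expand $B_{F^*}(\eta_1:\eta_2)=F^*(\eta_1)-F^*(\eta_2)-(\eta_1-\eta_2)^\top\nabla F^*(\eta_2)$, use $\nabla F^*(\eta_2)=\theta_2$, and collapse $\eta_2^\top\theta_2-F^*(\eta_2)=F(\theta_2)$ by the dual Fenchel--Young equality, recovering $F^*(\eta_1)+F(\theta_2)-\eta_1^\top\theta_2=Y_{F^*,F}(\eta_1:\theta_2)$.

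The only subtlety---and the single point where the hypotheses matter---is the passage from the Fenchel--Young inequality $F(\theta)+F^*(\eta)\geq\theta^\top\eta$ to the \emph{equality} used above; this holds exactly when $\eta=\nabla F(\theta)$, which is guaranteed, together with a well-defined inverse $\theta=\nabla F^*(\eta)$ and a convex domain of means $H$, precisely by assuming $F$ regular and steep, i.e.\ of Legendre type. I expect this to be the conceptual crux, while the regrouping of the linear terms is routine algebra.
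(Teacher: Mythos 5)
Your proposal is correct and follows essentially the same route as the paper: it takes the identity $D_\KL[p_{\theta_1}:p_{\theta_2}]=B_F(\theta_2:\theta_1)$ from the earlier proposition on finite sided KLDs, and then derives the remaining equalities exactly as the paper does in the surrounding text, via the Fenchel--Young equality $F(\theta)+F^*(\eta)=\theta^\top\eta$ at dual pairs $\eta=\nabla F(\theta)$, Moreau biconjugation, and the one-to-one correspondence $\theta=\nabla F^*(\eta)$ furnished by the Legendre-type (regular and steep) hypothesis. The only difference is that you spell out the linear-term regrouping that the paper leaves implicit, which is a welcome but inessential elaboration.
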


Figure~\ref{fig:ClassificationEF} illustrates the taxonomy of regularity and steepness of exponential families by a Venn diagram. 

\begin{figure}
\centering
\includegraphics[width=0.725\textwidth]{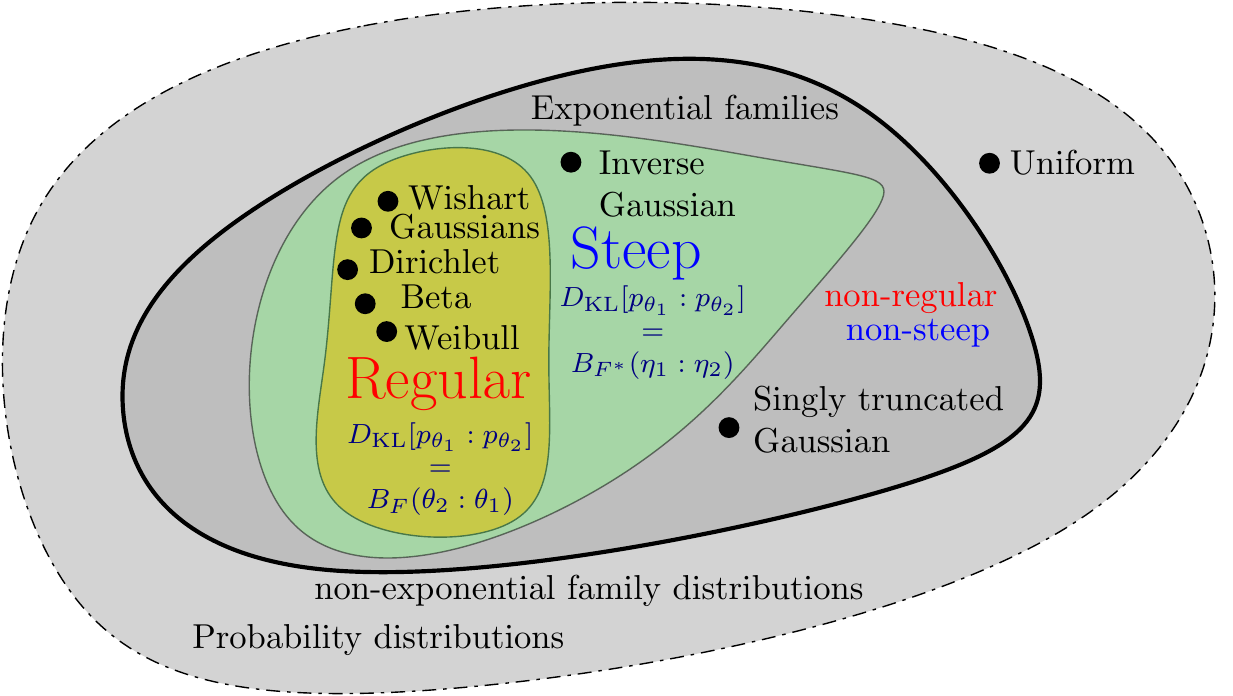}

\caption{Taxonomy of exponential families: Regular (and always steep) or steep (but not necessarily regular).
The Kullback-Leibler divergence between two densities of a regular exponential family amounts to dual Bregman divergences.
\label{fig:ClassificationEF}}
\end{figure}

It follows that the optimal condition of Eq.~\ref{eq:ocbd} can be restated as
\begin{equation}\label{eq:ocfy}
\OC_\YF:\quad Y_{F,F^*}(\theta_1:\eta_\alphastar)=Y_{F,F^*}(\theta_2:\eta_\alphastar),
\end{equation}
where $\eta_\alphastar=\nabla F^{-1}(\alphastar\theta_1+(1-\alphastar)\theta_2)$.
From the equality of Eq.~\ref{eq:ocfy}, we get the following simplified optimality condition:
\begin{equation}\label{eq:oc}
\OC_\SEF:\quad (\theta_2-\theta_1)^\top \eta_\alphastar=F(\theta_2)-F(\theta_1),
\end{equation}
where $\eta_\alphastar=\nabla F(\alphastar\theta_1+(1-\alphastar)\theta_2)$.

\begin{Remark}
We can recover ($\OC_\SEF$) by instantiating the equivalent condition 
$E_{p_{\bartheta_\alphastar}}\left[\log\frac{p_{\theta_1}}{p_{\theta_2}}\right]=0$.
Indeed, since $\log\frac{p_{\theta_1}}{p_{\theta_2}}=(\theta_1-\theta_2)^\top t(x)-F(\theta_1)+F(\theta_2)$, we get
\begin{eqnarray*}
E_{p_{\bartheta_\alphastar}}[(\theta_1-\theta_2)^\top t(x)-F(\theta_1)+F(\theta_2)] &=& 0,\\
(\theta_1-\theta_2)^\top \bareta_\alphastar=F(\theta_1)-F(\theta_2).
\end{eqnarray*}
\end{Remark}

Since the $\alpha$-skewed Bhattacharyya distance amounts to a $\alpha$-skewed Jensen divergence~\cite{nielsen2011burbea},
 we get the Chernoff information as
\begin{eqnarray*}
D_C[p_{\lambda_1}:p_{\lambda_2}] &=& J_{F,\alphastar}(\theta(\lambda_1):\theta(\lambda_2)),\\
&=& B_F(\theta_1:\theta_\alphastar) = B_F(\theta_2:\theta_\alphastar),
\end{eqnarray*}
where $J_{F,\alpha}(\theta_1:\theta_2)$ is the Jensen divergence:
$$
J_{F,\alpha}(\theta_1:\theta_2)=\alpha F(\theta_1)+(1-\alpha)F(\theta_2)-F(\alpha\theta_1+(1-\alpha)\theta_2).
$$

Notice that we have the induced LREF with log-normalizer expressed as the negative Jensen divergence induced the log-normalizer of $\calE$:
$$
F_{p_{\theta_1}p_{\theta_2}}(\alpha)=-\log \rho_\alpha[p_{\theta_1}:p_{\theta_2}]=-J_{F,\alpha}(\theta_1:\theta_2).
$$

We summarize the result in the following proposition:

\begin{Proposition}\label{prop:cief}
Let $p_{\lambda_1}$ and $p_{\lambda_2}$ be two densities of a regular exponential family $\calE$ with natural parameter $\theta(\lambda)$ and log-normalizer $F(\theta)$. Then the Chernoff information is
$$
D_C[p_{\lambda_1}:p_{\lambda_2}]=J_{F,\alphastar}(\theta(\lambda_1):\theta(\lambda_2))=B_F(\theta_1:\theta_\alphastar) = B_F(\theta_2:\theta_\alphastar),
$$ 
where $\theta_1=\theta(\lambda_1)$, $\theta_2=\theta(\lambda_2)$, and the optimal skewing parameter $\alphastar$ is unique and satisfies the following optimality condition:
\begin{equation}\label{eq:OCEF}
\OC_{\EF}:\quad (\theta_2-\theta_1)^\top \eta_\alphastar=F(\theta_2)-F(\theta_1),
\end{equation}
where $\eta_\alphastar=\nabla F(\alphastar\theta_1+(1-\alphastar)\theta_2)=E_{p_{\alphastar\theta_1+(1-\alphastar)\theta_2}}[t(x)]$.
\end{Proposition}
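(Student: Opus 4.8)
The plan is to assemble the proposition from the identities already established in the preceding discussion, so the proof is essentially a consolidation together with one first-order computation. First I would invoke the fact that the $\alpha$-skewed Bhattacharyya distance between two densities of an exponential family equals the $\alpha$-skewed Jensen divergence, $D_{B,\alpha}(p_{\theta_1}:p_{\theta_2})=J_{F,\alpha}(\theta_1:\theta_2)$~\cite{nielsen2011burbea}, so that by the definition of the Chernoff information as the maximal skewed Bhattacharyya distance we immediately obtain $D_C[p_{\lambda_1}:p_{\lambda_2}]=\max_{\alpha\in(0,1)} J_{F,\alpha}(\theta_1:\theta_2)$.

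Next I would recall that the induced LREF has log-normalizer $F_{p_{\theta_1}p_{\theta_2}}(\alpha)=-J_{F,\alpha}(\theta_1:\theta_2)$, which is strictly convex and real-analytic as a log-normalizer of an exponential family, and which vanishes at the endpoints $\alpha=0$ and $\alpha=1$. Consequently $J_{F,\alpha}$ is strictly concave, vanishes at both endpoints, and therefore attains an interior maximum at a unique point $\alphastar\in(0,1)$; this is exactly Proposition~\ref{prop:CIalphaunique} specialized to the exponential-family case. The optimizer $\alphastar$ is thus characterized by the vanishing of the derivative on the open interval.

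The one genuine calculation is to differentiate the Jensen divergence with respect to $\alpha$. Writing $\theta_\alpha=\alpha\theta_1+(1-\alpha)\theta_2$ and using $\frac{\mathrm{d}}{\mathrm{d}\alpha}F(\theta_\alpha)=(\theta_1-\theta_2)^\top \nabla F(\theta_\alpha)$, I would obtain $\frac{\mathrm{d}}{\mathrm{d}\alpha}J_{F,\alpha}(\theta_1:\theta_2)=F(\theta_1)-F(\theta_2)-(\theta_1-\theta_2)^\top \eta_\alpha$ with $\eta_\alpha=\nabla F(\theta_\alpha)$. Setting this to zero at $\alphastar$ and rearranging yields precisely the optimality condition $\OC_{\EF}$ of Eq.~\ref{eq:OCEF}, while the identity $\eta_\alphastar=E_{p_{\theta_\alphastar}}[t(x)]$ follows from the standard mean-value identity $\nabla F(\theta)=E_{p_\theta}[t(x)]$ for regular exponential families.

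Finally I would verify the two Bregman-divergence expressions. Substituting $\theta_1-\theta_\alphastar=(1-\alphastar)(\theta_1-\theta_2)$ into $B_F(\theta_1:\theta_\alphastar)=F(\theta_1)-F(\theta_\alphastar)-(\theta_1-\theta_\alphastar)^\top\eta_\alphastar$ and using $\OC_{\EF}$ to replace $(\theta_1-\theta_2)^\top\eta_\alphastar$ by $F(\theta_1)-F(\theta_2)$, the expression collapses to $\alphastar F(\theta_1)+(1-\alphastar)F(\theta_2)-F(\theta_\alphastar)=J_{F,\alphastar}(\theta_1:\theta_2)$; the symmetric substitution $\theta_2-\theta_\alphastar=\alphastar(\theta_2-\theta_1)$ gives the same value for $B_F(\theta_2:\theta_\alphastar)$, closing the chain of equalities. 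I do not anticipate a serious obstacle: the only point requiring care is that the maximum is interior, which is guaranteed by the endpoint vanishing together with strict concavity, so that the unconstrained first-order condition is legitimate; everything else is routine algebra and an appeal to the cited regular-exponential-family facts.
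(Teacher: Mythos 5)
Your proof is correct, but it runs in the opposite direction from the paper's. The paper assembles Proposition~\ref{prop:cief} top-down from the general LREF machinery: it specializes Proposition~\ref{prop:COC} (equal KLDs from the Chernoff distribution to both endpoints) using closure of $\calE$ under geometric mixtures and the KLD--reverse-Bregman correspondence to obtain the Bregman-equality condition $\OC_\EF$ of Eq.~\ref{eq:ocbd}, and only then simplifies it to the linear condition of Eq.~\ref{eq:OCEF} by passing through the Fenchel--Young reformulation $\OC_\YF$ of Eq.~\ref{eq:ocfy}. You instead obtain Eq.~\ref{eq:OCEF} directly as the first-order stationarity condition of $\max_{\alpha\in(0,1)} J_{F,\alpha}(\theta_1:\theta_2)$ --- a computation that appears in the paper only as the directional-derivative remark in the one-dimensional-parameter subsection --- and then recover the Bregman identities $B_F(\theta_1:\theta_\alphastar)=B_F(\theta_2:\theta_\alphastar)=J_{F,\alphastar}(\theta_1:\theta_2)$ algebraically, by substituting $\theta_1-\theta_\alphastar=(1-\alphastar)(\theta_1-\theta_2)$ and $\theta_2-\theta_\alphastar=\alphastar(\theta_2-\theta_1)$ and invoking the optimality condition; the paper never writes this collapse out, since for it those equalities are the starting point rather than a consequence. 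Your route buys self-containedness: it needs only the Bhattacharyya--Jensen identity, strict convexity of $F$, the mean-value identity $\nabla F(\theta)=E_{p_\theta}[t(x)]$ for regular families, and one-variable calculus, with no appeal to Proposition~\ref{prop:COC} or to Legendre duality. The paper's route buys context: it exhibits the proposition as a specialization of the measure-theoretic LREF analysis (valid for arbitrary comparable densities, not just exponential families) and makes explicit the chain of dual optimality conditions $\OC_\EF$, $\OC_\YF$, $\OC_\SEF$ underlying the geometric picture of the Chernoff point. Your handling of the one delicate point --- that the maximum is interior, guaranteed by endpoint vanishing of $J_{F,\alpha}$ plus strict concavity, under the implicit standing assumption $\theta_1\not=\theta_2$ needed for Proposition~\ref{prop:CIalphaunique} --- is also correct.
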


Figure~\ref{fig:ChernoffPoint} illustrates geometrically the Chernoff point~\cite{CI-2013} which is the geometric mixture 
$(p_{\theta_1}p_{\theta_2})_\alphastar$ induced by two comparable probability measures $P_{\theta_1},P_{\theta_2}\ll \mu$. 

In information geometry~\cite{amari2016information}, the manifold of densities $M=\{p_\theta\st\theta\in\Theta\}$ of this exponential family is a dually flat space~\cite{amari2016information} 
 $\mathcal{M}=(\{p_\theta\},g_F(\theta)=\nabla^2 F(\theta),\nabla^m,\nabla^e)$ with respect to the exponential connection $\nabla^e$ and the mixture connection $\nabla^m$, where $g_F(\theta)$ is the Fisher information metric expressed in the $\theta$-coordinate system as $\nabla^2 F(\theta)$ (and in the dual moment parameter $\eta$ as $g_F(\eta)=\nabla^2 F^*(\eta)$).
Then the exponential geodesic $\nabla^e$ is flat and corresponds to the exponential arc of geometric mixtures when parameterized with the $\nabla^e$-affine coordinate system $\theta$. 

The left-sided Kullback-Voronoi bisector: 
$$
\Bi_\KL^{\mathrm{left}}(p_{\theta_1},p_{\theta_2})=\{p_\theta \st D_\KL[p_\theta:p_{\theta_1}]=D_\KL[p_\theta:p_{\theta_1}]\}
$$
corresponds to a Bregman right-sided bisector~\cite{BVD-2010} and is $\nabla^m$ flat (i.e., an affine subspace in the $\eta$-coordinate system):
$$
\Bi_F^{\mathrm{right}}({\theta_1},{\theta_2})=\{\theta \in\Theta\st B_F({\theta_1},\theta) = B_F({\theta_2},\theta)\}.
$$

The Chernoff information distribution $(p_{\theta_1}p_{\theta_2})_\alphastar^G$ is called the Chernoff point on this exponential family manifold (see Figure~\ref{fig:ChernoffPoint}).
Since the Chernoff point is unique and since in general statistical manifolds $(\mathcal{M},g,\nabla,\nabla^*)$ can be realized by statistical models~\cite{le2006statistical}, we deduce the following proposition of interest for information geometry~\cite{amari2016information}:

\begin{Proposition}\label{prop:iguniqueinter}
Let $(\calM,g,\nabla,\nabla^*)$ be a dually flat space with corresponding canonical divergence a Bregman divergence $B_F$.
Let $\gamma^e_{pq}(\alpha)$ and $\gamma^m_{pq}(\alpha)$ be a $e$-geodesic and $m$-geodesic passing through the points $p$ and $q$ of $\calM$, respectively.
Let $\Bi^m(p,q)$ and $\Bi^e(p,q)$  be the right-sided $\nabla^m$-flat and left-sided $\nabla^e$-flat Bregman bisectors, respectively. 
Then the intersection of $\gamma^e_{pq}(\alpha)$ with $\Bi^m(p,q)$ and the intersection of $\gamma^m_{pq}(\alpha)$ with $\Bi^e(p,q)$ are unique. The point $\gamma^e_{pq}(\alpha)\cap \Bi^m(p,q)$ is called the Chernoff point and the point $\gamma^m_{pq}(\alpha)\cap\Bi^e(p,q)$ is termed the reverse or dual Chernoff point.
\end{Proposition}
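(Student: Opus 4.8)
The plan is to reduce the abstract dually-flat statement to the concrete coordinate optimality condition $\OC_\SEF$ already derived, and then to dispatch the reverse point by Legendre duality. Using the realization of $(\calM,g,\nabla,\nabla^*)$ by the Bregman divergence $B_F$, I would work in the global $\nabla$-affine coordinate system $\theta$ — in which the $e$-geodesic $\gamma^e_{pq}$ is the straight line $\theta_\alpha=\alpha\theta_1+(1-\alpha)\theta_2$ — and in the dual $\nabla^*$-affine coordinate $\eta=\nabla F(\theta)$, in which the $m$-geodesic $\gamma^m_{pq}$ is the straight line $\eta_\alpha=\alpha\eta_1+(1-\alpha)\eta_2$. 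Because the canonical divergence is $B_F$, its potential $F$ is strictly convex with $\nabla^2 F\succ 0$ everywhere (its Hessian is the metric $g$), a fact I will use twice.

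First I would show the right-sided, $\nabla^m$-flat bisector $\Bi^m(p,q)=\Bi_F^{\mathrm{right}}(\theta_1,\theta_2)$ is, after a one-line cancellation in $B_F(\theta_1:\theta)=B_F(\theta_2:\theta)$, the affine hyperplane $\{\eta:(\theta_2-\theta_1)^\top\eta=F(\theta_2)-F(\theta_1)\}$ in the $\eta$-coordinates. Substituting $\theta_\alpha$ into this equation via $\eta=\nabla F(\theta_\alpha)$ recovers $\OC_\SEF$ verbatim and defines $g(\alpha):=(\theta_2-\theta_1)^\top\nabla F(\theta_\alpha)-(F(\theta_2)-F(\theta_1))$. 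The key identity is $g(\alpha)=-F_{pq}'(\alpha)$ for the induced LREF log-normalizer $F_{pq}(\alpha)=-J_{F,\alpha}(\theta_1:\theta_2)$, whence $g'(\alpha)=-(\theta_2-\theta_1)^\top\nabla^2 F(\theta_\alpha)(\theta_2-\theta_1)<0$ since $\theta_1\neq\theta_2$. Strict monotonicity of $g$ forces at most one zero along the entire geodesic; since $F_{pq}$ is strictly convex with $F_{pq}(0)=F_{pq}(1)=0$, it attains a unique interior minimizer $\alphastar\in(0,1)$ where $F_{pq}'(\alphastar)=g(\alphastar)=0$. Hence $\gamma^e_{pq}(\alpha)\cap\Bi^m(p,q)$ is the single point $\theta_\alphastar$, the Chernoff point.

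For the reverse point I would invoke the involutive Legendre duality $F^{**}=F$ (Moreau biconjugation), which interchanges $(\theta,F,\nabla,\Bi_F^{\mathrm{right}})$ with $(\eta,F^*,\nabla^*,\Bi_{F^*}^{\mathrm{right}})$ and keeps $F^*$ a strictly convex Legendre-type potential. Via the dual Bregman identity $B_F(\theta:\theta_i)=B_{F^*}(\eta_i:\eta)$, the left-sided, $\nabla^e$-flat bisector $\Bi^e(p,q)=\Bi_F^{\mathrm{left}}(\theta_1,\theta_2)$ becomes in the $\eta$-coordinates the right-sided bisector $\Bi_{F^*}^{\mathrm{right}}(\eta_1,\eta_2)$, i.e. the affine hyperplane $\{\theta:(\eta_2-\eta_1)^\top\theta=F^*(\eta_2)-F^*(\eta_1)\}$. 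Intersecting with the $m$-geodesic $\eta_\alpha$ and repeating the monotonicity argument with $F^*$ in place of $F$ yields a unique dual Chernoff point, proving the second claim without further computation.

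The genuinely delicate part is not uniqueness — which is immediate from strict monotonicity of $g$ — but the rigorous translation of the abstract objects ($e$/$m$-geodesics, $\nabla^e$/$\nabla^m$-flat bisectors) into the affine lines and hyperplanes above, which rests on the flatness/realization results cited earlier, together with securing existence of the intersection \emph{inside} the geodesic segment. In the regular and steep case this is the finiteness of the two sided divergences $F_{pq}'(0)=-D_\KL[q:p]<0<D_\KL[p:q]=F_{pq}'(1)$; in the non-regular case one of these endpoint derivatives may blow up, and I would instead lean on the robust Rolle-type fact that a strictly convex $F_{pq}$ with equal boundary values $F_{pq}(0)=F_{pq}(1)=0$ necessarily has a unique interior critical point, so the Chernoff point is always well defined on the open arc.
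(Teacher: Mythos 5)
Your proof is correct, but it takes a genuinely more self-contained route than the paper. The paper does not argue in coordinates at all: it deduces the proposition by combining the already-established uniqueness of the Chernoff point on an exponential family manifold (the LREF analysis: $F_{pq}$ strictly convex with $F_{pq}(0)=F_{pq}(1)=0$) with the external result that statistical manifolds, in particular dually flat spaces, can be realized by statistical models (the cited realization theorem of L\^e), so that the exponential-family statement transfers to the abstract setting. You instead exploit the fact that the hypothesis already hands you a global Bregman potential: working in the $\nabla$-affine coordinates $\theta$ and the dual coordinates $\eta$, you identify the right-sided bisector as an affine hyperplane in $\eta$, reduce the intersection with the $e$-geodesic to the scalar equation $g(\alpha)=-F_{pq}'(\alpha)=0$ (which is exactly the paper's condition $\OC_\SEF$), and obtain uniqueness from strict monotonicity of $g$ and existence from the Rolle-type argument; the dual claim then follows by Legendre duality with $F^*$ in place of $F$, which the paper leaves entirely implicit. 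What your approach buys: it bypasses the realization theorem (a heavy external ingredient that is arguably unnecessary here, since the proposition already assumes the canonical divergence is $B_F$), and it turns the uniqueness of the \emph{dual} Chernoff point into an actual argument rather than a symmetry remark. What the paper's approach buys: brevity, and a conceptual bridge from the abstract dually flat setting back to statistical models. One small observation: your closing caution about non-regular families and endpoint derivatives blowing up is not needed in this abstract setting, because $p$ and $q$ are points of $\calM$ at which the potential is smooth, so $F_{pq}'(0)$ and $F_{pq}'(1)$ are automatically finite; the strictly-convex-with-equal-boundary-values argument you give already covers every case.
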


\begin{figure}
\centering
\includegraphics[width=0.725\textwidth]{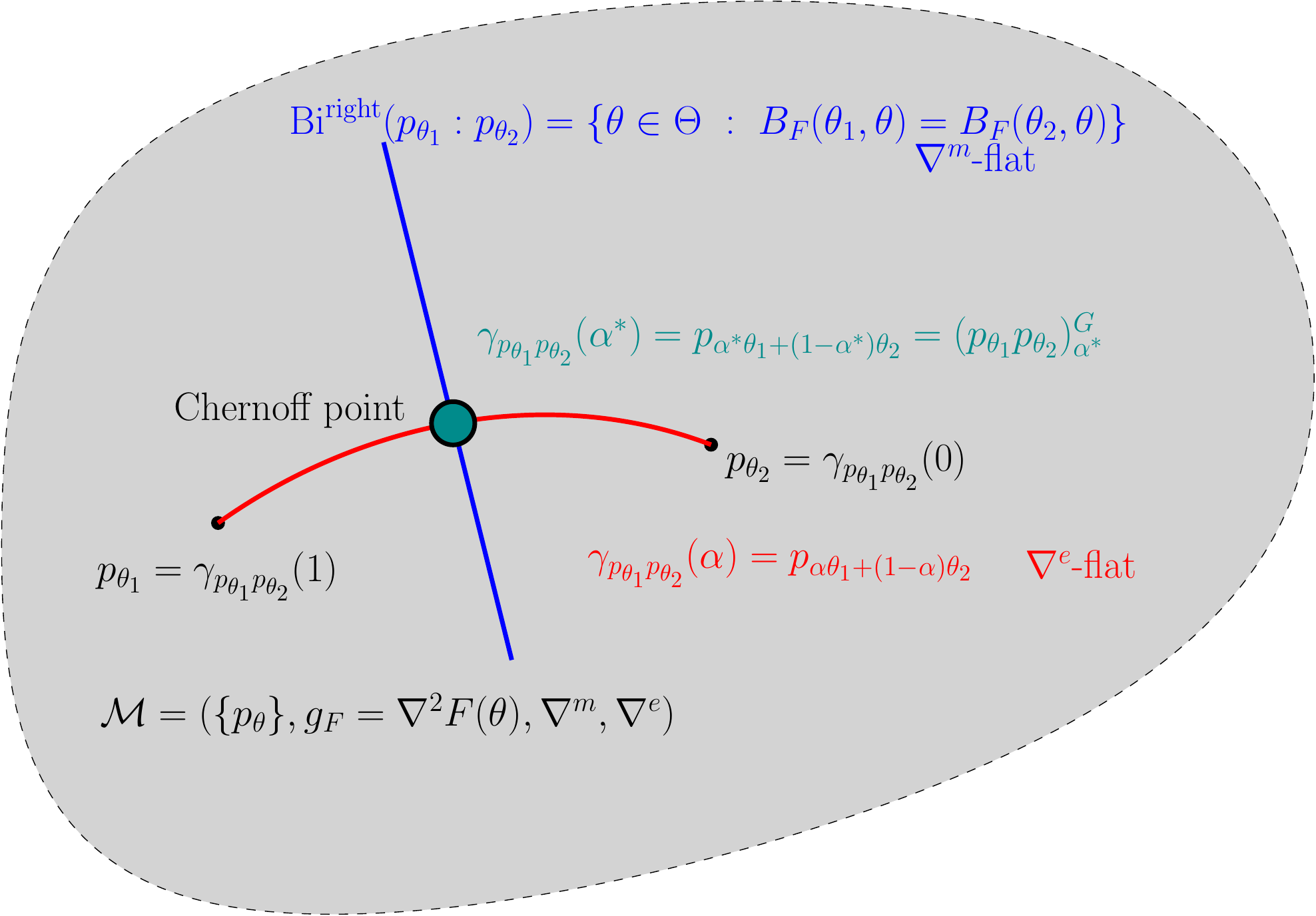}

\caption{The Chernoff information optimal skewing parameter $\alphastar$ for two densities $p_{\theta_1}$ and $p_{\theta_2}$ of some regular exponential family $\calE$ inducing an exponential family dually flat manifold $\mathcal{M}=(\{p_\theta\},g_F=\nabla^2 F(\theta),\nabla^m,\nabla^e)$ is characterized by the intersection of their $\nabla^e$-flat exponential geodesic  with their mixture bisector a $\nabla^m$-flat right-sided Bregman bisector.
\label{fig:ChernoffPoint}}
\end{figure}

\subsection{Case of one-dimensional parameters}
When the exponential family has one-dimensional natural parameter $\alpha\in\Theta\subset\bbR$, we thus get from $\OC_\SEF$:
$$
\eta_\alphastar=\frac{F(\alpha_2)-F(\alpha_1)}{\alpha_2-\alpha_1}.
$$
That is, $\alphastar$ can be obtained as the following closed-form formula:
\begin{equation}\label{eq:cf1d}
\alphastar=\frac{{F'}^{-1}\left(\frac{F(\alpha_2)-F(\alpha_1)}{\alpha_2-\alpha_1}\right)-\alpha_2}{\alpha_1-\alpha_2}.
\end{equation}

For multi-dimensional parameters $\theta$, we may consider the one-dimensional LREF $\calE_{p_{\theta_1}p_{\theta_2}}$ induced by $p_{\theta_1}$ and $p_{\theta_2}$ with $F_{\theta_1,\theta_2}(\alpha)=F((1-\alpha)\theta_1+\alpha \theta_2)$,
and write $F_{pq}'(\alpha)$ as the following directional derivative:
\begin{eqnarray}
\nabla_{\theta_2-\theta_1} F_{\theta_1,\theta_2}(\alpha) &\eqdef&
\lim_{\epsilon\rightarrow 0} 
\frac{F(\theta_1+(\epsilon+\alpha)(\theta_2-\theta_1))-F(\theta_1+\alpha(\theta_2-\theta_1))}{\epsilon},\\
&=& (\theta_2-\theta_1)^\top \nabla F(\theta_1+\alpha(\theta_2-\theta_1)),
\end{eqnarray}
using a first-order Taylor expansion.
Thus the optimality condition 
$$
\OC_{\SEF'} :\quad F_{\theta_1,\theta_2}'(\alpha)=0
$$ 
amounts to
\begin{equation}
\OC_\SEF:\quad (\theta_2-\theta_1)^\top \nabla F(\theta_1+\alphastar(\theta_2-\theta_1))=F(\theta_2)-F(\theta_1).
\end{equation}
This is equivalent to Eq. (8) of~\cite{CI-2013}.

\begin{Remark}
In general, we may consider multivariate Bregman divergences as univariate Bregman divergences:
We have 
\begin{equation}
B_F(\theta_1:\theta_2)=B_{F_{\theta_1,\theta_2}}(0:1),\quad\forall \theta_1,\theta_2\in\Theta
\end{equation}
where 
\begin{equation}
F_{\theta_1,\theta_2}(u) := F(\theta_1+u(\theta_2-\theta_1)).
\end{equation}
The functions $F_{\theta_1,\theta_2}$ are 1D Bregman generators (i.e., strictly convex and $C^1$), and we have the directional derivative
\begin{eqnarray*}
\nabla_{\theta_2-\theta_1} F_{\theta_1,\theta_2}(u) &=&
\lim_{\epsilon\rightarrow 0} 
\frac{F(\theta_1+(\epsilon+u)(\theta_2-\theta_1))-F(\theta_1+u(\theta_2-\theta_1))}{\epsilon},\\
&=& (\theta_2-\theta_1)^\top \nabla F(\theta_1+u(\theta_2-\theta_1)),
\end{eqnarray*}
Since $F_{\theta_1,\theta_2}(0)=F(\theta_1)$, $F_{\theta_1,\theta_2}(1)=F(\theta_2)$, and $F_{\theta_1,\theta_2}'(u)=\nabla_{\theta_2-\theta_1} F_{\theta_1,\theta_2}(u)$, it follows that
\begin{eqnarray*}
B_{F_{\theta_1,\theta_2}}(0:1)&=&F_{\theta_1,\theta_2}(0)-F_{\theta_1,\theta_2}(1)-(0-1)\nabla_{\theta_2-\theta_1} F_{\theta_1,\theta_2}(1),\\
&=& F(\theta_1)-F(\theta_2)+(\theta_2-\theta_1)^\top \nabla F(\theta_2) = B_F(\theta_1:\theta_2).
\end{eqnarray*}
Similarly, we can reparameterize Bregman divergences on a $k$-dimensional simplex by  $k$-dimensional Bregman divergences.
\end{Remark}

\begin{Remark}
Closing the loop: The Chernoff information although obtained from the     one-dimensional  likelihood ratio exponential family  yields as a corollary the general multi-parametric exponential families which as a special instance includes the one-dimensional exponential families (e.g, LREFs!).
\end{Remark}

\section{Forward and reverse Chernoff--Bregman divergences}\label{sec:CBD}

In this section, we shall define Chernoff-type symmetrizations of Bregman divergences inspired by the study of Chernoff information, and briefly mention  applications of these Chernoff--Bregman divergences in information theory.

\subsection{Chernoff--Bregman divergence}\label{sec:FCBD}

Let us define a Chernoff-like symmetrization of Bregman divergences~\cite{chen2008metrics2} different from the traditional Jeffreys--Bregman symmetrization:
\begin{eqnarray*}
B^J_F(\theta_1:\theta_2) &=& B_F(\theta_1:\theta_2)+B_F(\theta_2:\theta_1),\\
&=& (\theta_1-\theta_2)^\top (\nabla F(\theta_1)-\nabla F(\theta_2)),
\end{eqnarray*}
 or Jensen--Shannon-type symmetrization~\cite{nielsen2019jensen,nielsen2021variational} which yields a Jensen divergence~\cite{burbea1982convexity}:
\begin{eqnarray*}
B^\JS_F(\theta_1:\theta_2) &=& \frac{1}{2}\left(B_F\left(\theta_1:\frac{\theta_1+\theta_2}{2}\right)+B_F\left(\theta_2:\frac{\theta_1+\theta_2}{2}\right)\right),\\
&=& \frac{F(\theta_1)+F(\theta_2)}{2}-F\left(\frac{\theta_1+\theta_2}{2}\right)  =: J_F(\theta_1,\theta_2).
\end{eqnarray*}

\begin{Definition}[Chernoff--Bregman divergence]\label{def:CBD}
Let the Chernoff symmetrization of Bregman divergence $B_F(\theta_1;\theta_2)$ be the forward Chernoff--Bregman divergence $C_F(\theta_1,\theta_2)$ defined by 
\begin{equation}\label{eq:bcd}
C_F(\theta_1,\theta_2)=\max_{\alpha\in(0,1)} J_{F,\alpha}(\theta_1:\theta_2),
\end{equation}
where $J_{F,\alpha}$ is the $\alpha$-skewed Jensen divergence.
\end{Definition}

The optimization problem in Eq.~\ref{eq:bcd} may be equivalently rewritten~\cite{chen2008metrics2} as $\min_\theta R$ such that
both $B_F(\theta_1:\theta)\leq R$ and $B_F(\theta_2:\theta)\leq R$.
Thus the optimal value of $\alpha$ defines the circumcenter $\theta^*=\alpha\theta_1+(1-\alpha)\theta_2$ of 
the minimum enclosing right-sided Bregman sphere~\cite{nock2005fitting,nielsen2008smallest} and the Chernoff--Bregman  divergence:
$$
C_F(\theta_1,\theta_2)=\min_\theta \{B_F(\theta_1:\theta),B_F(\theta_2:\theta)\},
$$
corresponds to the radius of a minimum enclosing Bregman ball.
To summarize, this Chernoff symmetrization is a min-max symmetrization, and we have the following identities:
\begin{eqnarray*}
C_F(\theta_1,\theta_2) &=& \min_\theta \{B_F(\theta_1:\theta),B_F(\theta_2:\theta)\},\\
&=& \min_{\theta\in\Theta} \{\alpha B_F(\theta_1:\theta)+(1-\alpha) B_F(\theta_2:\theta)\},\\
&=& \max_{\alpha\in(0,1)} \{\alpha B_F(\theta_1:\alpha\theta_1+(1-\alpha)\theta_2)+(1-\alpha) B_F(\theta_2:\alpha\theta_1+(1-\alpha)\theta_2)\},\\
&=& \max_{\alpha\in(0,1)} J_{F,\alpha}(\theta_1:\theta_2).
\end{eqnarray*}
The second identity shows that the Chernoff symmetrization can be interpreted as a variational Jensen--Shannon-type divergence~\cite{nielsen2021variational}.

Notice that in general $C_F(\theta_1,\theta_2)\not=C_{F^*}(\eta_1,\eta_2)$ because the primal and dual geodesics do not coincide.
Those geodesics coincide only for symmetric Bregman divergences which are squared Mahalanobis divergences~\cite{BVD-2010}.

When $F(\theta)=F_\Shannon(\theta)=\sum_{i=1}^D \theta_i\log\theta_i$ (discrete Shannon negentropy), the Chernoff--Bregman  divergence is related to
the capacity of a discrete memoryless channel in information theory~\cite{chen2008metrics2,cover1999elements}.

Conditions for which $C_F(\theta_1,\theta_2)^a$ (with $a>0$) becomes a metric have been studied in~\cite{chen2008metrics2}: 
For example, $C_{F_\Shannon}^{\frac{1}{e}}$ is a metric distance~\cite{chen2008metrics2} (i.e., $a=\frac{1}{e}\simeq 0.36787944117$).
It is also known that the square root of the Chernoff distance between two univariate normal distributions is a metric distance~\cite{costa2016information}.

We can thus use the Bregman generalization of the Badoiu-Clarkson (BC) algorithm~\cite{nock2005fitting}  to compute an approximation of the smallest enclosing Bregman ball which in turn yields an approximation of the Chernoff--Bregman divergence:

Start from the initialization $\theta^{(0)}=\frac{\theta_1+\theta_2}{2}$, set $i\leftarrow 1$, and iterate as follows:
Let $f_i=\arg \max_{i\in\{1,2\}} B_F(\theta_i:\theta)$ and update the circumcenter by the following convex combination:
\begin{equation}\label{eq:walk}
\theta^{(i)}=\frac{i}{i+1}\theta^{(i-1)}+\frac{1}{i+1}\theta_{f_i}.
\end{equation}
This update corresponds to walking on the exponential arc $(\theta^{(i-1)}\theta_{f_i})^G$.
Notice that when there are only two points to compute their smallest enclosing Bregman ball, all the arcs $(\theta^{(i-1)}\theta_{f_i})^G$ are sub-arcs of the exponential arc $(\theta_1\theta_2)^G$.
See~\cite{nock2005fitting} for convergence results of this iterative algorithm.
Let us notice that Algorithm~1 approximates $\alphastar$ while the Bregman BC algorithm approximates in spirit $D_C(\theta_1,\theta_2)$ (and as a byproduct $\alphastar$).

\begin{Remark}
To compute the farthest point to the current circumcenter with respect to Bregman divergence, we need to find the sign of
$$
B_F(\theta_2:\theta)-B_F(\theta_1:\theta)=F(\theta_2)-F(\theta_1)-(\theta_2-\theta_1)\nabla F(\theta).
$$  
Thus we need to pre-calculate only once $F(\theta_1)$ and $F(\theta_2)$ which can be costly (e.g., $-\log \Mdet{\Sigma}$ functions need to be calculated only once when approximating the Chernoff information between Gaussians).
\end{Remark}

\subsection{Reverse Chernoff--Bregman divergence and universal coding}\label{sec:CIUC}
Similarly, we may define the reverse Chernoff--Bregman divergence by considering the minimum enclosing left-sided Bregman ball:
$$
C_F^R(\theta_1,\theta_2)=\min_\theta \{B_F(\theta:\theta_1),B_F(\theta:\theta_2)\}.
$$
Thus the reverse Bregman Chernoff divergence $D_C^R[\theta_1,\theta_2]=R^*$ is the radius of a minimum enclosing left-sided Bregman ball.

This reverse Chernoff--Bregman divergence finds application in universal coding in information theory (chapter 13 of~\cite{cover1999elements}, pp. 428-433):
Let $\calX=\{A_1, \ldots, A_d\}$ be a finite discrete alphabet of $d$ letters, and $X$ be a random variable with probability mass function 
$p$ on $\calX$.
Let $p_\lambda(x)$ denote the categorical distribution corresponding to $X$ 
so that $\Pr(X=A_i)=p_\lambda(A_i)$ with $\lambda=(\lambda^1,\ldots,\lambda^d)\in\bbR_{++}^d$ and $\sum_{i=1}^d \lambda^i=1$.
The Huffman codeword for $x\in\calX$ is of length $l(x)=-\log p(x)$ (ignoring integer ceil rounding), 
and the expected codeword length of $X$ is thus given by Shannon's entropy $H(X)=-\sum_x p(x)\log p(x)$.

If we code according to a distribution $p_{\lambda'}$ instead of the true distribution $p_\lambda$, the code is not optimal, and the
 redundancy $R(p_{\lambda},p_{\lambda'})$ is defined as 
the difference between the expected lengths of the codewords for $p_{\lambda'}$ and $p_{\lambda}$:
$$
R(p_{\lambda},p_{\lambda'})= (-E_{p_{\lambda}}[\log p_{\lambda'}(x)]  - (-E_{p_{\lambda}}[\log p_{\lambda}(x)])
= D_\KL[p_{\lambda}:p_{\lambda'}]\geq 0,
$$
where $D_\KL$ is the Kullback--Leibler divergence.

Now, suppose that the true distribution $p_{\lambda}$ belong to one of two prescribed distributions that we do not know: 
$p_\lambda \in \calP=\{p_{\lambda_1},  p_{\lambda_2} \}$.
Then we seek for the minimax redundancy:
$$
R^*= \min_{p_{\lambda}} \max_{i\in \{1,2\}} D_\KL[p_{\lambda_i}:p_{\lambda}].
$$
The distribution ${p_{\lambda^*}}$ achieving the minimax redundancy is the circumcenter of the right-centered KL ball enclosing the distributions $\calP$.
Using the natural coordinates $\theta=(\theta_1,\ldots,\theta_D)\in \bbR^D$ with $\theta_i=\log\frac{\lambda^i}{\lambda^d}$ of the  log-normalizer of the categorical  distributions (an exponential family of order $D=d-1$),
 we end up with calculating the smallest left-sided Bregman enclosing ball for the Bregman generator~\cite{EF-2009}: 
$F_{\mathrm{Categorical}}(\theta)=\log (1+\sum_{i=1}^D \exp\theta_i)$:
$$
R^* = \min_{\theta\in \in \bbR^D} \max_{i\in \{1,2\}}  B_{F_{\mathrm{categorical}}}(\theta:\theta_i).
$$
This latter minimax problem is unconstrained since $\theta\in\bbR^D=\bbR^{d-1}$.

\section{Chernoff information between Gaussian distributions}\label{sec:CIGauss}

\subsection{Invariance of Chernoff information under the action of the affine group}\label{sec:CIGaussInv}
The $d$-variate Gaussian density $p_\lambda(x)$ with parameter $\lambda=(\lambda_v=\mu,\lambda_M=\Sigma)$ where $\mu\in\bbR^d$ denotes the mean ($\mu=E_{p_\lambda}[x]$) and $\Sigma$ is a positive-definite covariance matrix ($\Sigma=\mathrm{Cov}_{p_\lambda}[X]$ for $X\sim p_\lambda$) is given by
$$
p_\lambda(x;\lambda) =  \frac{1}{(2\pi)^{\frac{d}{2}}\sqrt{|\lambda_M|}}  \exp\left(-\frac{1}{2} (x-\lambda_v)^\top \lambda_M^{-1} (x-\lambda_v)\right),
$$ 
where $|\cdot|$ denotes the matrix determinant.
The set of $d$-variate Gaussian distributions form a regular (and hence steep) exponential family with natural parameters
$\theta(\lambda)=\left(\lambda_M^{-1}\lambda_v,\frac{1}{2}\lambda_M^{-1}\right)$ and sufficient statistics $t(x)=(x,xx^\top)$.

The Bhattacharrya distance between two multivariate Gaussians distributions $p_{\mu_1,\Sigma_1}$ and $p_{\mu_2,\Sigma_2}$ is
$$
D_{B,\alpha}[p_{\mu_1,\Sigma_1},p_{\mu_2,\Sigma_2}] = 
\frac{1}{2}\left(\alpha\mu_1^\top\Sigma_1^{-1}\mu_1+(1-\alpha)\mu_2^\top\Sigma_2^{-1}\mu_2-\mu_\alpha^\top\Sigma_\alpha^{-1}\mu_\alpha
+\log \frac{|\Sigma_1|^{\alpha}|\Sigma_2|^{1-\alpha}}{|\Sigma_\alpha|}\right),
$$
where
\begin{eqnarray*}
\Sigma_\alpha &=& (\alpha\Sigma_1^{-1}+(1-\alpha)\Sigma_2^{-1})^{-1},\\
\mu_\alpha &=&  \Sigma_\alpha (\alpha\Sigma_1^{-1}\mu_1+(1-\alpha)\Sigma_2^{-1}).
\end{eqnarray*}

The Gaussian density can be rewritten as a multivariate location-scale family:
$$
p_\lambda(x;\lambda) = |\lambda_M|^{-\frac{1}{2}}\, p_\std(\lambda_M^\frac{1}{2}(x-\lambda_v)),
$$
where 
$$
p_\std(x)= \frac{1}{(2\pi)^{\frac{d}{2}}} \exp\left(-\frac{1}{2}x^\top x\right)=p_{(0,I)}
$$
denotes the standard multivariate Gaussian distribution.
The matrix $\lambda_M^\frac{1}{2}$ is the unique symmetric square-root matrix which is positive-definite when $\lambda_M$ is positive-definite.

\begin{Remark}
Notice that the product of two symmetric positive-definite matrices $P_1$ and $P_2$ may not be symmetric but
$P_1^{\frac{1}{2}}P_2P_1^{\frac{1}{2}}$ is always symmetric positive-definite, and the eigenvalues of $P_1^{\frac{1}{2}}P_2P_1^{\frac{1}{2}}$ coincides with the eigenvalues of $P_1P_2$.
Hence, we have
 $\lambda_\sp(P_1^{-\frac{1}{2}}P_2P_1^{-\frac{1}{2}})=\lambda_\sp(P_1^{-1}P_2)$ where $\lambda_\sp(M)$ denotes the eigenspectrum of matrix $M$.
\end{Remark}

We may interpret the Gaussian family  as obtained by the action $.$ of the affine group $\Aff(\bbR^d)=\bbR^d \rtimes \GL_d(\bbR)$ on the standard density $p_\std$: $p_{(\mu,\Sigma)}(x)=(\mu,\Sigma^{-\frac{1}{2}}) . p_\std(x)$
The affine group  is equipped with the following (outer) semidirect product:
$$
(l_1,A_1).(l_2,A_2)=(l_1+A_1l_2,A_1A_2),
$$
and this group can be handled as a matrix group with the following mapping of its elements to matrices:
$$
(l,A)\equiv \matrixtwotwo{A}{l}{0}{1}.
$$

We can show the following invariance of the skewed Bhattacharyya divergences:

\begin{Proposition}[Invariance of the Bhattacharyya divergence and $f$-divergences under the action of the affine group]\label{prop:invBhat}
We have 
\begin{eqnarray*}
D_{B,\alpha}[(\mu,\Sigma^{-\frac{1}{2}}) .p_{\mu_1,\Sigma_1}:(\mu,\Sigma^{-\frac{1}{2}}) . p_{\mu_2,\Sigma_2}] &=&
D_{B,\alpha}\left[p_{\Sigma^{-\frac{1}{2}}(\mu_1-\mu):\Sigma^{-\frac{1}{2}}\Sigma_1\Sigma^{-\frac{1}{2}}},p_{\Sigma^{-\frac{1}{2}}(\mu_2-\mu),\Sigma^{-\frac{1}{2}}\Sigma_2\Sigma^{-\frac{1}{2}}}\right],\\
&=&D_{B,\alpha}[p_{\mu_1,\Sigma_1}:p_{\mu_2,\Sigma_2}].
\end{eqnarray*}
\end{Proposition}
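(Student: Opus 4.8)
The plan is to separate the chain of two equalities into two logically independent ingredients: a general invariance lemma for the skewed Bhattacharyya coefficient under an invertible change of variable, and an elementary computation identifying the pushforward of a Gaussian. The middle-to-right equality is an instance of the invariance lemma, while the left-to-middle equality merely records what the affine element $(\mu,\Sigma^{-\frac{1}{2}})$ does to each Gaussian.

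First I would establish the invariance lemma: if $T$ is any diffeomorphism of the sample space and $p',q'$ denote the pushforward densities $p'(y)=p(T^{-1}(y))\,|J_{T^{-1}}(y)|$ and $q'(y)=q(T^{-1}(y))\,|J_{T^{-1}}(y)|$, then $\rho_\alpha[p':q']=\rho_\alpha[p:q]$. Plugging these into $\int (p')^\alpha (q')^{1-\alpha}\dmu$, the two Jacobian factors combine as $|J_{T^{-1}}(y)|^{\alpha+(1-\alpha)}=|J_{T^{-1}}(y)|$, and the back-substitution $x=T^{-1}(y)$ turns $\dmu(y)$ into $|J_T(x)|\,\dmu(x)=|J_{T^{-1}}(y)|^{-1}\dmu(x)$, so the Jacobians cancel and the integral collapses to $\int p^\alpha q^{1-\alpha}\dmu=\rho_\alpha[p:q]$. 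The crucial point is that the exponents sum to one, which is exactly what forces the cancellation; taking $-\log$ gives invariance of $D_{B,\alpha}$. The same cancellation occurs in the likelihood ratio $q'/p'=q(T^{-1})/p(T^{-1})$, so the identical argument shows every $f$-divergence satisfies $I_f[p':q']=I_f[p:q]$ as well, which is what the second half of the proposition's title asserts.

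Next I would carry out the pushforward computation for the specific affine map read off from the statement, namely the whitening map $T(x)=\Sigma^{-\frac{1}{2}}(x-\mu)$; this is a bijection of $\bbR^d$ because $\Sigma^{-\frac{1}{2}}\in\GL_d(\bbR)$, with constant Jacobian $|\Sigma|^{-\frac{1}{2}}$. For $X\sim p_{\mu_i,\Sigma_i}$ the vector $T(X)$ is Gaussian with mean $\Sigma^{-\frac{1}{2}}(\mu_i-\mu)$ and covariance $\Sigma^{-\frac{1}{2}}\Sigma_i(\Sigma^{-\frac{1}{2}})^\top=\Sigma^{-\frac{1}{2}}\Sigma_i\Sigma^{-\frac{1}{2}}$, using the symmetry of $\Sigma^{-\frac{1}{2}}$; these are exactly the parameters appearing in the middle expression. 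Interpreting the action $(\mu,\Sigma^{-\frac{1}{2}}).(\cdot)$ as this pushforward gives the left-to-middle equality by definition, and combining it with the invariance lemma yields the whole chain.

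The routine parts here are the change of variables and the Gaussian affine-transformation formula. The step requiring the most care — and the place I would be most careful — is bookkeeping the action convention: one must check that the group element $(\mu,\Sigma^{-\frac{1}{2}})$ indeed implements $T(x)=\Sigma^{-\frac{1}{2}}(x-\mu)$ and reconcile this with the location-scale definition $p_{(\mu,\Sigma)}=(\mu,\Sigma^{-\frac{1}{2}}).p_\std$, since the direction of the affine map (generating versus standardizing) and the composition rule $(l_1,A_1).(l_2,A_2)=(l_1+A_1 l_2,A_1A_2)$ must be kept mutually consistent. No integrability subtlety arises, since $\rho_\alpha[p:q]\in(0,1]$ is finite for all $\alpha\in(0,1)$ for non-singular $P,Q$ and $T$ is a global diffeomorphism preserving the common support.
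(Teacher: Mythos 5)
Your proof is correct, and it takes a more self-contained route than the paper's. The paper's proof reuses the Ali--Silvey $(f,g)$-representation from Remark~\ref{rmk:BhatCoeff}---writing $D_{B,\alpha}[p:q]=g(I_{h_\alpha}[p:q])$ with $h_\alpha(u)=-u^\alpha$ and $g(v)=-\log(-v)$---and then invokes an external result (Proposition~3 of~\cite{nielsen2022note}) for the invariance of $f$-divergences under affine changes of variable; the actual change-of-variables computation is thus outsourced to the cited reference. You instead prove the invariance directly on the Bhattacharyya coefficient $\rho_\alpha$, observing that the Jacobian factors cancel precisely because the exponents $\alpha$ and $1-\alpha$ sum to one; this joint homogeneity of $p^\alpha q^{1-\alpha}$ is exactly what underlies $f$-divergence invariance, so you recover the paper's $f$-divergence claim as a corollary of the same cancellation rather than using it as an input. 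Your decomposition also makes explicit two things the paper leaves implicit: the pushforward computation identifying $(\mu,\Sigma^{-\frac{1}{2}}).p_{\mu_i,\Sigma_i}$ with $p_{\Sigma^{-\frac{1}{2}}(\mu_i-\mu),\,\Sigma^{-\frac{1}{2}}\Sigma_i\Sigma^{-\frac{1}{2}}}$ (the first equality of the proposition, which the paper treats as definitional), and the action-convention bookkeeping. That caution is warranted: the first equality requires $(\mu,\Sigma^{-\frac{1}{2}})$ to act as the whitening map $x\mapsto\Sigma^{-\frac{1}{2}}(x-\mu)$, whereas the paper's location-scale display $p_{(\mu,\Sigma)}=(\mu,\Sigma^{-\frac{1}{2}}).p_\std$ is consistent only with the inverse (generating) direction, so your reconciliation step resolves a genuine notational tension in the source. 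What the paper's route buys is brevity and immediate generality to all Ali--Silvey $(f,g)$-divergences; what yours buys is a transparent mechanism requiring no external citation.
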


\begin{proof}
The proof follows from the $(f,g)$-form of Ali and Silvey's divergences~\cite{ali1966general}.
We can express $D_{B,\alpha}[p:q]=g(I_{h_\alpha}[p:q])$ where $h_\alpha(u)=-u^\alpha$ (convex for $\alpha\in(0,1)$) and $g(v)=-\log -v$.
Then we rely on the proof of invariance of $f$-divergences under the action of the affine group (see Proposition~3 of~\cite{nielsen2022note} relying on a change of variable in the integral):
$$
I_f[p_{\mu_1,\Sigma_1}:p_{\mu_2,\Sigma_2}]
= I_f\left[p_{0,I},p_{\Sigma_1^{-\frac{1}{2}}(\mu_2-\mu_1):\Sigma^{-\frac{1}{2}}_1\Sigma_2\Sigma^{-\frac{1}{2}}_1}\right]
= I_f\left[p_{\Sigma_2^{-\frac{1}{2}}(\mu_1-\mu_2),\Sigma_2^{-\frac{1}{2}}\Sigma_1\Sigma_2^{-\frac{1}{2}}}:p_{0,I}\right],
$$
where $I$ denotes the identity matrix.
\end{proof}

Thus by choosing $(\mu,\Sigma)=(\mu_1,\Sigma_1)$ and $(\mu,\Sigma)=(\mu_2,\Sigma_2)$, we obtain the following corollary:

\begin{Corollary}[Bhattacharyya divergence from canonical Bhattacharyya divergences]\label{prop:invCanBhat}
We have 
$$
D_{B,\alpha}[p_{\mu_1,\Sigma_1}:p_{\mu_2,\Sigma_2}]
= D_{B,\alpha}\left[p_{0,I},p_{\Sigma_1^{-\frac{1}{2}}(\mu_2-\mu_1):\Sigma_1^{-\frac{1}{2}}\Sigma_2\Sigma_1^{-\frac{1}{2}}}\right]
=D_{B,\alpha}\left[p_{\Sigma_2^{-\frac{1}{2}}(\mu_1-\mu_2):\Sigma_2^{-\frac{1}{2}}\Sigma_1\Sigma_2^{-\frac{1}{2}}},p_{0,I}\right].
$$
\end{Corollary}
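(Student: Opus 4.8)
The plan is to obtain the corollary as two direct specializations of the affine invariance established in Proposition~\ref{prop:invBhat}. That proposition asserts that $D_{B,\alpha}[p_{\mu_1,\Sigma_1}:p_{\mu_2,\Sigma_2}]$ is left unchanged when the pair is acted upon simultaneously by any element $(\mu,\Sigma^{-\frac{1}{2}})$ of $\Aff(\bbR^d)$, and it records that this action sends $p_{\mu_i,\Sigma_i}$ to $p_{\Sigma^{-\frac{1}{2}}(\mu_i-\mu):\Sigma^{-\frac{1}{2}}\Sigma_i\Sigma^{-\frac{1}{2}}}$. Since the right-hand side $D_{B,\alpha}[p_{\mu_1,\Sigma_1}:p_{\mu_2,\Sigma_2}]$ does not depend on which group element we insert, I am free to choose the element that standardizes one of the two Gaussians to the canonical $p_{0,I}$.

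First I would take $(\mu,\Sigma)=(\mu_1,\Sigma_1)$. Then the first argument becomes $p_{\Sigma_1^{-\frac{1}{2}}(\mu_1-\mu_1):\Sigma_1^{-\frac{1}{2}}\Sigma_1\Sigma_1^{-\frac{1}{2}}}=p_{0,I}$, using $\Sigma_1^{-\frac{1}{2}}\Sigma_1\Sigma_1^{-\frac{1}{2}}=I$, while the second becomes $p_{\Sigma_1^{-\frac{1}{2}}(\mu_2-\mu_1):\Sigma_1^{-\frac{1}{2}}\Sigma_2\Sigma_1^{-\frac{1}{2}}}$; this produces the first claimed equality. Symmetrically, taking $(\mu,\Sigma)=(\mu_2,\Sigma_2)$ standardizes the second argument to $p_{0,I}$ and turns the first into $p_{\Sigma_2^{-\frac{1}{2}}(\mu_1-\mu_2):\Sigma_2^{-\frac{1}{2}}\Sigma_1\Sigma_2^{-\frac{1}{2}}}$, which gives the second claimed equality.

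There is no genuine obstacle here: the only things to verify are the two elementary matrix identities $\Sigma_i^{-\frac{1}{2}}\Sigma_i\Sigma_i^{-\frac{1}{2}}=I$ and the centering $\Sigma_i^{-\frac{1}{2}}(\mu_i-\mu_i)=0$, and one must keep the order of the two arguments consistent with the convention used to define $D_{B,\alpha}$. The well-definedness of the symmetric square root $\Sigma_i^{-\frac{1}{2}}$ is guaranteed by the positive-definiteness of the covariance matrices, as noted in the remark preceding Proposition~\ref{prop:invBhat}, so the whole argument reduces to substituting two group elements into an already-proven invariance.
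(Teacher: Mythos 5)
Your proof is correct and coincides with the paper's own argument: the paper derives this corollary precisely by instantiating Proposition~\ref{prop:invBhat} with $(\mu,\Sigma)=(\mu_1,\Sigma_1)$ and then with $(\mu,\Sigma)=(\mu_2,\Sigma_2)$. The elementary verifications you spell out ($\Sigma_i^{-\frac{1}{2}}\Sigma_i\Sigma_i^{-\frac{1}{2}}=I$ and $\Sigma_i^{-\frac{1}{2}}(\mu_i-\mu_i)=0$) are exactly what makes those two specializations standardize one argument to $p_{0,I}$.
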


It follows that the Chernoff optimal skewing parameter enjoys the same invariance property:
$$
\alphastar(p_{\mu_1,\Sigma_1}:p_{\mu_2,\Sigma_2})=\alphastar\left(p_{0,I},p_{\Sigma_1^{-\frac{1}{2}}(\mu_2-\mu_1),\Sigma_1^{-\frac{1}{2}}\Sigma_2\Sigma_1^{-\frac{1}{2}}}\right)
=\alphastar\left(p_{\Sigma_2^{-\frac{1}{2}}(\mu_1-\mu_2):\Sigma_2^{-\frac{1}{2}}\Sigma_1\Sigma_2^{-\frac{1}{2}}},p_{0,I}\right).
$$

As a byproduct, we get the invariance of the Chernoff information under the action of the affine group:

\begin{Corollary}[Invariance of the Chernoff information under the action of the affine group]\label{prop:invCher}
We have:
$$
D_C[p_{\mu_1,\Sigma_1},p_{\mu_2,\Sigma_2}]
= D_C\left[p_{0,I},p_{\Sigma_1^{-\frac{1}{2}}(\mu_2-\mu_1),\Sigma_1^{-\frac{1}{2}}\Sigma_2\Sigma_1^{-\frac{1}{2}}}\right]
=D_C\left[p_{\Sigma_2^{-\frac{1}{2}}(\mu_1-\mu_2),\Sigma_2^{-\frac{1}{2}}\Sigma_1\Sigma_2^{-\frac{1}{2}}},p_{0,I}\right].
$$
\end{Corollary}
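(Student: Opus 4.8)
The plan is to lift the per-$\alpha$ invariance of the skewed Bhattacharyya divergence, already established in Corollary~\ref{prop:invCanBhat}, to an invariance of the Chernoff information, using the fact that the latter is merely a maximization of the former over a fixed interval of skewing parameters. First I would recall from the definition in Eq.~\ref{eq:defCI} that, for any two comparable densities, $D_C[p,q]=\max_{\alpha\in(0,1)} D_{B,\alpha}[p:q]$, so that the Chernoff information depends only on the whole curve $\alpha\mapsto D_{B,\alpha}[p:q]$ and on the index set $(0,1)$ over which the maximum is taken.

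Next I would invoke Corollary~\ref{prop:invCanBhat}, which asserts that for every fixed $\alpha$ the divergence $D_{B,\alpha}[p_{\mu_1,\Sigma_1}:p_{\mu_2,\Sigma_2}]$ equals the corresponding skewed Bhattacharyya divergence for each of the two canonical pairs appearing in its statement (the pair whose first density is $p_{0,I}$, and the pair whose second density is $p_{0,I}$). The decisive point is that this chain of equalities holds pointwise in $\alpha$, so that the three maps $\alpha\mapsto D_{B,\alpha}[\,\cdot:\cdot\,]$ are literally one and the same function on all of $(0,1)$.

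Since identical functions on the same index set share the same maximum, applying $\max_{\alpha\in(0,1)}$ to each of the three coinciding curves immediately produces the two asserted equalities of Chernoff informations. As a consistency check I would note that the common maximum is attained at a single argument, in agreement with the invariance of the optimal skewing value $\alphastar$ displayed just before the statement and with the uniqueness proved in Proposition~\ref{prop:CIalphaunique}. I expect no genuine obstacle here: all the analytic work — the change of variable in the integral defining the affinity coefficient under the affine action — has already been discharged in Proposition~\ref{prop:invBhat} and its Corollary~\ref{prop:invCanBhat}, and the only ingredient added is the trivial stability of the $\max$ operator under pointwise equality of its argument, which is exactly what converts the $\alpha$-wise invariance of $D_{B,\alpha}$ into the invariance of $D_C$.
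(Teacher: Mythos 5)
Your proposal is correct and follows essentially the same route as the paper: the paper derives this corollary as a direct byproduct of Proposition~\ref{prop:invBhat} and Corollary~\ref{prop:invCanBhat}, observing that the pointwise-in-$\alpha$ equality of the skewed Bhattacharyya distances makes the three curves $\alpha\mapsto D_{B,\alpha}$ identical, so both the optimal skewing parameter $\alphastar$ and the maximum over $\alpha\in(0,1)$ coincide. Your consistency check via the invariance of $\alphastar$ matches the remark the paper places just before the corollary.
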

			
Thus the formula for the Chernoff information between two Gaussians
$$
D_C(\mu_1,\Sigma_1,\mu_2,\Sigma_2):=D_C[p_{\mu_1,\Sigma_1},p_{\mu_2,\Sigma_2}]=D_C(\mu_{12},\Sigma_{12})
$$ 
can be written as a function of two terms 
$\mu_{12}=\Sigma_1^{-\frac{1}{2}}(\mu_2-\mu_1)$ and $\Sigma_{12}=\Sigma_1^{-\frac{1}{2}}\Sigma_2\Sigma_1^{-\frac{1}{2}}$.

\subsection{Closed-form formula for the Chernoff information between univariate Gaussian distributions}\label{sec:ChernoffUniGaussian}

We shall report the exact solution for the Chernoff information between univariate Gaussian distributions by solving a quadratic equation. We can also report a complex closed-form formula by using symbolic computing because the calculations are lengthy and thus prone to human error. 

\begin{sidewaysfigure}
\centering
\includegraphics[width=\textwidth]{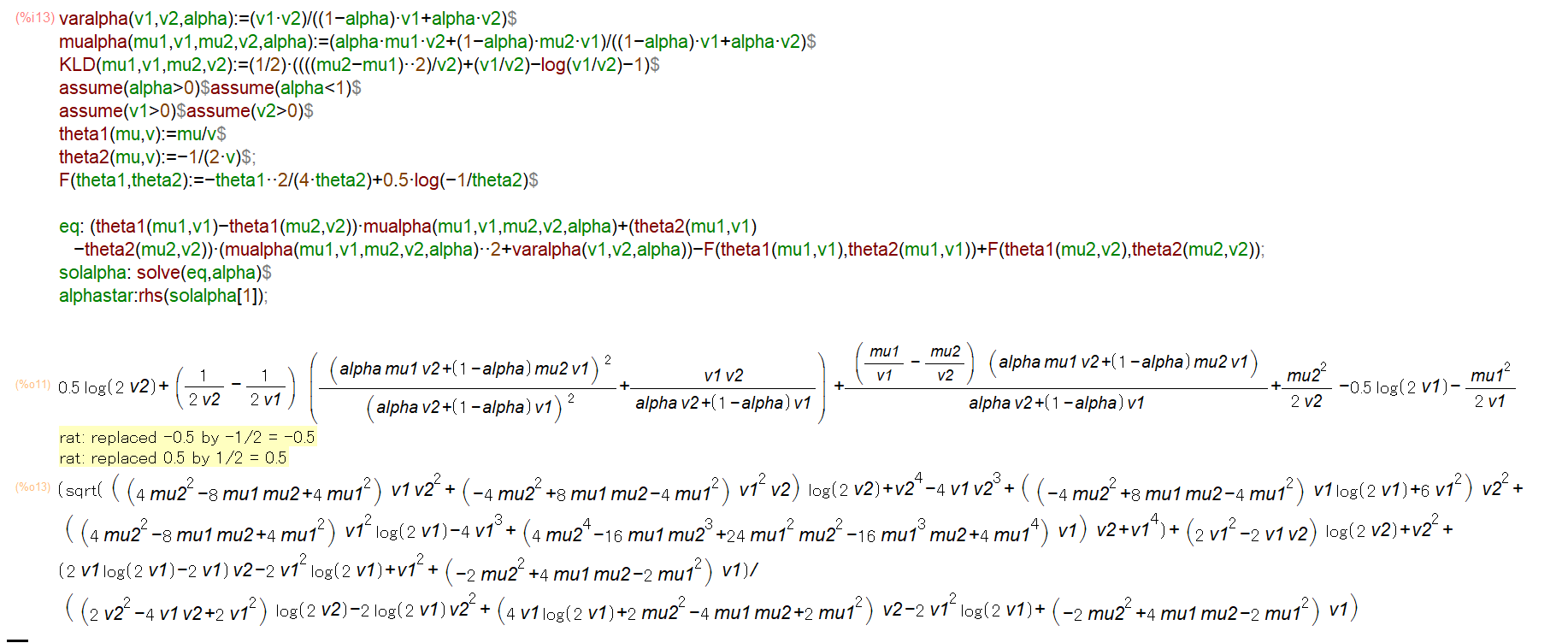}

\caption{The exact formula for the Chernoff skew value $\alphastar$ is reported in the cell \%o13 of this {\sc Maxima} session.\label{fig:exactsol}}
\end{sidewaysfigure}

Instantiating Eq.~\ref{eq:OCEF} for the case of univariate Gaussian distributions paramterized by $(\mu,\sigma^2)$, we get the following equation for the optimality condition of $\alphastar$:
\begin{eqnarray}
\inner{\theta_2-\theta_1}{\eta_\alphastar}&=&F(\theta_2)-F(\theta_1),\\
\Inner{\left(\frac{\mu_2}{\sigma_2^2}-\frac{\mu_1}{\sigma_1^2},\frac{1}{2\sigma_1^2}-\frac{1}{2\sigma_2^2}\right)}{(m_\alpha,v_\alpha)}&=&\frac{1}{2}\log\frac{\sigma_2^2}{\sigma_1^2}+\frac{\mu_2^2}{2\sigma_2^2}-\frac{\mu_1^2}{2\sigma_1^2},
\end{eqnarray}
where $\inner{\cdot}{\cdot}$ denotes the scalar product
and with the interpolated mean and variance along an exponential arc $\{(m_\alpha,v_\alpha^2)\}_{\alpha\in(0,1)}$ passing through $(\mu_1,\sigma_1^2)$ when $\alpha=1$ and  $(\mu_2,\sigma_2^2)$ when $\alpha=0$ given  by
\begin{eqnarray}
m_\alpha &=&\frac{\alpha\mu_1 \sigma_2^2+(1-\alpha)\mu_2 \sigma_1^2}{(1-\alpha)\sigma_1^2+\alpha \sigma_2^2}=
\frac{\alpha (\mu_1 \sigma_2^2-\mu_2 \sigma_1^2)+\mu_2 \sigma_1^2}{\sigma_1^2+\alpha (\sigma_2^2-\sigma_1^2)},\\
v_\alpha &=& \frac{\sigma_1^2\sigma_2^2}{(1-\alpha)\sigma_1^2+\alpha\sigma_2^2}=\frac{\sigma_1^2\sigma_2^2}{\sigma_1^2+\alpha (\sigma_2^2-\sigma_1^2)}.\label{eq:condug}
\end{eqnarray}
That is, for $p=p_{\mu_1,\sigma_1^2}$ and $q=p_{\mu_2,\sigma_2^2}$, we have the weighted geometric mixture $(pq)_\alpha^G=p_{m_\alpha,v_\alpha}$.

Thus the optimality condition of the Chernoff optimal skewing parameter is given by:
\begin{equation}\label{eq:OCGauss}
\OC_{\mathrm{Gaussian}}:\quad 
\left(\frac{\mu_2}{\sigma_2^2}-\frac{\mu_1}{\sigma_1^2}\right) m_\alpha
-\left(\frac{1}{2\sigma_2^2}-\frac{1}{2\sigma_1^2}\right) v_\alpha^2 =\frac{1}{2}\log\frac{\sigma_2^2}{\sigma_1^2}+\frac{\mu_2^2}{2\sigma_2^2}-\frac{\mu_1^2}{2\sigma_1^2}.
\end{equation}

Let us rewrite compactly Eq.~\ref{eq:OCGauss} as 
\begin{equation}\label{eq:Gaussian}
\OC_{\mathrm{Gaussian}}:\quad a_{12}m_\alpha + b_{12}v_\alpha + c_{12}=0,
\end{equation}
with the following coefficients:
\begin{eqnarray}
a_{12} &=& \frac{\mu_2}{\sigma_2^2}-\frac{\mu_1}{\sigma_1^2},\\
b_{12} &=& \frac{1}{2\sigma_1^2}-\frac{1}{2\sigma_2^2},\\
c_{12} &=& \frac{1}{2}\log\frac{\sigma_1^2}{\sigma_2^2}+\frac{\mu_1^2}{2\sigma_1^2}-\frac{\mu_2^2}{2\sigma_2^2}.
\end{eqnarray}

By multiplying both sides of Eq.~\ref{eq:Gaussian} by 
$\sigma_1^2+\alpha \Delta_{v}$ where $\Delta_{v}:=\sigma_2^2-\sigma_1^2$ and rearranging terms, we get a quadratic equation with positive root being $\alphastar$.

Using the computer algebra system (CAS) {\tt Maxima}, we can also solve  exactly this {\em quadratic equation} in $\alpha$ as 
a function of $\mu_1$ ,$\sigma_1^2$, $\mu_2$, and $\sigma_2^2$: 
See listing in Appendix~\ref{sec:maxima} and the screenshot of Figure~\ref{fig:exactsol}.

Once we get the optimal value of $\alphastar=\alphastar(\mu_1,\sigma_1^2,\mu_2,\sigma_2^2)$, we get the Chernoff information as
$$
D_C[p_{\mu_2,\sigma_2^2},p_{\mu_2,\sigma_2^2}]=D_\KL[p_{m_\alphastar,v_\alphastar}:p_{\mu_1,\sigma_1^2}]
$$
with
the Kullback-Leibler divergence between two univariate Gaussians distributions $p_{\mu_1,\sigma_1^2}$ and $p_{\mu_2,\sigma_2^2}$ given by
$$
D_\KL[p_{\mu_1,\sigma_1^2}:p_{\mu_2,\sigma_2^2}]= \frac{1}{2}\left(\frac{(\mu_2-\mu_1)^2}{\sigma_2^2}+\frac{\sigma_1^2}{\sigma_2^2}-\log \frac{\sigma_1^2}{\sigma_2^2}-1 \right).
$$

Notice that from the invariance of Proposition~\ref{prop:invBhat}, we have for any $(\mu,\sigma^2)\in\bbR\times\bbR_{++}$:
$$
D_\KL[p_{\mu_1,\sigma_1^2}:p_{\mu_2,\sigma_2^2}]=
D_\KL\left[
p_{\frac{\mu_1-\mu}{\sigma},\frac{\sigma_1^2}{\sigma^2}}
:
p_{\frac{\mu_2-\mu}{\sigma},\frac{\sigma_2^2}{\sigma^2}}
\right],
$$
and therefore by choosing $(\mu,\sigma^2)=(\mu_1,\sigma_1^2)$, we have
$$
D_\KL[p_{\mu_1,\sigma_1^2}:p_{\mu_2,\sigma_2^2}]= D_\KL\left[p_{0,1},p_{\frac{\mu_2-\mu_1}{\sigma_1},\frac{\sigma_2^2}{\sigma_1^2}}\right].
$$

\begin{Proposition}\label{prop:unigauss}
The Chernoff information between two univariate Gaussian distributions can be calculated exactly in closed form.
\end{Proposition}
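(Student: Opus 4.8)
The plan is to show that the single scalar optimality condition $\OC_\EF$ of Proposition~\ref{prop:cief}, once specialized to the univariate Gaussian family, reduces to a quadratic equation in the skewing parameter $\alpha$, whose relevant root is therefore available through the quadratic formula. Since a closed-form root feeds, via Proposition~\ref{prop:COC}, into the closed-form Gaussian Kullback--Leibler divergence, the Chernoff information itself becomes a closed-form expression. Before starting the algebra I would optionally invoke the affine invariance of Corollary~\ref{prop:invCher}, which in the univariate case gives $D_C[p_{\mu_1,\sigma_1^2},p_{\mu_2,\sigma_2^2}]=D_C[p_{0,1},p_{(\mu_2-\mu_1)/\sigma_1,\,\sigma_2^2/\sigma_1^2}]$, cutting the free parameters from four to two and shortening every subsequent manipulation.

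First I would instantiate the optimality condition using the natural parameter $\theta(\mu,\sigma^2)=(\mu/\sigma^2,-1/(2\sigma^2))$ and sufficient statistic $t(x)=(x,x^2)$, so that the moment parameter of the geometric mixture is $\eta_\alpha=(m_\alpha,\,m_\alpha^2+v_\alpha)$; together with the log-normalizer difference $F(\theta_2)-F(\theta_1)$ this produces condition $\OC_{\mathrm{Gaussian}}$ (Eq.~\ref{eq:OCGauss}, compactly Eq.~\ref{eq:Gaussian}). Equivalently, by Proposition~\ref{prop:equivcond} one may start from $E_{(pq)^G_\alpha}[\log(p/q)]=0$, which expands to a relation among $m_\alpha$, $m_\alpha^2$, and $v_\alpha$. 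The key structural observation is that $m_\alpha$ and $v_\alpha$ are rational functions of $\alpha$ sharing the common linear denominator $D(\alpha):=\sigma_1^2+\alpha(\sigma_2^2-\sigma_1^2)$ (Eq.~\ref{eq:condug} and its companion formula), with $m_\alpha$ having a linear numerator and $v_\alpha$ a constant one. The second-moment coordinate $m_\alpha^2+v_\alpha$ is the source of the quadratic: after substituting and clearing denominators by multiplying through by $D(\alpha)^2$, every term collapses to degree at most two in $\alpha$, yielding $A\alpha^2+B\alpha+C=0$ with $A,B,C$ explicit polynomials in $\mu_1,\sigma_1^2,\mu_2,\sigma_2^2$.

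Two points then require care. By Proposition~\ref{prop:CIalphaunique} the optimal skewing value is unique and lies in $(0,1)$, so among the at most two real roots of the quadratic exactly one belongs to $(0,1)$; I would select that root, which is what the ``positive root'' prescription in the text accomplishes. The degenerate case $\sigma_1^2=\sigma_2^2$ must be handled separately: there $D(\alpha)$ is constant and the quadratic coefficient $A$ vanishes, the condition collapses to a linear equation, and one recovers $\alphastar=\tfrac12$ from the pure location symmetry of equal-variance Gaussians.

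Finally, with $\alphastar$ in closed form I would evaluate $m_{\alphastar}$ and $v_{\alphastar}$ and substitute into the closed-form Gaussian Kullback--Leibler formula stated in the text, using $D_C[p,q]=D_\KL[(pq)^G_{\alphastar}:p]=D_\KL[p_{m_{\alphastar},v_{\alphastar}}:p_{\mu_1,\sigma_1^2}]$ from Proposition~\ref{prop:COC}; this writes the Chernoff information entirely in elementary closed-form terms. The main obstacle is not conceptual but the sheer length and fragility of the symbolic arithmetic in $\mu_1,\sigma_1^2,\mu_2,\sigma_2^2$, which is precisely why the explicit root is best delegated to a computer algebra system (as in Figure~\ref{fig:exactsol}) rather than transcribed by hand.
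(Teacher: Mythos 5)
Your proposal is correct and follows essentially the same route as the paper: instantiate the exponential-family optimality condition $\OC_\EF$ for the univariate Gaussian family, use the fact that $m_\alpha$ and $v_\alpha$ are rational in $\alpha$ with the common linear denominator $\sigma_1^2+\alpha(\sigma_2^2-\sigma_1^2)$ to reduce the condition to a quadratic equation in $\alpha$, select the unique root in $(0,1)$ (guaranteed by Proposition~\ref{prop:CIalphaunique}), and substitute into the closed-form Gaussian Kullback--Leibler divergence via Proposition~\ref{prop:COC}, delegating the explicit formula to a computer algebra system. If anything, your write-up is slightly more careful than the paper's prose: you correctly take the second coordinate of $\eta_{\alphastar}$ to be the second moment $m_\alpha^2+v_\alpha$ (the displayed Eq.~\ref{eq:OCGauss} shows $v_\alpha^2$, although the paper's {\sc Maxima} code uses $m_\alpha^2+v_\alpha$, so clearing denominators requires $D(\alpha)^2$ rather than a single factor of $D(\alpha)$), and you explicitly flag the equal-variance degeneration to a linear equation with $\alphastar=\tfrac{1}{2}$.
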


Figure~\ref{fig:exactsol2} shows a snapshot of the obtained closed-form formula which is partially displayed in this window.
One can also program these closed-form solutions in Python using the {\tt SymPy} package (\url{https://www.sympy.org/en/index.html}) for performing symbolic computations.

\begin{sidewaysfigure}
\centering
\includegraphics[width=\textwidth]{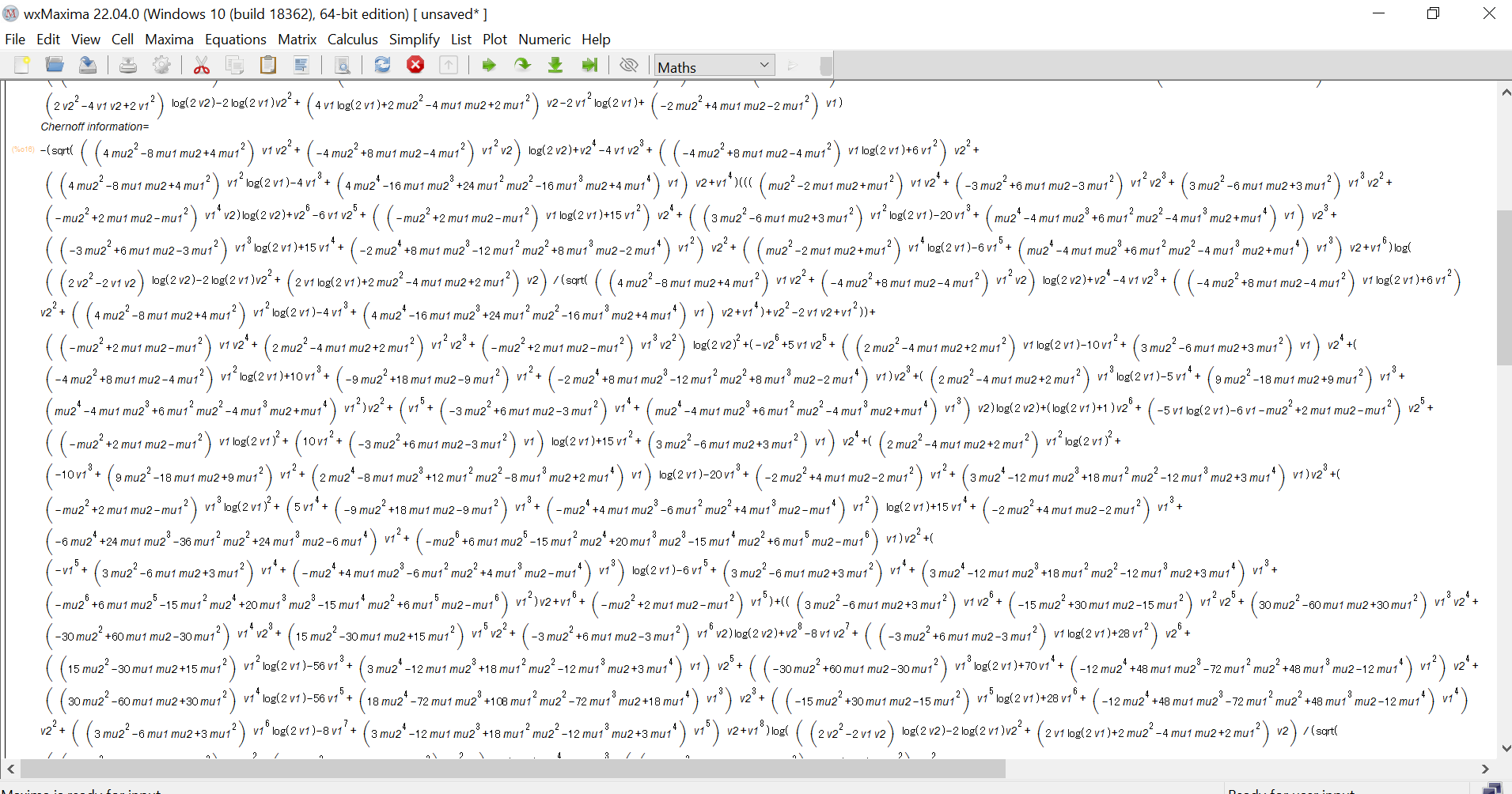}

\caption{The exact long exact formula for the Chernoff information is partially shown in this {\sc Maxima} session.\label{fig:exactsol2}}
\end{sidewaysfigure}

Let us report special cases with some illustrating examples.

\begin{itemize}
\item First, let us consider the Gaussian subfamily with prescribed variance.
When $\sigma_1^2=\sigma_2^2=\sigma^2$, we always have $\alphastar=\frac{1}{2}$, and the Chernoff information is
\begin{equation}
D_C[p_{\mu_1,\sigma^2}:p_{\mu_2,\sigma^2}]=\frac{(\mu_2-\mu_1)^2}{8\sigma^2}.
\end{equation} 
Notice that it amounts to one eight of the squared Mahalanobis distance (see~\cite{nielsen2022note} for a detailed explanation).

\item Second, let us consier Gaussian subfamily with prescribed mean.
When $\mu_1=\mu_2=\mu$, we get the optimal skewing value independent of the mean $\mu$:

{
$$
\alphastar=-{{v_{1}\,\log \left(2\,v_{2}\right)-v_{2}-v_{1}\,\log \left(2\,
 v_{1}\right)+v_{1}}\over{\left(v_{2}-v_{1}\right)\,\log \left(2\,
 v_{2}\right)-\log \left(2\,v_{1}\right)\,v_{2}+v_{1}\,\log \left(2\,
 v_{1}\right)}}
$$
}
where $v_1=\sigma_1^2$ and $v_2=\sigma_2^2$.
The Chernoff information is
{\small
\begin{equation}
D_C[p_{\mu_1,v_1}:p_{\mu_2,v_2}]-{{\left(v_{2}-v_{1}\right)\,\log \left({{v_{2}\,\log \left(2\,
 v_{2}\right)-\log \left(2\,v_{1}\right)\,v_{2}}\over{v_{2}-v_{1}}}
 \right)-v_{2}\,\log \left(2\,v_{2}\right)+\left(\log \left(2\,v_{1}
 \right)+1\right)\,v_{2}-v_{1}}\over{2\,v_{2}-2\,v_{1}}}.
\end{equation}
}

\item Third, consider the Chernoff information between the standard normal distribution and another normal distribution.
When $(\mu_1,\sigma_1^2)=(0,1)$ and $(\mu_2,\sigma_2^2)=(\mu,v)$, we get

\scalebox{0.8}{
$
\alphastar={{\sqrt{\left(4\,\mu^2\,v^2-4\,\mu^2\,v\right)\,\log \left(2\,v
 \right)+v^4-4\,v^3+\left(6-4\,\log 2\,\mu^2\right)\,v^2+\left(4\,\mu
 ^4+4\,\log 2\,\mu^2-4\right)\,v+1}+\left(2-2\,v\right)\,\log \left(2
 \,v\right)+v^2+\left(2\,\log 2-2\right)\,v-2\,\mu^2-2\,\log 2+1
 }\over{\left(2\,v^2-4\,v+2\right)\,\log \left(2\,v\right)-2\,\log 2
 \,v^2+\left(2\,\mu^2+4\,\log 2\right)\,v-2\,\mu^2-2\,\log 2}}
$
}
\end{itemize}

\begin{Example}\label{example1}
Let us consider $N(\mu_1=0,\sigma_1^2=1)$ and $N(\mu_2=1,\sigma_2^2=2)$.
The Chernoff exponent is
$$
\alphastar={{\sqrt{8\,\log 4-8\,\log 2+9}-2\,\log 4+2\,\log 2-1}\over{2\,\log
 4-2\,\log 2+2}}\approx 0.4215580558605244,
$$
and the Chernoff information is (zoom in for the formula):\vskip 0.3cm
\noindent\scalebox{0.5}{
$-{{\sqrt{8\,\log 4-8\,\log 2+9}\,\left(\left(2\,\log 4-2\,\log 2+3
 \right)\,\log \left({{4\,\log 4-4\,\log 2+4}\over{\sqrt{8\,\log 4-8
 \,\log 2+9}+1}}\right)-4\,\left(\log 4\right)^2+\left(8\,\log 2-6
 \right)\,\log 4-4\,\left(\log 2\right)^2+6\,\log 2-2\right)+\left(6
 \,\log 4-6\,\log 2+7\right)\,\log \left({{4\,\log 4-4\,\log 2+4
 }\over{\sqrt{8\,\log 4-8\,\log 2+9}+1}}\right)+4\,\left(\log 4
 \right)^2+\left(10-8\,\log 2\right)\,\log 4+4\,\left(\log 2\right)^2
 -10\,\log 2+6}\over{\left(4\,\log 4-4\,\log 2+6\right)\,\sqrt{8\,
 \log 4-8\,\log 2+9}+12\,\log 4-12\,\log 2+14}}$}

$$
\approx 0.1155433222682347
$$

\vskip 0.3cm

Using the bisection search of~\cite{CI-2013} with $\epsilon=10^{-8}$ takes $28$ iterations, and we get
$$\alphastar \approx 0.42155805602669716,
$$
and the Chernoff information is approximately $0.11554332226823472$.
Now, if we swap $p_{\mu_1,\sigma_1^2}\leftrightarrow p_{\mu_2,\sigma_2^2}$, we 
find $\alphastar=0.5784419439733028$ (and $0.5784419439733028+0.42155805602669716\approx 1$).
\end{Example}

Notice that in general, we may evaluate how good is the approximation $\tilde{\alpha}$ of $\alphastar$ by evaluating the deficiency of the 
optimal condition:
$$
\left|(\theta_2-\theta_1)^\top \eta_{\tilde\alpha}-F(\theta_2)+F(\theta_1)\right|.
$$

\begin{Example}
Let us consider $\mu_1=1$, $\sigma_1^2=3$ and $\mu_2=5$ and $\sigma_2^2=5$.
We get
$$
\alpha^*={{\sqrt{120\,\log 10-120\,\log 6+961}-3\,\log 10+3\,\log 6-23
 }\over{2\,\log 10-2\,\log 6+16}}\approx  0.4371453168322306
$$
and the Chernoff information is reported in closed form and evaluated numerically as
$$
0.5242883659200144.
$$
In comparison, the bisection algorithm of~\cite{CI-2013} with $\epsilon=10^{-8}$ takes $28$ iterations, 
and reports
$\alphastar\approx 0.43714531883597374$  and the Chernoff information about 
$$
0.5242883659200137.
$$
\end{Example}

\begin{Corollary}
The smallest enclosing left-sided Kullback-Leibler disk of $n$ univariate Gaussian distributions can be calculated exactly in randomized linear time~\cite{nielsen2008smallest}.
\end{Corollary}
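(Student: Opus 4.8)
The plan is to reduce the smallest enclosing left-sided Kullback--Leibler disk problem to a smallest enclosing Bregman ball problem in the two-dimensional natural parameter space of the univariate Gaussian family, and then to invoke the randomized linear-time algorithm of~\cite{nielsen2008smallest}. Given $n$ univariate Gaussians $p_{\theta_1},\ldots,p_{\theta_n}$, the sought disk is the distribution $r=p_{\theta_c}$ minimizing $R(\theta_c)=\max_{i} D_\KL[p_{\theta_c}:p_{\theta_i}]$, with radius $R^*=\min_{\theta_c} R(\theta_c)$. Since the univariate Gaussians form a regular (hence steep) exponential family with log-normalizer $F$, I would first apply Proposition~\ref{prop:KLBD} to rewrite each left-sided KL term as a right-sided Bregman term, $D_\KL[p_{\theta_c}:p_{\theta_i}]=B_F(\theta_i:\theta_c)$. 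This turns the objective into $\max_i B_F(\theta_i:\theta_c)$, so the task becomes finding the smallest enclosing Bregman ball with center $\theta_c\in\Theta\subset\bbR^2$ sitting in the right slot of $B_F$ --- consistent with the observation, already recorded above, that a left-sided KL bisector is a right-sided Bregman bisector.

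Second, I would invoke~\cite{nielsen2008smallest}, which supplies a Bregman generalization of Welzl's randomized incremental (move-to-front) algorithm computing the smallest enclosing Bregman ball in expected linear time for fixed parameter dimension. The combinatorial basis is that the optimal ball is determined by a support set of at most $d+1=3$ of the input points lying on its boundary (an LP-type, Helly-flavored property of Bregman balls), so the randomized incremental construction touches each point a constant expected number of times and runs in expected $O(n)$ time since here $d=2$.

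The key step --- and the main thing I would need to verify --- is that the primitive invoked at each node of the recursion, namely computing the circumcenter equidistant in left-sided KL from a subset of at most three Gaussians, admits an \emph{exact} closed-form solution. For two Gaussians this is precisely the Chernoff point of Proposition~\ref{prop:unigauss}. For three Gaussians $p_{\theta_i},p_{\theta_j},p_{\theta_k}$, the two conditions $B_F(\theta_i:\theta_c)=B_F(\theta_j:\theta_c)$ and $B_F(\theta_j:\theta_c)=B_F(\theta_k:\theta_c)$ become, after expansion, \emph{affine} in the dual moment coordinate $\eta_c=\nabla F(\theta_c)$ --- each collapses to $(\theta_i-\theta_j)^\top\eta_c=F(\theta_i)-F(\theta_j)$ --- so they form a linear $2\times 2$ system that uniquely determines $\eta_c$, whereupon $\theta_c=\nabla F^*(\eta_c)$ is recovered exactly from the explicit dual log-normalizer of the Gaussian family. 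Hence every geometric primitive is exact, and the overall algorithm returns the exact smallest enclosing left-sided KL disk in randomized linear time. The only residual care is to confirm that the affine-bisector intersection lies in $\Theta$ and genuinely realizes the enclosing ball rather than a spurious critical point, which follows from the strict convexity of $F$ together with the uniqueness arguments already established for the Chernoff point.
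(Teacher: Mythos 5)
Your proposal is correct and follows essentially the route the paper intends: the paper states this corollary without an explicit proof, as an immediate consequence of Proposition~\ref{prop:unigauss} (exact closed-form computation of the two-point circumcenter, i.e., the Chernoff point) combined with the randomized LP-type smallest-enclosing-Bregman-ball algorithm of~\cite{nielsen2008smallest}, after rewriting left-sided KL balls as right-sided Bregman balls via Proposition~\ref{prop:KLBD}, exactly as you do. Your only addition is the welcome verification that the three-point basis primitive is also exact --- the two bisector conditions are affine in $\eta_c$, so the circumcenter is a $2\times 2$ linear solve followed by the explicit map $\theta_c=\nabla F^*(\eta_c)$ --- a detail the paper leaves implicit in the citation.
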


\subsection{Fast approximation of the Chernoff information of multivariate Gaussian distributions}

In general, the Chernoff information between $d$-variate Gaussians distributions is not known in closed-form formula when $d>1$, see for example~\cite{athreya2017statistical,MVNChernoff-2018,tang2018limit}.
We shall consider below some special cases:

\begin{itemize}

\item When the Gaussians have the same covariance matrix $\Sigma$, the Chernoff information optimal skewing parameter is $\alpha=\frac{1}{2}$ and the Chernoff information is
$$
D_C[p_{\mu_1,\Sigma},p_{\mu_2,\Sigma}] = \frac{1}{8} \Delta^2_\Sigma(\mu_1,\mu_2),
$$
where $\Delta^2_\Sigma(\mu_1,\mu_2)=(\mu_2-\mu_1)^\top \Sigma^{-1} (\mu_2-\mu_1)$ is the squared Mahalanobis distance.
The Mahalanobis distance enjoys the following property by congruence transformation:
\begin{equation}
\Delta_\Sigma(\mu_1,\mu_2)=\Delta_{A\Sigma A^\top}(A\mu_1,A\mu_2), \forall A\in\GL(d).
\end{equation}
Notice that we can rewrite the (squared)  Mahalanobis distance as
$$
\Delta^2_\Sigma(\mu_1,\mu_2)=\tr\left(\Sigma^{-1} (\mu_2-\mu_1)(\mu_2-\mu_1)^\top\right)
$$
using the matrix trace cyclic property.
Then we check that 
\begin{eqnarray*}
\Delta_{A\Sigma A^\top}^2(A\mu_1,A\mu_2)&=&\tr\left(A^{-\top}\Sigma^{-1} A^{-1} A(\mu_2-\mu_1)(\mu_2-\mu_1)^\top A^\top\right),\\
&=& \tr(\Sigma^{-1} (\mu_2-\mu_1)(\mu_2-\mu_1)^\top )=\Delta_\Sigma^2(\mu_1,\mu_2).
\end{eqnarray*}

\item The Chernoff information for the special case of centered multivariate Gaussians distributions was studied in~\cite{MVNChernoff-2018}.
The KLD between two centered Gaussians $p_{\mu,\Sigma_1}$ and $p_{\mu_,\Sigma_2}$ is half of the matrix Burg distance:
\begin{equation}
D_\KL[p_{\mu,\Sigma_1}:p_{\mu,\Sigma_2}]=\frac{1}{2} \left(
\log\frac{\Mdet{\Sigma_2}}{\Mdet{\Sigma_1}}+\Mtr{\Sigma_2^{-1}\Sigma_1}-d
\right)=:\frac{1}{2}D_\Burg[\Sigma_1:\Sigma_2].
\end{equation}
When $d=1$, the Burg distance corresponds to the well-known Itakura-Saito divergence.
The matrix Burg distance is a matrix  spectral distance~\cite{MVNChernoff-2018}:
$$
D_\Burg[\Sigma_1:\Sigma_2] = \left( \sum_{i=1}^d \lambda_i-\log\lambda_i-1\right),
$$
where the $\lambda_i$'s are the eigenvalues of $\Sigma_2\Sigma_1^{-1}$.
The reverse KLD divergence $D_\KL[p_{\mu,\Sigma_2}:p_{\mu,\Sigma_1}]=\frac{1}{2}D_\Burg[\Sigma_2:\Sigma_1]$ 
is obtained by replacing $\lambda_i\leftrightarrow\frac{1}{\lambda_i}$:
$$
D_\KL[p_{\mu,\Sigma_2}:p_{\mu_,\Sigma_1}]=\frac{1}{2} \left( \sum_{i=1}^d \frac{1}{\lambda_i}+\log\lambda_i-1 \right).
$$
More generally, the $f$-divergences between centered Gaussian distributions are always matrix spectral divergences~\cite{nielsen2022note}.
\end{itemize}

Otherwise, for the general multivariate case, we implement the dichotomic search of Algorithm~1 in Algorithm~2 with the KLD between two multivariate Gaussian distributions expressed as
\begin{eqnarray}
D_\KL[p_{\mu_1,\Sigma_1}:p_{\mu_2,\Sigma_2}]&=&\frac{1}{2}\Delta^2_\Sigma(\mu_1,\mu_2)+\frac{1}{2}D_\Burg[\Sigma_1:\Sigma_2],\\
&=& \frac{1}{2}\left(\Mtr{\Sigma_2^{-1}\Sigma_1}-\log\frac{\Mdet{\Sigma_2}}{\Mdet{\Sigma_1}}-d+(\mu_2-\mu_1)^\top\Sigma_2^{-1}(\mu_2-\mu_1)\right).
\label{eq:kldmvn}
\end{eqnarray}

\noindent (Algorithm~2). Dichotomic search for approximating the Chernoff information between two multivariate normal distributions $p_{\mu_1,\Sigma_1}$ and $p_{\mu_2,\Sigma_2}$ by approximating the optimal skewing parameter value $\alpha\approx\alphastar$.\vskip 0.3cm
 \begin{algorithm} 
                \SetAlgoLined
                \SetKwInOut{Input}{input}
                \SetKwInOut{Ret}{return}
                \Input{Two normal densities $p_{\mu_1,\Sigma_1}$ and $p_{\mu_2,\Sigma_2}$, and a numerical precision threshold $\epsilon>0$}
								$\alpha_m=0$\; $\alpha_M=1$\;
                    \While{$|\alpha_M-\alpha_m|>\epsilon$}{
                    $\alpha=\frac{\alpha_m+\alpha_M}{2}$\;
											$\Sigma_\alpha^e=\left((1-\alpha) \Sigma_1^{-1}+\alpha\Sigma_2^{-1}\right)^{-1}$\;
											$\mu_\alpha^e=\Sigma_\alpha^e \left((1-\alpha)\Sigma_1^{-1}\mu_1+\alpha\Sigma_2^{-1}\mu_2\right)$\;
											\tcp{Formula of the KLD between two normal distributions in Eq.~\ref{eq:kldmvn}}
            \If{$D_\KL[p_{\mu_\alpha^e,\Sigma_\alpha^e}:p_{\mu_1,\Sigma_1}]>D_\KL[p_{\mu_\alpha^e,\Sigma_\alpha^e}:p_{\mu_2,\Sigma_2}$}{$\alpha_m=\alpha$\; \tcp{See Figure~\ref{fig:Dicho} for an illustration and Proposition~\ref{prop:GI}}}
						\Else{$\alpha_M=\alpha$\;}
															} 
                    \Return{ 
                     $D_\KL[p_{\mu_\alpha^e,\Sigma_\alpha^e}:p_{\mu_1,\Sigma_1}]$\;
                    }
            \end{algorithm}

\begin{Example}
Let $d=2$, $p_{\mu_1,\Sigma_1}=p_{0,I}$ be the standard bivariate Gaussian distribution and 
$p_{\mu_2,\Sigma_2}$ be the bivariate Gaussian distribution with mean $\mu_2=[1\ 2]^\top$ and covariance matrix $\Sigma_2=\matrixtwotwo{1}{-1}{-1}{2}$.
Setting the numerical precision threshold $\epsilon$ to $\epsilon=10^{-8}$, the  dichotomic search performs $28$ split iterations, and approximate $\alphastar$ by
$$
\alphastar\approx 0.5825489424169064.
$$
The  Chernoff information $D_C[p_{0,I},p_{\mu_2,\Sigma_2}]$ is approximated by $0.8827640697808525$.
\end{Example}

The $m$-interpolation of multivariate Gaussian distributions $p_{\mu_1,\Sigma_1}$ and $p_{\mu_2,\Sigma_2}$ with respect to the mixture connection $\nabla^m$ is given by
$$
\gamma^m_{p_{\mu_1,\Sigma_1},p_{\mu_2,\Sigma_2}}(\alpha)=p_{\mu_\alpha^m,\Sigma_\alpha^m},
$$ 
where
\begin{eqnarray*}
\mu_\alpha^m&=&(1-\alpha)\mu_1+\alpha\mu_2=:\bar\mu_\alpha,\\
\Sigma_\alpha^m &=& (1-\alpha)\Sigma_1+\alpha\Sigma_2+(1-\alpha)\mu_1\mu_1^\top +\alpha\mu_2\mu_2^\top-\bar\mu_\alpha\bar\mu_\alpha^\top.
\end{eqnarray*}

To $e$-interpolation of multivariate Gaussian distributions $p_{\mu_1,\Sigma_1}$ and $p_{\mu_2,\Sigma_2}$ with respect to the exponential connection $\nabla^e$ is given by
$$
\gamma^e_{p_{\mu_1,\Sigma_1},p_{\mu_2,\Sigma_2}}(\alpha)=p_{\mu_\alpha^e,\Sigma_\alpha^e},
$$
where
\begin{eqnarray*}
\mu_\alpha^e&=& \Sigma_\alpha^e \left((1-\alpha)\Sigma_1^{-1}\mu_1+\alpha\Sigma_2^{-1}\mu_2\right),\\
\Sigma_\alpha^e &=&  \left((1-\alpha) \Sigma_1^{-1}+\alpha\Sigma_2^{-1}\right)^{-1}.
\end{eqnarray*}
In information geometry, both these $e$- and $m$-connections defined with respect to an exponential family are shown to be flat.
These geodesics correspond to linear interpolations in the $\nabla^e$-affine coordinate system $\theta$ and in the dual $\nabla^m$ coordinate system $\eta$, respectively.

Figure~\ref{fig:MVNinterpolation} displays these two $e$-geodesic and $m$-geodesic between two multivariate normal distributions.
Notice that the Riemannian geodesic with the Levi-Civita metric connection $\frac{\nabla^e+\nabla^m}{2}$ is not known in closed form for boundary value conditions. The expression of the Riemannian geodesic is known only for initial value conditions~\cite{calvo1991explicit} (i.e., starting point with a given vector direction).

\begin{figure}
\centering
\includegraphics[width=0.75\textwidth]{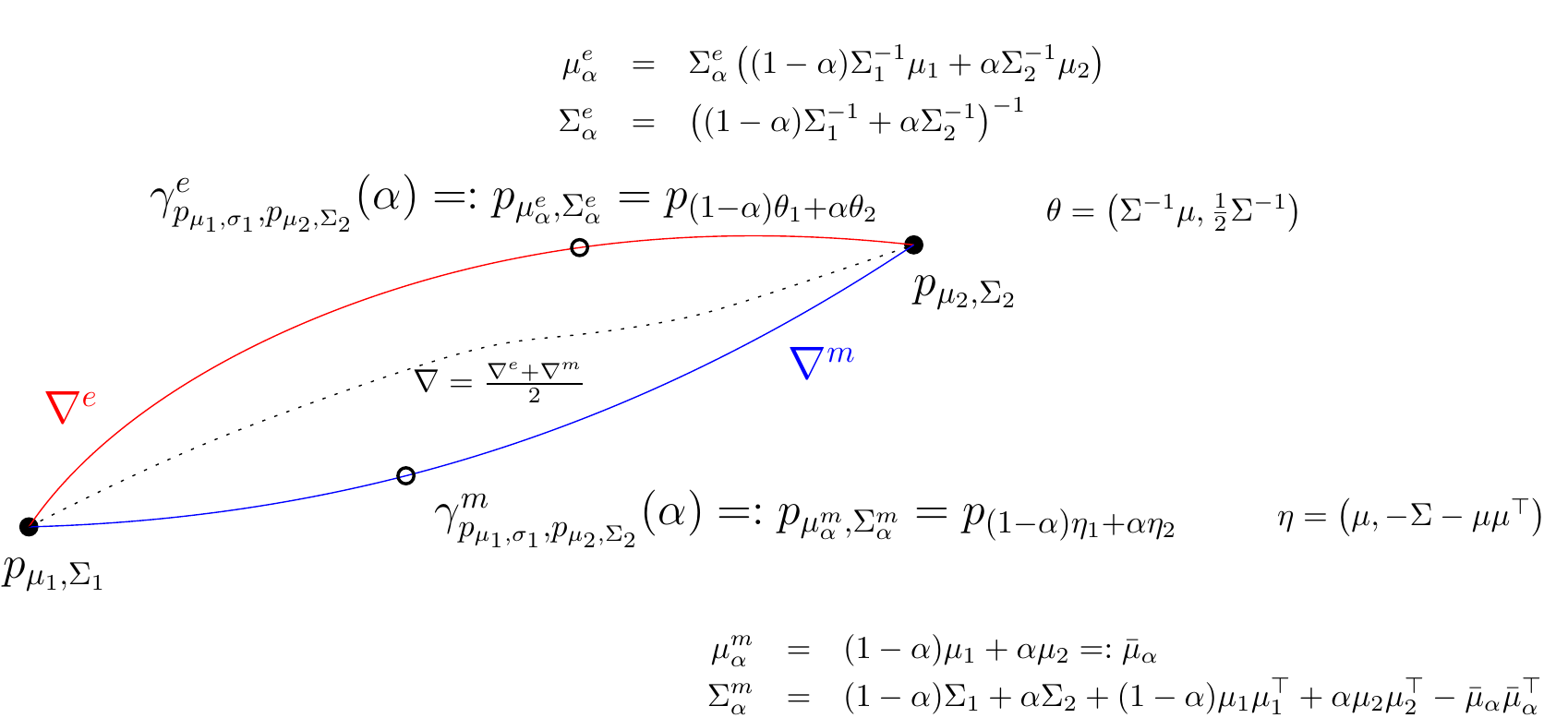}
\caption{Interpolation along the $e$-geodesic and the $m$-geodesic passing through two given multivariate normal distributions.
No closed-form is known for Riemannian geodesic with respect to the metric Levi-Civita connection (shown in dashed style).}
\label{fig:MVNinterpolation}
\end{figure}

\subsection{Chernoff information between centered multivariate normal distributions}\label{sec:zeroMVN}

The set 
$$
\mathcal{N}_0=\left\{p_{\Sigma}(x)=\frac{1}{\sqrt{\Mdet{2\pi\Sigma}}} \exp\left(-\frac{1}{2}x^\top\Sigma^{-1}x\right) \st \Sigma\succ 0\right\}
$$ 
of centered multivariate normal distributions is a regular exponential family with natural parameter $\theta=\Sigma^{-1}$,
sufficient statistic $t(x)=-\frac{1}{2}xx^\top$, log-normalizer $F(\theta)=-\frac{1}{2}\log \Mdet{\theta}$ and auxiliary carrier term $k(x)=-\frac{d}{2}\log(2\pi)$. Family $\mathcal{N}_0$ is also a multivariate scale family with scale matrices $\Sigma^{\frac{1}{2}}$ (standard deviation $\sigma$ in 1D).

Let $\inner{A}{B}=\Mtr{A^\top B}$ defines the inner product between two symmetric matrices $A$ and $B$.
Then we can write the centered Gaussian distribution $p_{\Sigma}(x)$ in the canonical form of exponential families:
$$
p_{\theta}(x)=\exp\left(\inner{\theta}{t(x)}-F(\theta)+k(x)\right).
$$
The function log det of a positive-definite matrix is strictly concave~\cite{boyd2004convex}, and hence we check that $F(\theta)$ is strictly convex. 
Furthermore, we have $\nabla_X \log\Mdet{X}=X^{-\top}$ so that 
$\nabla_\theta F(\theta)=-\frac{1}{2}\theta^{-\top}$.

The optimality condition equation of Chernoff best skewing parameter $\alphastar$ becomes:

\begin{eqnarray}
\inner{\theta_2-\theta_1}{\nabla F(\theta_1+\alphastar(\theta_2-\theta_1))} &=& F(\theta_2)-F(\theta_1),\\
-\frac{1}{2}\Mtr{(\theta_2-\theta_1)^\top (\theta_1+\alphastar(\theta_2-\theta_1))^{-1}}&=&-\frac{1}{2}\log\frac{\Mdet{\theta_2}}{\Mdet{\theta_1}},\\
\Mtr{(\theta_2-\theta_1)^\top (\theta_1+\alphastar(\theta_2-\theta_1))^{-1}}&=&\log\frac{\Mdet{\theta_2}}{\Mdet{\theta_1}},\\
\Mtr{(\Sigma_2^{-1}-\Sigma_1^{-1})\ (\Sigma_1^{-1}+\alphastar(\Sigma_2^{-1}-\Sigma_1^{-1}))^{-1}}
&=&\log\frac{\Mdet{\Sigma_1}}{\Mdet{\Sigma_2}}=\log\Mdet{\Sigma_1\Sigma_2^{-1}}.\label{eq:cmvncond}
\end{eqnarray}

When $\Sigma_2=s\Sigma_1$ (and $\Sigma_2^{-1}=\frac{1}{s}\Sigma_1^{-1}$) for $s>0$ and $s\not=1$, we get a closed-form for $\alphastar$ 
using the fact that $\Mdet{\frac{I}{s}}=\frac{1}{s^d}$ and $\Mtr{I}=d$ for $d$-dimensional identity matrix $I$.
Solving Eq.~\ref{eq:cmvncond} yields
\begin{equation}
\alphastar(s)=\frac{s-1-\log s}{(s-1)\log s}\in(0,1).
\end{equation}
Therefore  the Chernoff information between two scaled centered Gaussian distributions $p_{\mu,\Sigma}$ and $p_{\mu,s\Sigma}$ is available in closed form.

\begin{Proposition}\label{prop:CIscaleCov}
The Chernoff information between two scaled $d$-dimensional centered Gaussian distributions $p_{\mu,\Sigma}$ and $p_{\mu,s\Sigma}$ of $\mathcal{N}_\mu$ (for $s>0$) is available in closed form: 
\begin{equation}\label{eq:CIscale}
D_C[p_{\mu,\Sigma},p_{\mu,s\Sigma}]= 
D_{B,\alphastar}[p_{\mu,\Sigma},p_{\mu,s\Sigma}]=
d\frac{(s-1)\log\left(\frac{s}{s-1}\log s\right)-s\log s+s-1}{2(1-s)},
\end{equation}
where $\alphastar=\frac{s-1-\log s}{(s-1)\log s}\in(0,1)$.
\end{Proposition}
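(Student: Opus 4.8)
The plan is to reduce the scaled-covariance problem to a one-parameter maximization that can be solved in closed form, exploiting that the scaling $\Sigma_2=s\Sigma_1$ makes every matrix appearing in the optimality condition proportional to $\Sigma^{-1}$. First I would remove the mean: since $p_{\mu,\Sigma}$ and $p_{\mu,s\Sigma}$ share the common mean $\mu$, the affine invariance of Corollary~\ref{prop:invCher} (translating by $-\mu$) gives $D_C[p_{\mu,\Sigma},p_{\mu,s\Sigma}]=D_C[p_{0,\Sigma},p_{0,s\Sigma}]$, so it suffices to treat centered Gaussians inside the regular exponential family $\mathcal{N}_0$, with natural parameter $\theta=\Sigma^{-1}$ and log-normalizer $F(\theta)=-\frac{1}{2}\log\Mdet{\theta}$.

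Next I would write the skewed Bhattacharyya distance along the LREF exponential arc $(pq)^G_\alpha\propto p^\alpha q^{1-\alpha}$ with $p=p_{0,\Sigma}$ and $q=p_{0,s\Sigma}$. Specializing the multivariate Gaussian Bhattacharyya formula recorded earlier to $\mu_1=\mu_2=0$, $\Sigma_1=\Sigma$, $\Sigma_2=s\Sigma$, the covariance interpolant becomes $\Sigma_\alpha=\Sigma\,\frac{s}{1+\alpha(s-1)}$, and after cancelling the $\Mdet{\Sigma}$ factors one obtains the scalar expression
$$
D_{B,\alpha}[p_{0,\Sigma},p_{0,s\Sigma}]=\frac{d}{2}\bigl(\log(1+\alpha(s-1))-\alpha\log s\bigr).
$$
By Proposition~\ref{prop:cief} this function is strictly concave in $\alpha$ with a unique maximizer $\alphastar\in(0,1)$, and the stationarity condition gives
$$
\frac{s-1}{1+\alphastar(s-1)}=\log s .
$$
This is linear in $\alphastar$ and solves immediately to the stated $\alphastar=\frac{s-1-\log s}{(s-1)\log s}$; equivalently, up to the parameterization convention $\alpha\leftrightarrow 1-\alpha$, this is what the matrix optimality condition of Eq.~\ref{eq:cmvncond} collapses to once $\Sigma_2^{-1}-\Sigma_1^{-1}=\frac{1-s}{s}\Sigma^{-1}$ is factored out of the trace using $\Mtr{I}=d$ and $\log\Mdet{\frac{1}{s}I}=-d\log s$.

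Finally I would evaluate $D_C=D_{B,\alphastar}$. Substituting the stationarity relation $1+\alphastar(s-1)=\frac{s-1}{\log s}$ (equivalently $\alphastar\log s=\frac{s-1-\log s}{s-1}$) into the displayed Bhattacharyya expression gives $D_C=\frac{d}{2}\bigl(\log\frac{s-1}{\log s}-\frac{s-1-\log s}{s-1}\bigr)$, and a routine logarithm rearrangement (writing $\log\frac{s-1}{\log s}=\log s-\log\!\bigl(\frac{s}{s-1}\log s\bigr)$, using $\log s+\frac{\log s}{s-1}=\frac{s\log s}{s-1}$, and clearing the common denominator $s-1$) recasts this as the form in Eq.~\ref{eq:CIscale}.

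The computation presents no real conceptual obstacle; the only genuine content is the observation that the scaling hypothesis $\Sigma_2=s\Sigma_1$ is precisely what renders the otherwise transcendental matrix trace equation proportional to the identity, hence scalar and linear in $\alpha$. The two points demanding care are the bookkeeping of the dimension factor $d$ through the determinant exponents, and fixing the $\alpha$ convention consistently with $(pq)^G_\alpha\propto p^\alpha q^{1-\alpha}$ (the opposite ordering would report $1-\alphastar$, though by the symmetry $D_C[p,q]=D_C[q,p]$ the Chernoff value itself is convention-free).
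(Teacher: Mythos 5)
Your proof is correct, and it rests on the same underlying computation as the paper's: both find the unique zero of the derivative of the strictly concave map $\alpha\mapsto D_{B,\alpha}$, which the scaling hypothesis $\Sigma_2=s\Sigma_1$ collapses to a linear equation in $\alpha$. The organization differs in a way worth recording. The paper works inside the exponential family $\mathcal{N}_0$ (natural parameter $\theta=\Sigma^{-1}$, log-normalizer $F(\theta)=-\frac{1}{2}\log\Mdet{\theta}$), first derives the matrix optimality condition as the trace/log-determinant equation of Eq.~\ref{eq:cmvncond}, and only then specializes to $\Sigma_2=s\Sigma_1$ using $\Mtr{I}=d$ and $\Mdet{I/s}=s^{-d}$; you specialize first, writing $D_{B,\alpha}[p_{0,\Sigma},p_{0,s\Sigma}]=\frac{d}{2}\bigl(\log(1+\alpha(s-1))-\alpha\log s\bigr)$ explicitly and maximizing a scalar function by elementary calculus, which makes the argument self-contained (no matrix trace equation needed) and makes the linearity in $\alpha$ visible at a glance. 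Your caveat about the $\alpha\leftrightarrow 1-\alpha$ convention is not pedantry but resolves a genuine wrinkle in the source: Eq.~\ref{eq:cmvncond} interpolates as $\theta_1+\alphastar(\theta_2-\theta_1)$, and solving it literally under $\Sigma_2=s\Sigma_1$ yields $\frac{s\log s-s+1}{(s-1)\log s}=1-\alphastar$, the conjugate exponent; your convention $(pq)^G_\alpha\propto p^\alpha q^{1-\alpha}$ with $p=p_{0,\Sigma}$ is the one that reproduces the stated $\alphastar=\frac{s-1-\log s}{(s-1)\log s}$, while the Chernoff value itself is convention-free, as you say. Your explicit use of Corollary~\ref{prop:invCher} to strip the common mean also fills in a step the paper leaves implicit (the proposition is stated for $\mathcal{N}_\mu$ but derived for $\mu=0$). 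One tiny attribution slip: strict concavity of $\alpha\mapsto D_{B,\alpha}$ comes from the LREF log-normalizer argument (Proposition~\ref{prop:CIalphaunique}) rather than Proposition~\ref{prop:cief}, though it is in any case immediate from your explicit formula, whose second derivative in $\alpha$ is $-\frac{d}{2}\frac{(s-1)^2}{(1+\alpha(s-1))^2}<0$ for $s\neq 1$.
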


Notice that $\alphastar(p_{\mu,\Sigma}:p_{\mu,s\Sigma})=\alphastar(p_{\mu,\Sigma},p_{\mu,\frac{1}{s}\Sigma})$ and
$D_C[p_{\mu,\Sigma},p_{\mu,s\Sigma}]=D_C[p_{\mu,\Sigma},p_{\mu,\frac{1}{s}\Sigma}]$.

\begin{Example}
Consider $\mu_1=\mu_2=0$ and $\Sigma_1=I$, $\Sigma_2=\frac{1}{2}I$.
We find that $\alphastar=\frac{2\log 2-1}{\log 2}$, which is independent of the dimension of the matrices.
The Chernoff information depends on the dimension:
$$
D_C[p_{0,I},p_{0,\frac{1}{2}I}]=d\frac{\log 2-\log\log 2-1}{2}.
$$
\end{Example}

Notice that when $d=1$, we have $s=\frac{\sigma_2^2}{\sigma_1^2}$, and we recover a special case of the closed-form formula for the Chernoff information between univariate Gaussians.

In~\cite{MVNChernoff-2018}, the following equation is reported for finding $\alphastar$ based on Eq.~\ref{eq:cmvncond}:
\begin{equation}\label{eq:CIzeroMVN}
\OC_{\mathrm{Centered Gaussians}}:\quad \sum_{i=1}^d \frac{1-\lambda_i}{\alphastar+(1-\alphastar)\lambda_i}+\log\lambda_i=0
\end{equation}
where the $\lambda_i$'s are generalized eigenvalues of $\Sigma_1\Sigma_2^{-1}$ (this excludes the case of all $\lambda_i$'s equal to one). 
The value of $\alphastar$ satisfying Eq.~\ref{eq:CIzeroMVN} is unique.
Let us notice that the product of two symmetric positive-definite matrices is not necessarily symmetric anymore.
We can derive Eq.~\ref{eq:CIzeroMVN} by expressing Eq.~\ref{eq:cmvncond} using the identity matrix $I$ and matrix 
$\Sigma_2^{-\frac{1}{2}}\Sigma_1\Sigma_2^{-\frac{1}{2}}$.

\begin{Remark}
We can get closed-form solutions for $\alphastar$ and the corresponding Chernoff information in some particular cases.
For example, when the dimension $d=2$, we need to solve a quadratic equation to get $\alphastar$. 
Thus for $d\leq 4$, we get a closed-form solution for $\alphastar$ by solving a polynomial equation characterizing the optimal condition, and obtain the Chernoff information in closed-form as a byproduct.
\end{Remark}

\begin{Example}
Consider the Chernoff information between $p_{0,I}$ and $p_{0,\Lambda}$ with $\Lambda=\diag(1,2,3,4)$.
We get the exact Chernoff exponent value $\alphastar$ by taking the root of a quartic polynomial equation falling in $(0,1)$.
By evaluating numerically this root, we find that $\alphastar\simeq 0.59694$ and that the Chernoff information is
$D_C[p_{0,I},p_{0,\Lambda}]\simeq 0.22076$. See Appendix for some symbolic computation code.
\end{Example}

\section{Chernoff information between densities of different exponential families}\label{sec:diffEF}

Let 
$$\calE_1=\{p_\theta=\exp(\inner{\theta}{t_1(x)}-F_1(\theta)) \st \theta\in\Theta\},
$$ and
$$
\calE_2=\{q_{\theta'}=\exp(\inner{\theta'}{t_2(x)}-F_2(\theta')\st \theta'\in\Theta'\},
$$ 
be two distinct exponential families, and consider the Chernoff information between the densities $p_{\theta_1}$ and $q_{\theta_2'}$.
The exponential arc induced by $p_{\theta_1}$ and $q_{\theta_2'}$ is
$$
\{ (p_{\theta_1}q_{\theta_2'})_\alpha^G \propto p_{\theta_1}^\alpha  q_{\theta_2'}^{1-\alpha} \st \alpha\in (0,1)\}.
$$
Let $\calE_{12}$ denote the exponential family with sufficient statistics $(t_1(x),t_2(x))$, log-normalizer $F_{12}(\theta,\theta')$, and denote by $\Theta_{12}$ its natural parameter space. Family $\calE_{12}$ can be interpreted as a product exponential family which yields an exponential family.
We have
$$
(p_{\theta_1}q_{\theta_2'})_\alpha^G=\exp\left(\inner{(t_1(x),t_2(x))}{(\alpha\theta_1,(1-\alpha)\theta_2')}-F_{12}(\alpha\theta_1,(1-\alpha)\theta_2')\right).
$$
Thus the induced LREF $\calE_{p_{\theta_1}q_{\theta_2'}}$ with natural parameter space $\Theta_{p_{\theta_1}q_{\theta_2'}}$ can be interpreted as a {\em 1D curved  exponential family of the product exponential family} $\calE_{12}$.

The optimal skewing parameter $\alphastar$ is found by setting the derivative of $F_{12}(\alpha\theta_1,(1-\alpha)\theta_2')$ with respect $\alpha$ to zero:
$$
\frac{d}{\mathrm{d}\alpha}F_{12}(\alpha\theta_1,(1-\alpha)\theta_2')=0.
$$

\begin{Example}
Let $\calE_1$ can be chosen as the exponential family of exponential distributions
$$
\calE_1=\left\{e_\lambda(x)=\lambda\exp(-\lambda x), \lambda\in(0,+\infty)\right\}
$$ 
defined on the support $\calX_1=(0,\infty)$ and
$\calE_2$ can be chosen as the exponential family of half-normal distributions
$$
\calE_2=\left\{h_\sigma(x)=\sqrt{\frac{2}{\pi\sigma^2}}\exp(-\frac{x^2}{2\sigma^2}) \st \sigma^2>0\right\}
$$ 
with support $\calX_2=(0,\infty)$.

The product exponential family corresponds to the singly truncated normal family~\cite{del1994singly} which is a non-regular (i.e., parameter space is not topologically an open set):
$$
\Theta_{12}=(\bbR\times\bbR_{++}) \cup\Theta_0,
$$
with $\Theta_0=\{(\theta,0) \st \theta<0\}$ (the part corresponding to the exponential family of exponential distributions).
This exponential family $\calE_{12}=\{p_{\theta_1,\theta_2}\}$ of singly truncated normal distributions  is also non-steep~\cite{del1994singly}.
The log-normalizer is
$$
F_{12}(\theta_1,\theta_2)=\frac{1}{2}\log\frac{\pi}{\theta_2}+\log \Phi\left(\frac{\theta_1}{\sqrt{2\theta_2}}\right)+\frac{\theta_1^2}{4\theta_2},
$$
where $\theta_1=\frac{\mu}{\sigma^2}$ and $\theta_2=\frac{1}{2\sigma^2}$, and $\Phi$ denotes the cumulative distribution function of the standard normal.
Function $F_{12}$ is proven of class $C^1$ on $\Theta_{12}$ (see Proposition 3.1 of~\cite{del1994singly}) with $F_{12}(\theta,0)=-\log(-\theta)$ for $\theta<0$.

Notice that the KLD between an exponential distribution and a half-normal distribution is $+\infty$ since the definite integral diverges (hence $D_\KL[e_\lambda:h_\sigma]$ is not equivalent to a Bregman divergence, and $\Theta_{e_{\theta_1}h_{\theta_2'}}$ is not open at $1$) but the reverse KLD between a  half-normal distribution and an exponential distribution is available in closed-form (using symbolic computing):
$$
D_\KL[h_\sigma:e_\lambda]=\sqrt{\sqrt{8}\sigma\lambda-\sqrt{\pi}(1+\log\frac{\pi\lambda^2\sigma^2}{2})}{2\sqrt{\pi}}.
$$

Figure~\ref{fig:halfdomain} illustrate the domain of the singly truncated normal distributions and displays an exponential arc between an exponential distribution and a half-normal distribution.
Notice that we could have also considered a similar but different example by taking the exponential family of Rayleigh distributions which exhibit an additional extra carrier term $k(x)$.

The Bhattacharyya $\alpha$-skewed coefficient calculated using symbolic computing (see Appendix) is

{\footnotesize
$$
\rho_\alpha[h_\sigma:e_\lambda]= \rho_{1-\alpha}[e_\lambda:h_\sigma]=
{{\pi^{{{1}\over{2}}-{{\alpha}\over{2}}}\,e^ {- \sigma^2\,\lambda^2
  }\,\left(2^{{{\alpha}\over{2}}+{{1}\over{2}}}\,\sigma\,\lambda\,e^{
 {{\alpha\,\sigma^2\,\lambda^2}\over{2}}+{{\sigma^2\,\lambda^2}\over{
 2\,\alpha}}}\,\mathrm{erf}\left({{\left(\sqrt{2}\,\alpha-\sqrt{2}
 \right)\,\sigma\,\lambda}\over{2\,\sqrt{\alpha}}}\right)+2^{{{\alpha
 }\over{2}}+{{1}\over{2}}}\,\sigma\,\lambda\,e^{{{\alpha\,\sigma^2\,
 \lambda^2}\over{2}}+{{\sigma^2\,\lambda^2}\over{2\,\alpha}}}\right)
 }\over{2\,\sqrt{\alpha}\,\sigma^{\alpha}\,\lambda^{\alpha}}}, 
$$
}
where $\erf$ denotes the error function.
\end{Example}

 \begin{figure}
\centering
\includegraphics[width=0.55\textwidth]{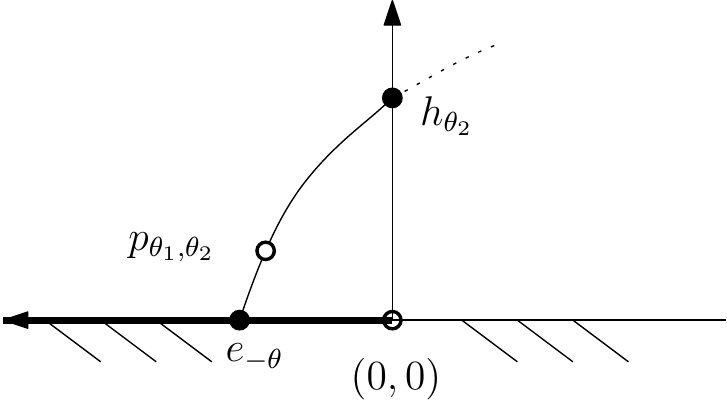}
\caption{The natural parameter space of the non-regular full exponential family of singly truncated normal distributions is not regular (i.e., not open): The negative real axis corresponds to the exponential family of exponential distributions.}
\label{fig:halfdomain}
\end{figure}

\section{Conclusion}\label{sec:concl}
In this work, we revisited the Chernoff information~\cite{chernoff1952measure} (1952) which was originally introduced to upper bound Bayes' error in binary hypothesis testing. A general characterization of Chernoff information between two arbitrary probability measures was given in~\cite{RenyiDiv-2014} (Theorem~32) by considering R\'enyi divergences which can be interpreted as scaled skewed Bhattacharyya divergences. 
Since its inception, the Chernoff information has proven useful as a statistical divergence (Chernoff divergence) in many applications ranging from information fusion to quantum metrology due to its empirical robustness property~\cite{julier2006empirical}.
Informally, we may observe  empirically that in practice the skewed Bhattacharyya divergence is more stable around the Chernoff exponent $\alphastar$ than in other part of the range $(0,1)$.
By considering the maximal extension of the exponential arc joining two densities $p$ and $q$ on a Lebesgue space $L^1(\mu)$, we built full likelihood ratio exponential families~\cite{ITEF-2007} $\calE_{pq}$ (LREFs) in \S\ref{sec:CILREF}. 
When the LREF $\calE_{pq}$ is a regular exponential family (with coinciding support of $p$ and $q$), both the forward and reverse Kullback--Leibler divergence are finite and can be rewritten as finite Bregman divergences induced by the log-normalizer $F_{pq}$ of $\calE_{pq}$ which amounts to minus skewed Bhattacharyya divergences.
Since  log-normalizers of exponential families are strictly convex, we deduced that the skewed Bhattacharyya divergences are strictly concave and their maximization yielding the Chernoff information is hence proven unique. 
As a byproduct, this geometric characterization in $L^1(\mu)$ allowed us to prove that the intersection of a $e$-geodesic with a $m$-bisector is unique in  dually flat subspaces of $L^1(\mu)$, 
and similarly that the intersection of a $m$-geodesic with a $e$-bisector is unique (Proposition~\ref{prop:iguniqueinter}).
We then considered the exponential families of univariate and multivariate normal distributions: 
We reported closed-form solutions for the Chernoff information of univariate normal distribution and centered normal distributions with scaled covariance matrices, and show how to implement efficiently a dichotomic search for approximating the Chernoff information between two multivariate normal distributions (Algorithm~2).
Table~\ref{tab:oc} summarizes the various optimal condition studied characterizing the Chernoff exponent.
Finally, inspired by the Chernoff information study, we defined in \S\ref{sec:CBD}, the forward and 
reverse Bregman--Chernoff divergences~\cite{chen2008metrics}, and show how these divergences are related to the capacity of a discrete memoryless channel
and the minimax redundancy of universal coding in information theory~\cite{cover1999elements}.

\begin{table}
\begin{tabular}{|ll|}
 \hline
\multicolumn{2}{|c|}{Generic case}\\ \hline
Primal LREF & $\OC_\alpha: D_\KL[(pq)^G_\alphastar:p]=D_\KL[(pq)^G_\alphastar:q]$ \\
Dual LREF & $\OC_\beta: \beta(\alphastar)=E_{(pq)_\alphastar^G}\left[\log\frac{p(x)}{q(x)}\right]=0$\\
Geometric OC & $(pq)^G_\alphastar=\gamma^G(p,q) \cap\Bi_\KL^\mathrm{left}(p,q]$  \\ \hline
\multicolumn{2}{|c|}{Case of exponential families}\\ \hline
Bregman  & $\OC_\EF:\quad B_F(\theta_1:\theta_\alphastar)=B_F(\theta_2:\theta_\alphastar)$\\
Fenchel-Young & $\OC_\YF:\quad Y_{F,F^*}(\theta_1:\eta_\alphastar)=Y_{F,F^*}(\theta_2:\eta_\alphastar)$\\
Simplified  & $\OC_{\SEF'} :\quad F_{\theta_1,\theta_2}'(\alpha)=0$ \\
  & $\OC_\SEF:\quad (\theta_2-\theta_1)^\top \nabla F(\theta_1+\alphastar(\theta_2-\theta_1))=F(\theta_2)-F(\theta_1)$\\ 
	Geometric OC &  $\gamma^e_{pq}(\alpha)\cap \Bi^m(p,q)$\\ \hline
\multicolumn{2}{|c|}{Gaussian case}\\ \hline
Univariate Gaussians & $\OC_{\mathrm{Gaussian}}:\quad 
\left(\frac{\mu_2}{\sigma_2^2}-\frac{\mu_1}{\sigma_1^2}\right) m_\alpha
-\left(\frac{1}{2\sigma_2^2}-\frac{1}{2\sigma_1^2}\right) v_\alpha^2 =\frac{1}{2}\log\frac{\sigma_2^2}{\sigma_1^2}+\frac{\mu_2^2}{2\sigma_2^2}-\frac{\mu_1^2}{2\sigma_1^2}$\\
Centered Gaussians  & 
 $\OC_{\mathrm{Centered Gaussians}}:\sum_{i=1}^d \frac{1-\lambda_i}{\alphastar+(1-\alphastar)\lambda_i}+\log\lambda_i=0$ \\
\hline
\end{tabular}

\caption{Summary of the optimal conditions characterizing the Chernoff exponent.}\label{tab:oc}
\end{table}

\vskip 0.33cm
\noindent Additional material including {\sc Maxima} and {\sc Java}\textregistered{} snippet codes is available online at\\
 \url{https://franknielsen.github.io/ChernoffInformation/index.html} 
 \vskip 0.33cm
\noindent Acknowledgments: I would like to thank Dr. Rob Brekelmans for many fruitful discussions on likelihood ratio exponential families and related topics.

\bibliographystyle{plain}
\bibliography{CILREFBib}

\appendix

\section{Exponential family of univariate Gaussian distributions}\label{sec:unigaussian}
Consider the family of univariate normal distributions:
$$
\calN=\left\{p_{\mu,\sigma^2}(x)=\frac{1}{\sqrt{2\pi\sigma^2}} \exp\left(-\frac{1}{2}\frac{(x-\mu)^2}{\sigma^2}\right), \mu\in\bbR,\sigma^2>0\right\}.
$$
Let $\lambda=(\lambda_1=\mu,\lambda_2=\sigma^2)$ denote the mean-variance parameterization, and consider the sufficient statistic vector $t(x)=(x,x^2)$. Then the densities of $\calN$ can be written in the canonical form of exponential families:
$$
p_{\lambda}(x)=\exp\left(\inner{\theta(\lambda)}{t(x)}-F(\theta)\right),
$$
where $\theta(\lambda)=\left(\frac{\lambda_1}{\lambda_2},-\frac{1}{2\lambda_2}\right)$ and the log-normalizer 
is
$$
F(\theta)=-\frac{\theta_1^2}{4\theta_2}+\frac{1}{2}\log \frac{\pi}{-\theta_2}.
$$
The dual moment parameterization is $\eta(\lambda)=E_{p_\lambda}[t(x)]=\left(\lambda_1,\lambda_1^2+\lambda_2\right)$, and the convex conjugate is:

$$
F^*(\eta)=\sup_{\theta\in\Theta}\{\inner{\theta}{\eta}-F(\eta)\}
=-\frac{1}{2}(\log(2\pi e (\eta_2-\eta_1^2)).
$$
We  check that the convex conjugate coincides with the negentropy~\cite{nielsen2010entropies}:
$$
h[p_\lambda]=-F^*(\eta(\lambda)).
$$

The conversion formul\ae{} between the dual natural/moment parameters and the ordinary parameters are given by:
\begin{eqnarray}
\theta(\lambda) &=& \left(\frac{\lambda_1}{\lambda_2},-\frac{1}{2\lambda_2}\right),\\
\lambda(\theta) &=& \left(-\frac{\theta_1}{2\theta_2},-\frac{1}{2\theta_2}\right),\\
\eta(\lambda) &=& \left(\lambda_1,\lambda_1^2+\lambda_2\right),\\
\lambda(\eta) &=& \left(\eta_1,\eta_2-\eta_1^2\right),\\
\eta(\theta) &=& (E[x],E[x^2]) = \nabla F(\theta) = \left(-\frac{\theta_1}{2\theta_2},-\frac{1}{2\theta_2}+\frac{\theta_1^2}{4\theta_2^2}\right) ,\\
\theta(\eta) &=&  \nabla F^*(\eta) =\left(-\frac{\eta_1}{\eta_1^2-\eta_2},\frac{1}{2(\eta_1^2-\eta_2)}\right) 
\end{eqnarray}

We check that
\begin{eqnarray*}
D_\KL[p_{\lambda}:p_{\lambda'}] &=&  \frac{1}{2}\left(\frac{(\lambda_1-\lambda_1')^2}{{\lambda'_2}^2}
+\frac{\lambda_2^2}{{\lambda_2'}^2}-\log \frac{\lambda_2^2}{{\lambda_2'}^2}-1 \right),\\
&=&B_F(\theta(\lambda'):\theta(\lambda))=B_{F^*}(\eta(\lambda):\eta(\lambda')),\\
&=& Y_{F,F^*}(\theta(\lambda'):\eta(\lambda))=Y_{F^*,F}(\eta(\lambda):\theta(\lambda')),
\end{eqnarray*}
where $B_F$ and $B_{F^*}$ are the dual Bregman divergences and $Y_{F,F^*}$ and $Y_{F^*,F}$ are the dual Fenchel-Young divergences.

\section{Code snippets in {\sc Maxima}}\label{sec:maxima}
Code for plotting Figure~\ref{fig:FBhatEx1}.
{\small
\begin{lstlisting}[language=Maxima,breaklines=true]
varalpha(v1,v2,alpha):=(v1*v2)/((1-alpha)*v1+alpha*v2)$
mualpha(mu1,v1,mu2,v2,alpha):=(alpha*mu1*v2+(1-alpha)*mu2*v1)/((1-alpha)*v1+alpha*v2)$
assume(v1>0)$assume(v2>0)$
theta1(mu,v):=mu/v$
theta2(mu,v):=-1/(2*v)$;
F(theta1,theta2):=((-theta1**2)/(4*theta2))+(1/2)*log(-%pi/theta2)$
JF(alpha,theta1,theta2,theta1p,theta2p):=alpha*F(theta1,theta2)+(1-alpha)*F(theta1p,theta2p)-F(alpha*theta1+(1-alpha)*theta1p,alpha*theta2+(1-alpha)*theta2p);

m1:0;v1:1;m2:1;v2:2;

plot2d([JF(alpha,theta1(m1,v1),theta2(m1,v1),theta1(m2,v2),theta2(m2,v2)),
-JF(alpha,theta1(m1,v1),theta2(m1,v1),theta1(m2,v2),theta2(m2,v2)),
 [discrete,[[0.4215580558605244,-0.15],[0.4215580558605244,0.15]]],
[discrete, [0.4215580558605244], [0.1155433222682347]],
[discrete, [0.4215580558605244], [-0.1155433222682347]]
],
[alpha,0,1], [xlabel,"alpha"], [ylabel,"F_{pq}(alpha)=-D_{B,alpha}[p:q]"],
[style,  [lines,1,1],[lines,1,2],
 [lines,2,0], [points, 3,3],[points, 3,3]],[legend, "skew Bhattacharyya D_{B,alpha}[p:q]","LREF log-normalizer F_{pq}(alpha)","","","" ],
[color, blue, red,   black, black,black],[point_type,asterisk]);
\end{lstlisting}
}

Code for calculating the Chernoff information between two univariate Gaussian distributions (Proposition~\ref{prop:unigauss}):

{\small
\begin{lstlisting}[language=Maxima,breaklines=true] 
varalpha(v1,v2,alpha):=(v1*v2)/((1-alpha)*v1+alpha*v2)$
mualpha(mu1,v1,mu2,v2,alpha):=(alpha*mu1*v2+(1-alpha)*mu2*v1)/((1-alpha)*v1+alpha*v2)$

/* Kullback-Leibler divergence */
KLD(mu1,v1,mu2,v2):=(1/2)*((((mu2-mu1)**2)/v2)+(v1/v2)-log(v1/v2)-1)$

assume(alpha>0)$assume(alpha<1)$
assume(v1>0)$assume(v2>0)$
theta1(mu,v):=mu/v$
theta2(mu,v):=-1/(2*v)$;
F(theta1,theta2):=-theta1**2/(4*theta2)+0.5*log(-1/theta2)$

eq: (theta1(mu1,v1)-theta1(mu2,v2))*mualpha(mu1,v1,mu2,v2,alpha)+(theta2(mu1,v1)-theta2(mu2,v2))*(mualpha(mu1,v1,mu2,v2,alpha)**2+varalpha(v1,v2,alpha))-F(theta1(mu1,v1),theta2(mu1,v1))+F(theta1(mu2,v2),theta2(mu2,v2));
solalpha: solve(eq,alpha)$
alphastar:rhs(solalpha[1]);

ChernoffInformation: KLD(mualpha(mu1,v1,mu2,v2,alphastar),varalpha(v1,v2,alphastar),mu1,v1)$
print("Chernoff information=")$ratsimp(ChernoffInformation);
\end{lstlisting}
}

Example of a plot of the $\alpha$-Bhattacharryya distance for $\alpha\in [0,1]$ when $p$ and $q$ are two normal distributions.
{\small
\begin{lstlisting}[language=Maxima,breaklines=true] 
varalpha(v1,v2,alpha):=(v1*v2)/((1-alpha)*v1+alpha*v2)$
mualpha(mu1,v1,mu2,v2,alpha):=(alpha*mu1*v2+(1-alpha)*mu2*v1)/((1-alpha)*v1+alpha*v2)$
assume(v1>0)$assume(v2>0)$
theta1(mu,v):=mu/v$
theta2(mu,v):=-1/(2*v)$;
F(theta1,theta2):=((-theta1**2)/(4*theta2))+(1/2)*log(-%pi/theta2)$
JF(alpha,theta1,theta2,theta1p,theta2p):=alpha*F(theta1,theta2)+(1-alpha)*F(theta1p,theta2p)-F(alpha*theta1+(1-alpha)*theta1p,alpha*theta2+(1-alpha)*theta2p);
m1:0;v1:1;m2:1;v2:2;
plot2d(JF(alpha,theta1(m1,v1),theta2(m1,v1),theta1(m2,v2),theta2(m2,v2)),[alpha,0,1]);
\end{lstlisting}
}

Example which calculates exactly the Chernoff exponent between two centered 4D Gaussians by solving the polynomial roots of the Chernoff optimal condition:
{\small
\begin{lstlisting}[language=Maxima,breaklines=true] 
assume(l1>0);assume(l2>0);assume(l3>0);assume(l4>0);
assume(alpha>0);assume(alpha<1);
l1:1;l2:2;l3:3;l4:4;
eq: (1-l1)/(alpha+(1-alpha)*l1)+ (1-l2)/(alpha+(1-alpha)*l2)+ (1-l3)/(alpha+(1-alpha)*l3)+ (1-l4)/(alpha+(1-alpha)*l4) + log(l1)+log(l2)+log(l3)+log(l4);
solve(eq,alpha);
sol:float(%);
realpart(sol);imagpart(sol);
/* alpha=0.5969427599369763 */
\end{lstlisting}
}

Example of choosing two different exponential families: The half-normal distributions and the exponential distributions:
{\small
\begin{lstlisting}[language=Maxima,breaklines=true] 
assume(sigma>0);
halfnormal(x,sigma):=(sqrt(2)/(sqrt(%pi*sigma**2)))*exp(-x**2/(2*sigma**2));
assume(lambda>0);
exponential(x,lambda):=lambda*exp(-lambda*x);
/* KLD diverges */
integrate(exponential(x,lambda)*log(exponential(x,lambda)/halfnormal(x,sigma)),x,0,inf);
/* KLD converges */
integrate(halfnormal(x,sigma)*log(halfnormal(x,sigma)/exponential(x,lambda)),x,0,inf);
/* Bhattacharyya coefficient */
assume(alpha>0);
assume(alpha<1);
integrate( (halfnormal(x,sigma)**alpha) * (exponential(x,lambda)**(1-alpha)),x,0,inf);  
\end{lstlisting}
}

\end{document}